\newcolumntype{L}[1]{>{\raggedright\let\newline\\\arraybackslash\hspace{0pt}}m{#1}}
\newcolumntype{C}[1]{>{\centering\let\newline\\\arraybackslash\hspace{0pt}}m{#1}}
\newcolumntype{R}[1]{>{\raggedleft\let\newline\\\arraybackslash\hspace{0pt}}m{#1}}
\newtheorem{definition}{Definition}[section]
\newtheorem{problem}{Problem}[section]
  \providecommand\BibTeX{{%
    \normalfont B\kern-0.5em{\scshape i\kern-0.25em b}\kern-0.8em\TeX}}}
\newcommand{\kaixin}{}
\newcommand{\cheng}{}
\newcommand{\yui}{}
\newcommand{\Yui}{\color{black}}
\newcommand{\chengB}{}
\newcommand{\chengC}{\color{black}}
\newcommand{\Revise}{\color{black}}
\begin{document}
\sloppy

\renewcommand\footnotetextcopyrightpermission[1]{} 



\title{Efficient $k$-Clique Listing: An Edge-Oriented Branching Strategy}


\author{Kaixin Wang}
\affiliation{%
  \institution{Nanyang Technological University}
  \country{Singapore}
}
\email{kaixin.wang@ntu.edu.sg}

\author{Kaiqiang Yu}
\authornote{Kaiqiang Yu is the corresponding author.}
\affiliation{%
  \institution{Nanyang Technological University}
  \country{Singapore}
}
\email{kaiqiang002@e.ntu.edu.sg}

\author{Cheng Long}
\affiliation{%
  \institution{Nanyang Technological University}
  \country{Singapore}
}
\email{c.long@ntu.edu.sg}

\begin{abstract} 
    $k$-clique listing is a vital graph mining operator with diverse applications in various networks. The state-of-the-art algorithms all adopt a branch-and-bound (BB) framework with a vertex-oriented branching strategy (called \texttt{VBBkC}), which forms a sub-branch by expanding a partial $k$-clique with a \emph{vertex}. These algorithms have the time complexity of $O(k\cdot m \cdot (\delta/2)^{k-2})$, where $m$ is the number of edges in the graph and $\delta$ is the degeneracy of the graph. In this paper, we propose a BB framework with a new \emph{edge-oriented branching} (called \texttt{EBBkC}), which forms a sub-branch by expanding a partial $k$-clique with two vertices that connect each other (which correspond to an \emph{edge}). We explore various edge orderings for \texttt{EBBkC} such that it achieves a time complexity of $O( m\cdot \delta + k\cdot m \cdot (\tau/2)^{k-2})$, where $\tau$ is an integer related to the maximum truss number of the graph and we have $\tau < \delta$. The time complexity of \texttt{EBBkC} is better than that of \texttt{VBBkC} algorithms for $k>3$ since both $O(m\cdot \delta)$ and $O(k\cdot m \cdot (\tau/2)^{k-2})$ are bounded by $O(k\cdot m \cdot (\delta/2)^{k-2})$. Furthermore, we develop specialized algorithms for sub-branches on dense graphs so that we can early-terminate them and apply the specialized algorithms. We conduct extensive experiments on 19 real graphs, and the results show that our newly developed \texttt{EBBkC} based algorithms with the early termination technique consistently and largely outperform the state-of-the-art (\texttt{VBBkC} based) algorithms. 
\end{abstract}

\begin{CCSXML}
<ccs2012>
   <concept>
       <concept_id>10002950.10003624.10003633.10010917</concept_id>
       <concept_desc>Mathematics of computing~Graph algorithms</concept_desc>
       <concept_significance>500</concept_significance>
       </concept>
 </ccs2012>
\end{CCSXML}

\ccsdesc[500]{Mathematics of computing~Graph algorithms}

\keywords{Graph mining; branch-and-bound; $k$-clique listing}

\maketitle

\section{Introduction}
\label{sec:intro}

%
%
Given a graph $G$, a $k$-clique is subgraph of $G$ with $k$ vertices such that each pair of vertices {\cheng inside} are connected 
\cite{chiba1985arboricity}. 
%
{\cheng 
$k$-clique listing, {\chengB which is to list all $k$-cliques in a graph,} is a fundamental graph mining operator that plays a crucial role in various data mining applications across different networks, including social networks, mobile networks, {\chengB Web} networks, and biological networks. Some significant applications include the detection of overlapping communities in social networks~\cite{palla2015uncovering}, identifying $k$-clique communities in mobile networks~\cite{gregori2012parallel, palla2005uncovering, hui2008human}, detecting link spams in {\chengB Web} networks~\cite{jayanthi2012clique, saito2007large}, and discovering groups of functionally related proteins (known as modules) in gene association networks~\cite{adamcsek2006cfinder}. Moreover, $k$-clique listing serves as a key component for several other tasks, such as finding large near cliques~\cite{tsourakakis2015k}, uncovering the hierarchical structure of dense subgraphs~\cite{sariyuce2015finding}, exploring $k$-clique densest subgraphs~\cite{tsourakakis2013denser,fang2019efficient}, identifying stories in social media~\cite{angel2014dense}, and detecting latent higher-order organization in real-world networks~\cite{benson2016higher}. For more detailed information on how $k$-clique listing is applied in these contexts, please refer to references~\cite{li2020ordering} and~\cite{yuan2022efficient}.}


{\cheng Quite a few algorithms have been proposed 
for listing $k$-cliques~\cite{chiba1985arboricity, danisch2018listing, li2020ordering, yuan2022efficient}. The majority of these approaches adopt a \emph{branch-and-bound} (BB) framework, which involves recursively dividing the problem of listing all $k$-cliques in graph $G$ into smaller sub-problems of listing smaller cliques in $G$ through \emph{branching} operations~\cite{danisch2018listing, li2020ordering, yuan2022efficient}. This process continues until each sub-problem can be trivially solved. The underlying principle behind these methods is the observation that a $k$-clique can be constructed by merging two smaller cliques: a clique $S$ and an $l$-clique, where $|S| + l = k$. A branch $B$ is represented as a triplet $(S, g, l)$, where $S$ denotes a previously found clique with $|S| < k$, $g$ represents a subgraph induced by vertices that connect each vertex in $S$, and $l$ corresponds to $k - |S|$. Essentially, branch $B$ encompasses all $k$-cliques, each comprising of $S$ and an $l$-clique in $g$. To enumerate all $k$-cliques within branch $B$, a set of sub-branches is created through branching operations. Each sub-branch expands the set $S$ by adding one vertex from $g$, updates the graph $g$ accordingly (by removing vertices that {are disconnected from} the newly added vertex from $S$), and decrements $l$ by 1. This recursive process continues until $l$ for a branch reduces to 2, at which point the $l$-cliques (corresponding to edges) can be trivially listed within $g$. The original $k$-clique listing problem on graph $G$ can be solved by initiating the branch $(S, g, l)$ with $S = \emptyset$, $g = G$, and $l = k$. The branching step employed in these existing methods is referred to as \emph{vertex-oriented branching} since each sub-branch is formed by adding a vertex to the set $S$. We term the \underline{v}ertex-oriented branching \underline{BB} framework for \underline{$k$}-\underline{c}lique listing as \texttt{VBBkC}.

Existing research has focused on leveraging vertex information within the graph $g$ to enhance performance by generating sub-branches with smaller graphs and pruning sub-branches more effectively. Specifically, when a new vertex $v_i$ is added, the resulting sub-branch $B_i$ can reduce the graph $g$ to a smaller size by considering only the neighbors of $v_i$ (e.g., removing vertices that are not adjacent to $v_i$). Consequently, branch $B_i$ can be pruned if $v_i$ has fewer than $(l-1)$ neighbors in $g$. Furthermore, by carefully specifying the ordering of vertices in $g$ during the branching process, it becomes possible to generate a group of sub-branches with compact graphs. The state-of-the-art \texttt{VBBkC} algorithms~\cite{li2020ordering, yuan2022efficient} achieve a time complexity of $O(km(\delta/2)^{k-2})$, where $m$ represents the number of edges in the graph, and $\delta$ denotes the degeneracy\footnote{{It is defined to be the maximum value of $k$ such that there exists a non-empty $k$-core in a graph, where a $k$-core is a graph where each vertex has the degree at least $k$.}} of the graph.

In this paper, we propose to construct a branch by simultaneously including {\chengB \emph{two} vertices that connect to each other (i.e., an \emph{edge})} from $g$ into $S$. Note that these two vertices must be connected, as only then they can form a larger partial $k$-clique together with $S$. This strategy allows for the consideration of additional information (i.e., an edge or two vertices instead of a single vertex) to facilitate the formation of sub-branches with smaller graphs and pruning. When a new edge is added, the resulting branch can reduce the graph to one induced by the common neighbors of the two vertices, which is smaller compared to the graph generated by vertex-oriented branching. To achieve this, we explore an edge ordering technique based on truss decomposition \cite{wang2012truss}, which we refer to as the \emph{truss-based edge ordering}. This ordering aids in creating sub-branches with smaller graphs than those by vertex-oriented branching. Additionally, more branches can be pruned based on the information derived from the added edge. For instance, a branch can be pruned if the vertices within the newly added edge have fewer than $(l-2)$ common neighbors in $g$. We term this branching strategy as \emph{edge-oriented branching}. Consequently, the \underline{e}dge-oriented branching-based \underline{BB} framework for \underline{$k$}-\underline{c}lique listing is denoted as \texttt{EBBkC}. Our \texttt{EBBkC} algorithm, combined with the proposed edge ordering, exhibits a time complexity of {\Revise $O(\delta m + km(\tau/2)^{k-2})$}, where $\tau$ represents the truss number\footnote{The maximum truss number of a graph defined in \cite{wang2012truss}, denoted by $k_{\max}$, has the following relationship with $\tau$: $k_{\max} = \tau+2$.} of the graph. We formally prove that $\tau < \delta$, signifying that our \texttt{EBBkC} algorithm possesses a time complexity that is {\Revise better than that of \texttt{VBBkC} algorithms \cite{li2020ordering, yuan2022efficient} since both $O(\delta m)$ and $O(km(\tau/2)^{k-2})$ are bounded by $O(km(\delta/2)^{k-2})$ when $k>3$.}  

It is important to note that although a single branching step in our edge-oriented branching (i.e., including an edge) can be seen as two branching steps in the existing vertex-oriented branching (i.e., including two vertices of an edge via two steps), there exists a significant distinction between our \texttt{EBBkC} and \texttt{VBBkC} frameworks. In \texttt{EBBkC}, we have the flexibility to explore \emph{arbitrary} edge orderings for each branching step, whereas \texttt{VBBkC} is inherently \emph{constrained} by the chosen vertex ordering. For instance, once a vertex ordering is established for vertex-oriented branching, with vertex $v_i$ appearing before $v_j$, the edges incident to $v_i$ would precede those incident to $v_j$ in the corresponding edge-oriented branching. Consequently, the existing \texttt{VBBkC} framework is encompassed by our \texttt{EBBkC} framework. In other words, for any instance of \texttt{VBBkC} with a given vertex ordering, there exists an edge ordering such that the corresponding \texttt{EBBkC} instance is equivalent to the \texttt{VBBkC} instance, \emph{but not vice versa}. This elucidates why \texttt{EBBkC}, when based on certain edge orderings, achieves a superior time complexity compared to \texttt{VBBkC}.

To further enhance the efficiency of BB frameworks, we develop an \emph{early termination} technique, which is based on two key observations. Firstly, if the graph $g$ within a branch $(S, g, l)$ is either a clique or a $2$-plex\footnote{A $t$-plex is a graph where each vertex inside {has at most $t$ non-neighbors including itself.}}, we can efficiently list $l$-cliques in $g$ using a combinatorial approach. For instance, in the case of a clique, we can directly enumerate all possible sets of $l$ vertices in $g$. Secondly, if the graph $g$ is a $t$-plex (with $t\geq 3$), the branching process based on $g$ can be converted to a procedure on its inverse graph, denoted as $g_{inv}$\footnote{{The inverse graph $g_{inv}$ has the same set of vertices as $g$, with an edge between two vertices in $g_{inv}$ if and only if they are {disconnected from each other} in $g$.}}. Since $g$ is dense, $g_{inv}$ would be sparse, and the converted procedure operates more rapidly. Therefore, during the recursive branching process, we can employ early termination at a branch $(S, g, l)$ if $g$ transforms into a $t$-plex, utilizing efficient algorithms to list $l$-cliques within $g$. We note that the early termination technique is applicable to all BB frameworks, including our \texttt{EBBkC} framework, without impacting the worst-case time complexity of the BB frameworks.}

\smallskip\noindent\textbf{Contributions.}
We summarize our contributions as follows.

\begin{itemize}[leftmargin=*]
    \item 
    We propose a new BB framework for $k$-clique listing problem, namely \texttt{EBBkC}, which is based on an edge-oriented branching strategy. {\cheng We further explore different edge orderings for \texttt{EBBkC} such that it achieves a {\Revise better} time complexity than that of the state-of-the-art \texttt{VBBkC} {\Yui for $k>3$}, i.e., the former is {\Revise $O(\delta m + km(\tau/2)^{k-2})$} and the latter is $O(km(\delta/2)^{k-2})$, where $\tau$ is a number related to the maximum truss number of the graph and $\delta$ is the degeneracy of the graph and we have $\tau < \delta$. (Section~\ref{sec:edge-oriented-bb-framework})}


    
    \item 
    {\cheng We further {\chengB develop} an early termination technique for boosting the efficiency of branch-and-bound frameworks including \texttt{EBBkC}, i.e., for branches of listing $l$-cliques in a dense graph (e.g., a $t$-plex), we develop more efficient algorithms based on combinatorial approaches (for a clique and a $2$-plex) and conduct the branching process on its inverse graph (for a $t$-plex with $t\ge 3$), which would be {\chengB faster}. (Section~\ref{sec:early-termination})}
    
    \item We conduct extensive experiments on 19 real graphs, and the results show that 
    {\cheng our \texttt{EBBkC} based algorithms with the early termination technique consistently and largely outperform the state-of-the-art (\texttt{VBBkC} based) algorithms.}
    (Section~\ref{sec:exp})
    
\end{itemize}

The rest of the paper is organized as follows. Section~\ref{sec:problem} reviews the problem and presents some preliminaries. Section~\ref{sec:vbbkc} summarizes the {\chengB existing} vertex-oriented branching-based BB framework. Section~\ref{sec:related} reviews the related work and Section~\ref{sec:conclusion} concludes the paper.


\section{Problem and Preliminaries}
\label{sec:problem}

We consider an \emph{unweighted} and \emph{undirected} simple graph $G=(V,E)$, where $V$ is the set of vertices and $E$ is the set of edges. We denote by $n=|V|$ and $m=|E|$ the cardinalities of $V$ and $E$, respectively. Given $u,v\in V$, both $(u,v)$ and $(v,u)$ denote the undirected edge between $u$ and $v$. Given $u\in V$, we denote by $N(u,G)$ the set of neighbors of $u$ in $G$, i.e., $N(u,G)=\{v\in V\mid (u,v)\in E\}$ and define $d(u, G)=|N(u,G)|$.
Given $V_{sub}\subseteq V$, we use $N(V_{sub},G)$ to denote the common neighbors of vertices in $V_{sub}$, i.e., $N(V_{sub},G)=\{v\in V\mid \forall u\in V_{sub},\ (v,u)\in E\}$.

Given $V_{sub}\subseteq V$, we denote by $G[V_{sub}]$ the subgraph of $G$ induced by $V_{sub}$, i.e., $G[V_{sub}]$ includes the set of vertices $V_{sub}$ and the set of edges $\{(u,v)\in E\mid u, v\in V_{sub}\}$.
Given $E_{sub}\subseteq E$, we denote by $G[E_{sub}]$ the subgraph of $G$ induced by $E_{sub}$, i.e., $G[E_{sub}]$ includes the set of edges $E_{sub}$ and the set of vertices $\{ v\in V \mid (v, \cdot) \in E_{sub} \}$.
Let $g$ be a subgraph of $G$ induced by either a vertex subset of $V$ or an edge subset of $E$. We denote by $V(g)$ and $E(g)$ its set of vertices and its set of edges, respectively.

In this paper, we focus on a {\cheng widely-used} cohesive graph structure, namely $k$-clique~\cite{erdos1935combinatorial}, which is defined formally as below.

\begin{definition}[$k$-clique~\cite{erdos1935combinatorial}]
Given a positive integer $k$, a subgraph $g$ is said to be a $k$-clique if and only if it has $k$ vertices and has an edge between every pair of vertices, i.e., $|V(g)|=k$ and $E(g)=\{(u,v)\mid u,v\in V(g), {\cheng u\neq v} \}$.
\end{definition}

We note that 1-clique ($k=1$) and 2-clique ($k=2$) correspond to single vertex and single edge, respectively, {\cheng which are basic elements of a graph.} Therefore, we focus on those $k$-cliques with $k$ at least 3 (note that 3-clique is widely known as triangle and has found many applications \cite{ latapy2008main, ortmann2014triangle}).
We now formulate the problem studied in this paper as follows.

\begin{problem}[$k$-clique listing~\cite{chiba1985arboricity}]
Given a graph $G=(V,E)$ and a positive integer $k\geq 3$, the {$k$-clique listing} problem aims to find all $k$-cliques in $G$.
\end{problem}

\smallskip
\noindent\textbf{Hardness.} The $k$-clique listing problem is a hard problem since the decision problem of determining whether a graph contains a $k$-clique is NP-hard~\cite{karp2010reducibility} {\cheng and this problem} can be solved by listing all $k$-cliques and returning true if any $k$-clique is listed.

\smallskip
\noindent\textbf{Remark.} The problem of listing all 3-cliques {\cheng (i.e., $k = 3$)}, known as \emph{triangle listing problem}, has been widely studied~\cite{latapy2008main,ortmann2014triangle}. There are many efficient algorithms proposed for triangle listing, which run in \emph{polynomial} time. We remark that these algorithms cannot be used to solve the general $k$-clique listing problem.






\section{The Branch-and-Bound Framework of Existing Algorithms: \texttt{VBBkC}}
\label{sec:vbbkc}

Many algorithms {\cheng have been} proposed for listing $k$-cliques in the literature~\cite{chiba1985arboricity, danisch2018listing, li2020ordering, yuan2022efficient}.
{\cheng Most of them adopt a \emph{branch-and-bound} (BB) framework}, which recursively partitions the problem instance ({\cheng of} listing all $k$-cliques in $G$) into several sub-problem instances {\cheng (of listing smaller cliques in $G$)} via \emph{branching} until each of them can be solved trivially~\cite{danisch2018listing, li2020ordering, yuan2022efficient}. 
{
\cheng 
The rationale behind these methods is that a $k$-clique can be constructed by merging two smaller cliques, namely a clique $S$ and an $l$-clique with $|S| + l = k$. Specifically, a branch $B$ can be represented as a triplet $(S, g, l)$, where 
\begin{itemize}[leftmargin=*]
\item \textbf{set $S$} induces a clique found so far with $|S| < k$, 
\item \textbf{subgraph $g$} is one induced by vertices that connect each vertex in $S$, and 
\item \textbf{integer $l$} is equal to $k - |S|$. 
\end{itemize}
Essentially, branch $B$ covers all $k$-cliques, each consisting of $S$ and an $l$-clique in $g$. To list all $k$-cliques under branch $B$, it creates a group of sub-branches via a branching step such that for each sub-branch, the set $S$ is expanded with one vertex from $g$, the graph $g$ is updated accordingly (by removing those vertices that are not adjacent to the vertex included in $S$), and $l$ is decremented by 1. The recursive process continues until when the $l$ for a branch reduces to $2$, for which the $l$-cliques (which correspond to edges)  can be listed trivially in $g$. The original $k$-clique listing problem on graph $G$ can be solved by starting with the branch $(S, g, l)$ with $S = \emptyset$, $g = G$ and $l = k$. We call the branching step involved in these existing methods \emph{vertex-oriented branching} since each sub-branch is formed by including a \emph{vertex} to the set $S$.
}

\begin{algorithm}[t]
\small
\caption{The vertex-oriented {\chengB branching-based} BB framework: \texttt{VBBkC}}
\label{alg:VBBkC}
\KwIn{A graph $G=(V,E)$ and an integer $k\geq 3$}
\KwOut{All $k$-cliques within $G$}
\SetKwFunction{List}{\textsf{VBBkC\_Rec}}
\SetKwProg{Fn}{Procedure}{}{}
\List{$\emptyset, G, k$}\;
\Fn{\List{$S, g, l$}} {
    \tcc{Pruning}
    \lIf{$|V(g)|<l$}{\textbf{return}}
    \tcc{Termination when $l=2$}
    \If{$l=2$}{
       \lFor{each edge $(u,v)$ in $g$}{
                \textbf{Output} a $k$-clique $S\cup\{u,v\}$
       }
       \textbf{return}\;
    }
    \tcc{Branching when $l\geq 3$}
    \For{each vertex $v_i\in V(g)$ based on a given vertex ordering}{
        Create branch $B_i=(S_i,g_i,l_i)$ based on Eq.~(\ref{eq:vertex-oriented-branching})\;
        \List{$S_i, g_i, l_i$}\;
    }
}
\end{algorithm}

Consider the branching step at a branch $B=(S,g,l)$.
Let $\langle v_1,v_2,\cdots,v_{|V(g)|} \rangle$ be an arbitrary ordering of vertices in $g$. The branching step would produce $|V(g)|$ new sub-branches from branch $B$. The $i$-th sub-branch, denoted by $B_i=(S_i,g_i,l_i)$, includes $v_i$ to $S$ and excludes $\{v_1,v_2,\cdots,v_{i-1}\}$ (and also those that are not adjacent to $v_i$).
Formally, for $1\leq i\leq |V(g)|$, we have
\begin{equation}
    \label{eq:vertex-oriented-branching}
    S_i=S\cup \{v_i\},\quad g_i=\widehat{g}_i[N(v_i,\widehat{g}_i)],\quad l_i=l-1,
\end{equation}
where {\cheng $\widehat{g}_i$ is a subgraph of $g$ induced by the set of vertices $\{v_{i}, v_{i+1}, \cdots, v_{|V(g)|}\}$, i.e., $\widehat{g}_i = g[v_i,\cdots,v_{|V(g)|}]$}. The branching can be explained by a recursive binary partition process, as shown in Figure~\ref{subfig:bbkc}. 
Specifically, it first divides the current branch $B$ into two sub-branches based on $v_1$: one branch {\cheng moves} $v_1$ from $g$ to $S$ (this is the branch $B_1$ which will list those $k$-cliques in $B$ that include $v_1$), {\cheng and} the other removes $v_1$ from $g$ 
({\cheng so that it} will list others in $B$ that exclude $v_1$). 
For branch $B_1$, it also removes from $g$ those vertices that are not adjacent to $v_1$ since they cannot form any $k$-clique with $v_1$. In summary, we have $g_1=\widehat{g}_1[N(v_1,\widehat{g}_1)]$.
Then, it recursively divides the latter into two new sub-branches: one branch {\cheng moves} $v_2$ from $\widehat{g}_2$ to $S$ (this is the branch $B_2$ which will list those $k$-cliques in $B$ that exclude $v_1$ and include $v_2$), and the other removes $v_2$ from $\widehat{g}_2$ 
({\cheng so that it} will list others in $B$ that exclude $\{v_1,v_2\}$).
It continues the process, until the last branch $B_{|V(g)|}$ is formed. 
{\cheng In summary,}
branch $B_i$ will list those $k$-cliques in $B$ that include $v_i$ and exclude $\{v_1,\cdots,v_{i-1}\}$. Consequently, all $k$-cliques in $B$ will be listed exactly once after branching.

{\cheng We call the \underline{v}ertex-oriented branching-based \underline{BB} framework for \underline{$k$}-\underline{c}lique listing \texttt{VBBkC}}. We present its pseudo-code in Algorithm~\ref{alg:VBBkC}. In particular, when $l=2$, a branch $(S,g,l)$ 
can be terminated by listing each of edges in $g$ together with $S$ (lines 4-6). We remark that Algorithm~\ref{alg:VBBkC} presents the algorithmic idea only while the detailed implementations (e.g., the data structures used for representing a branch) 
{\cheng often}
vary in existing algorithms~\cite{danisch2018listing, li2020ordering, yuan2022efficient}.
%
%
{\cheng Different variants of \texttt{VBBkC} have different time complexities. The state-of-the-art algorithms, including \texttt{DDegCol}~\cite{li2020ordering}, \texttt{DDegree}~\cite{li2020ordering}, \texttt{BitCol}~\cite{yuan2022efficient} and \texttt{SDegree}~\cite{yuan2022efficient}, all share the time complexity of $O(km(\delta/2)^{k-2})$, where $\delta$ is the degeneracy of the graph.} Some more details of variants of \texttt{VBBkC} will be {\chengB provided} in the related work (Section~\ref{sec:related}).

\begin{figure}[t]
\vspace{-4mm}
\subfigure[Branching in \texttt{VBBkC} framework. The notation ``$+$'' means to include a vertex by adding it $S$ and ``$-$'' means to exclude a vertex by removing it from the graph $g$.]{
    \label{subfig:bbkc}
    \includegraphics[width=0.47\textwidth]{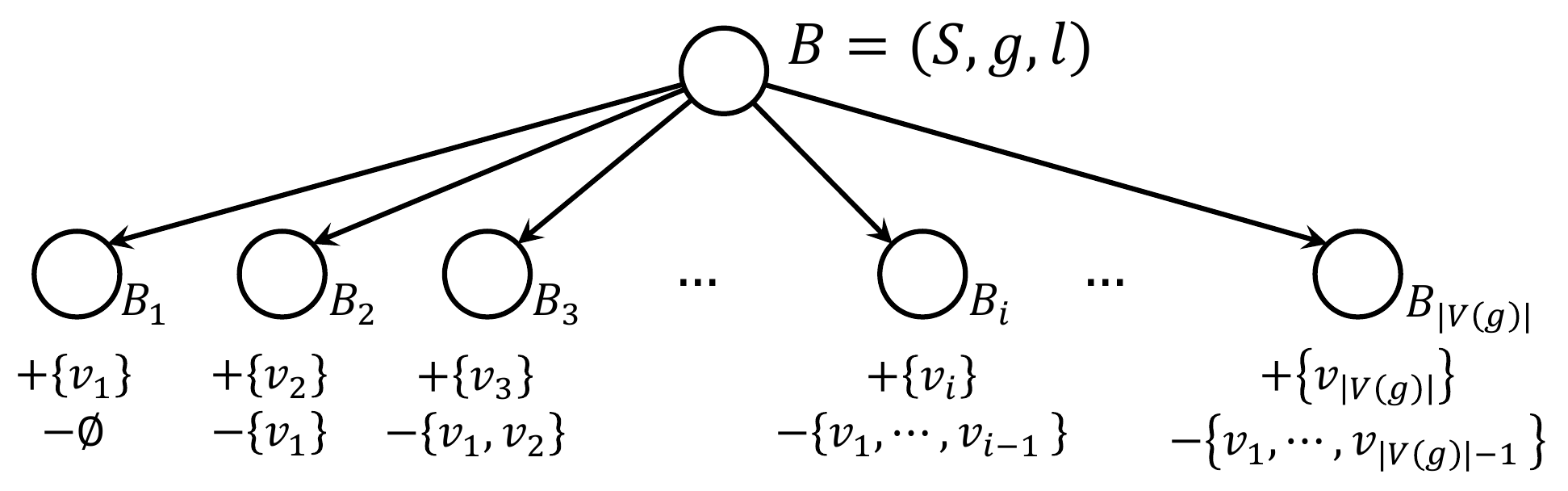}}
\subfigure[Branching in \texttt{EBBkC} framework. The notation ``$+$'' means to include two {\chengB vertices incident to} an edge by adding them to $S$ and ``$-$'' means to exclude an edge by removing it from the graph $g$.]{
    \label{subfig:ebbkc}
    \includegraphics[width=0.47\textwidth]{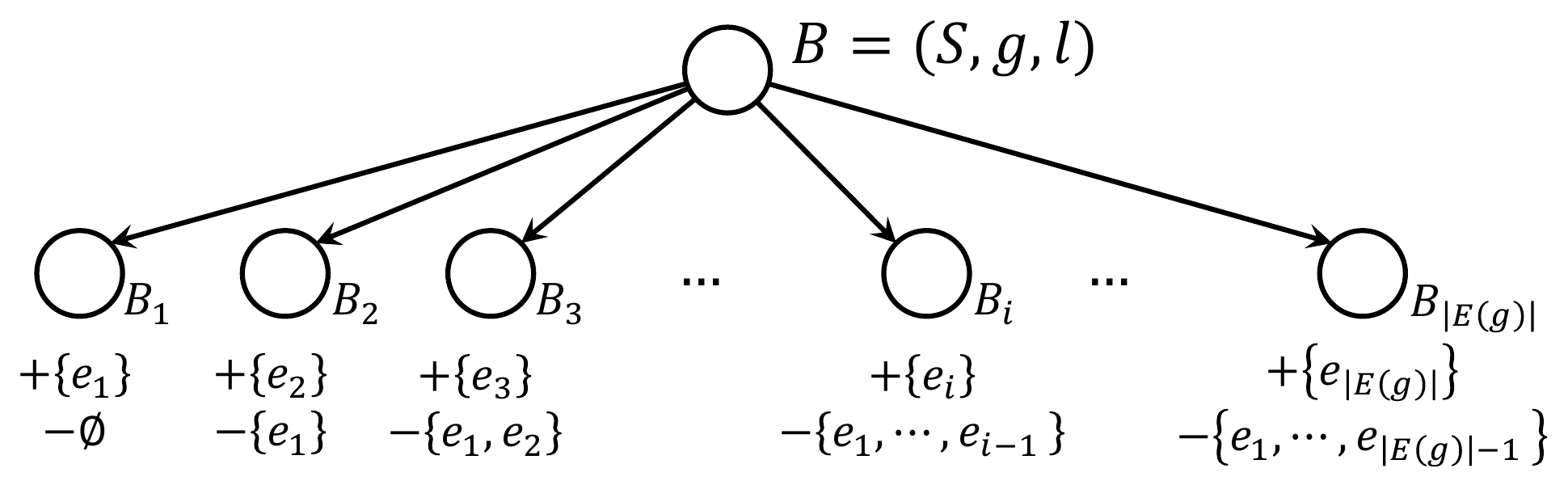}}
\vspace{-5mm}
\caption{Illustration of \texttt{VBBkC} and \texttt{EBBkC}.}
\label{fig:framework}
\vspace{-3mm}
\end{figure}

\section{A New Branch-and-Bound Framework: \texttt{EBBkC}}

\label{sec:edge-oriented-bb-framework}


\subsection{{\cheng Motivation and Overview of \texttt{EBBkC}}}
\label{subsec:motivation}

Recall that for a branch $B = (S, g, l)$, the vertex-oriented branching forms a sub-branch by moving one vertex $v_i$ from $g$ to $S$. To improve the performance, existing studies consider the information of each vertex in $g$ towards pruning more sub-branches and/or forming sub-branches with smaller graph instances. Specifically, with the newly added vertex $v_i$, the produced sub-branch $B_i$ can shrink the graph instance $g$ as a smaller one induced by the neighbors of $v_i$ (e.g., removing those vertices that are not adjacent to $v_i$). As a result, one can prune the branch $B_i$ if $v_i$ has less than $(l-1)$ neighbors in $g$. In addition, one can produce a group of sub-branches with small graph instances by specifying the ordering of vertices in $g$ for branching.

In this paper, we propose to form a branch by \emph{moving two vertices that connect with each other} (correspondingly, an edge) from $g$ to $S$ at the same time. Note that the two vertices are required to be connected since otherwise they will not form a larger partial $k$-clique with $S$. 
{\cheng The intuition} is that it {\cheng would allow} us to consider more information (i.e., an edge {\cheng or two vertices instead of a single vertex}) towards pruning more sub-branches and/or forming sub-branches with smaller graph instances. Specifically, with the newly added edge, the produced branch can shrink the graph instance as the one induced by the common neighbors of two vertices of the edge, which is smaller than that produced by the vertex-oriented branching (details {\chengB can be found} in Section~\ref{subsec:EBBkC-T}).
%
{\cheng In addition}, we can prune more branches {\cheng based on} the information of {\cheng the added edge}, e.g., a produced branch can be pruned if the vertices in the newly added edge have less than $(l-2)$ common neighbors in $g$ (details {\chengB can be found} in Section~\ref{subsec:EBBkC-C}).
We call the above branching strategy \emph{edge-oriented branching}, which we introduce as follows. 

\if 0
Consider the recursive process of \texttt{VBBkC} that starts from a branch $B=(S,g,l)$. Suppose we take each branch as a node and create directed links from a branch 
to {\kaixin{its produced sub-branches}}
based on Eq.~(\ref{eq:vertex-oriented-branching}). Then, we obtain a \emph{recursion tree} rooted at $B$. The directed links from node $B$ correspond to the vertex-oriented branching at branch $B$; it iteratively includes to the partial clique $S$ \emph{each vertex} in $g$ based on an ordering of vertices which can be \emph{arbitrary}. 
We further review the recursion tree by aggregating every twice recursive branching at a branch as a unit. An illustration is shown in Figure~\ref{subfig:vbbkc}. We obtain the following observations regarding the branch $B$ and the group of sub-branches {\kaixin{rooted at $B_i$.}} 
\underline{First}, each {\kaixin{sub-branch rooted at $B_i$}}
is produced by including to $S$ two \emph{connected} vertices in $g$. 
\underline{Second}, there would be $|E(g)|$ branches in total
and each of which 
corresponds to including two end points of a distinct edge 
in $g$. Hence, one unit at branch $B$ can be regarded as a branching which we call \emph{vertex-oriented twice-branching}; it produces $|E(g)|$ sub-branches via iteratively including to the partial clique $S$ \emph{each edge} (i.e., pair of connected vertices) in $g$ based on \emph{certain} ordering of edges $\langle e_1, e_2, \cdots, e_{|E(g)|} \rangle$ in $g$; the edge ordering is determined by the given orderings of vertices (for vertex-oriented branching) at branches $B$ and $B_1, B_2, \cdots, B_{|V(g)|}$. Consequently, the recursion tree (i.e., execution of \texttt{VBBkC}) can also be interpreted as a series of recursive calls of vertex-oriented twice-branching.
  
Consider the vertex-oriented twice-branching at a branch $B=(S,g,l)$. We observe that \emph{though the orderings of vertices can be set arbitrarily, the resulting ordering of edges cannot be an arbitrary one.} 
%
This is because once the ordering of vertices $\langle v_1,v_2,\cdots,v_{|V(g)|} \rangle$ at $B$ is given, the resulting edge ordering $\langle e_1,e_2,\cdots,e_{|E(g)|}\rangle$ can be partitioned into $|V(g)|$ groups, {\kaixin{and the $i$-th group contains $|V(g_i)|$ edges. An illustration is shown in Figure~\ref{subfig:vbbkc}.}} Then the edges in the $i$-th group must include $v_i$ (for those edges in the $i$-th group, the ordering can be arbitrary by varying vertex ordering at $B_i$).
%
Therefore, there exists some ordering of edges that always cannot be adopted by \texttt{VBBkC}.
%
%
This would limit existing studies which adopt \texttt{VBBkC} and mainly focus on optimizing the ordering of vertices for branching.

\fi

\begin{algorithm}[t]
\small
\caption{The edge-oriented {\chengB branching-based} BB framework: \texttt{EBBkC}}
\label{alg:EBBkC}
\KwIn{A graph $G=(V,E)$ and an integer $k\geq 3$}
\KwOut{All $k$-cliques within $G$}
\SetKwFunction{List}{\textsf{EBBkC\_Rec}}
\SetKwProg{Fn}{Procedure}{}{}
\List{$\emptyset, G, k$}\;
\Fn{\List{$S, g, l$}} {
    \tcc{Pruning}
    \lIf{$|V(g)|<l$}{\textbf{return}}
    \tcc{Termination when $l=1$ or $l=2$}
    \uIf{$l=1$}{
       \lFor{each vertex $v$ in $g$}{
                \textbf{Output} a $k$-clique $S\cup\{v\}$
       }
       \textbf{return}\;
    }\ElseIf{$l=2$}{
       \lFor{each edge $(u,v)$ in $g$}{
                \textbf{Output} a $k$-clique $S\cup\{u,v\}$
       }
       \textbf{return}\;
    }
    \tcc{Branching when $l\geq 3$}
    \For{each edge $e_i\in E(g)$ based on a given edge ordering}{
        Create branch $B_i=(S_i,g_i,l_i)$ based on Eq.~(\ref{eq:edge-oriented-branching})\;
        \List{$S_i, g_i, l_i$}\;
    }
}
\end{algorithm}

Consider the branching step at a branch $B=(S,g,l)$. Let $\langle e_1,e_2,\cdots,e_{|E(g)|} \rangle$ be an \emph{arbitrary} ordering of edges in $g$. Then, the branching step would produce $|E(g)|$ new sub-branches from $B$. The $i$-{th} branch, denoted by $B_i=(S_i,g_i,l_i)$, includes to $S$ 
{\kaixin{{\chengB the two vertices incident to} the edge $e_i$, i.e., $V(e_i)$,}} 
and excludes from $g$ those edges in $\{e_1,e_2,\cdots,e_{i-1}\}$ (and those vertices that {are disconnected from} the vertices incident to $e_i$). Formally, for $1\leq i \leq |E(g)|$, we have
\begin{equation}
\label{eq:edge-oriented-branching}
    S_i=S\cup V(e_i),\quad     g_i=\overline{g}_i[N(V(e_i),\overline{g}_{i})],\quad l_i=l-2,
\end{equation}
{\cheng where $\overline{g}_{i}$ is a subgraph of $g$ induced by the set of edges $\{e_i, e_{i+1}, \cdots, e_{|E(g)|}\}$, i.e., $\overline{g}_{i} = g[e_i,\cdots,e_{|E(g)|}]$.}
{\kaixin 
We note that (1) $N(V(e_i),\overline{g}_{i})$ is to filter out those vertices that are not adjacent to
{\chengB the two vertices incident to}
$e_i$ since they cannot form any $k$-cliques with $e_i$ and (2) $\overline{g}_i[N(V(e_i),\overline{g}_{i})]$ is the graph instance for the sub-branch induced by the vertex set $N(V(e_i),\overline{g}_{i})$ in $\overline{g}_{i}$. }

%
The edge-oriented branching also corresponds to a recursive binary partition process, as illustrated in Figure~\ref{subfig:ebbkc}. Specifically, it first divides branch $B$ into two sub-branches based on $e_1$: one moves $e_1$ from $g$ to $S$ (this is the branch $B_1$ which will list those $k$-cliques in $B$ that include edge $e_1$), and the other removes $e_1$ from $g$
(so that it will list others that exclude $e_1$). For branch $B_1$, it also removes from $g$ those vertices that are not adjacent to the vertices in $V(e_1)$ (i.e., $g_1=\overline{g}_1[N(V(e_1),\overline{g}_{1})]$) since they cannot form any $k$-clique with $e_1$. Then, it recursively divides the latter into two new sub-branches: one moves $e_2$ from $\overline{g}_2$ to $S$ (this the branch $B_2$ which will list those $k$-cliques in $B$ that exclude $e_1$ and include $e_2$), and the other removes $e_2$ from $\overline{g}_2$ 
(so that it will list others that exclude $\{e_1,e_2\}$). It continues the process, until the last branch $B_{|E(g)|}$ is formed. 

{\cheng We call the \emph{\underline{e}dge-oriented branching-based \underline{BB} framework for \underline{$k$}-\underline{c}lique listing \texttt{EBBkC}.}}
We present in Algorithm~\ref{alg:EBBkC} the pseudo-code of \texttt{EBBkC}, which differs with \texttt{VBBkC} mainly in the branching step (lines {\cheng 10-}11). 
{\cheng
{\chengB We note that while} a branching step in our edge-oriented branching (i.e., including an edge) can be treated as two branching steps in the existing vertex-oriented branching (i.e., including two vertices of an edge via two steps), our \texttt{EBBkC} has a major difference from \texttt{VBBkC} as follows. For the former, we can explore \emph{arbitrary} edge orderings for each branching step while for the latter, the underlying edge orderings are \emph{constrained} by the adopted vertex ordering. For example, once a vertex ordering is decided for vertex-oriented branching with vertex $v_i$ appearing before $v_j$, then the edges that are incident to $v_i$ would appear before those that are incident to $v_j$ for the corresponding edge-oriented branching. For this reason, the existing \texttt{VBBkC} framework is covered by our \texttt{EBBkC} framework, i.e., for any instance of \texttt{VBBkC} with a vertex ordering, there exists an edge ordering such that the corresponding \texttt{EBBkC} instance is equivalent to the \texttt{VBBkC} instance, \emph{but not vice versa}. This explains why the time complexity of \texttt{EBBkC} based on some edge ordering is better than that of \texttt{VBBkC} (details can be found in Section~\ref{subsec:EBBkC-T}).
}

In the sequel, we explore different {\cheng orderings} of edges in \texttt{EBBkC}.
{\chengB Specifically, with the proposed truss-based edge ordering}, \texttt{EBBkC} would have the worst-case time complexity of {\Revise $O(\delta m + k m  (\tau/2)^{k-2})$} with $\tau < \delta$. {\Revise We note that the time complexity is better than that of the state-of-the-art \texttt{VBBkC} algorithms (which is $O(k m  (\delta/2)^{k-2})$)} {\Yui when $k>3$} 
(Section~\ref{subsec:EBBkC-T}). 
{\chengB With the proposed color-based edge ordering}, \texttt{EBBkC} can apply some pruning rules to improve the efficiency in practice (Section~\ref{subsec:EBBkC-C}). Then, with the proposed hybrid {\chengB edge} ordering, \texttt{EBBkC} would inherit both the above theoretical {\chengB result} and practical {\chengB performance} (Section~\ref{subsec:EBBkC-H}). 
Finally, we discuss other potential applications of \texttt{EBBkC} (Section~\ref{subsec:discussion}).

\begin{algorithm}[t]
\small
\caption{{\cheng \texttt{EBBkC} with truss-based edge ordering}: \texttt{EBBkC-T}}
\label{alg:EBBkC-T}
\KwIn{A graph $G=(V,E)$ and an integer $k\geq 3$}
\KwOut{All $k$-cliques within $G$}
\SetKwFunction{List}{\textsf{EBBkC-T\_Rec}}
\SetKwProg{Fn}{Procedure}{}{}
\tcc{Initialization and branching at $(\emptyset,G,k)$}
$\pi_\tau(G)\leftarrow$ the truss-based ordering of edges in $G$\;
\For{{\kaixin each edge $e_i\in E(G)$ following $\pi_\tau(G)$}}{
    {\kaixin Obtain $S_i$ and $g_i$ according to Eq.~(\ref{eq:edge-oriented-branching})}\;
    {Initialize $VSet(e_i)\leftarrow V(g_i)$ and $ESet(e_i)\leftarrow E(g_i)$}\;
    \List{$S_i, g_i, k-2$}\;
}
\Fn{\List{$S, g, l$}} {
    Conduct Pruning and Termination (lines 3-9 of Algorithm~\ref{alg:EBBkC})\; 
    \tcc{Branching when $l\geq 3$}
    \For{each edge $e$ in $E(g)$}{
        {$S'\leftarrow S\cup V(e)$, $g'\leftarrow (V(g)\cap VSet(e), E(g)\cap ESet(e))$}\;
        \List{$S', g', l-2$}\;
    }
}
\end{algorithm}

\subsection{\texttt{EBBkC-T}: {\cheng \texttt{EBBkC} with the} Truss-based Edge Ordering}
\label{subsec:EBBkC-T}

Consider the edge-oriented branching at $B=(S,g,l)$ based on an ordering of edges $\langle e_1,e_2,\cdots,e_{|E(g)|}\rangle$. For a {\chengB sub-branch} $B_i$ ($1\leq i \leq |E(g)|$) produced from $B$, we observe that the size of graph instance $g_i$ {\cheng (i.e., the number of vertices in $g_i$)}
{\chengB is equal to}
the number of common neighbors of vertices in $V(e_i)$ {\chengB in $\overline{g}_i$}, which depends on the ordering of edges. Formally, we have 
\begin{equation}
|V(g_i)|=|N(V(e_i),\overline{g}_i)|,\ \text{where}\ \overline{g}_i=g[e_i,\cdots,e_{|E(g)|}].    
\end{equation}
Recall that the smaller a graph instance is, the faster the corresponding branch can be solved. %
Therefore, to reduce the time costs, we {\cheng aim} to minimize the {\cheng sizes} of graph instances by determining the ordering of edges via the following greedy procedure.

\smallskip
\noindent\textbf{Truss-based edge ordering.} We determine the ordering of edges by an iterative process. Specifically, it iteratively removes from $g$ the edge whose vertices have the smallest number of common {neighbors} (correspondingly, the smallest size of a graph instance), and then adds it to the end of the ordering. {\chengB Consequently}, for the $i$-th edge $e_i$ in the produced ordering ($1\leq i\leq |E(g)|$), we have
\begin{equation}
\label{eq:truss-based-edge}
    e_i = \min_{e\in E(g)\setminus \{e_1,\cdots,e_{i-1}\}} |N(V(e_i), g[E(g)\setminus \{e_1,\cdots,e_{i-1}\}])|.
\end{equation}
We call the above ordering \emph{truss-based edge ordering} of $g$ and denote it by $\pi_\tau(g)$ since the corresponding iterative process is the same to the \emph{truss decomposition}~\cite{che2020accelerating,wang2012truss,cohen2008trusses}, which can be done in {\Revise $O(\delta m)$ {\chengB time}, where $\delta$ is the degeneracy of the graph \cite{che2020accelerating}}. 

\smallskip
\noindent\textbf{The \texttt{EBBkC-T} algorithm.} 
{\cheng When the truss-based edge ordering is adopted in \texttt{EBBkC}, we call the resulting algorithm \texttt{EBBkC-T}.}
The pseudo-code of \texttt{EBBkC-T} is presented in Algorithm~\ref{alg:EBBkC-T}. 
In particular, it only computes the truss-based edge ordering of $G$ (i.e., $\pi_{\tau}(G)$) for the branching at the initial branch $(\emptyset,G,k)$ (lines 2-5). Then, for any other branching step at a following branch $(S,g,l)$, edges in $\langle e_1,e_2,\cdots,e_{|E(g)|} \rangle$ adopt the same ordering to those used in $\pi_{\tau}(G)$, which could differ with the truss-based edge ordering of $g$. Formally, $e_i$ comes before $e_j$ in $\langle e_1,e_2,\cdots,e_{|E(g)|} \rangle$ (i.e., $i<j$) if and only if it does so in $\pi_{\tau}(G)$. 
To implement this efficiently, it maintains two additional auxiliary sets, i.e., {$VSet(\cdot)$ and $ESet(\cdot)$} (lines 2-4). The idea is that for an edge $e$, all edges in {$ESet(e)$} are ordered behind $e$ in $\pi_{\tau}$ and the vertices incident to these edges are both connected with those incident to $e$. Therefore, the branching steps (with the introduced edge ordering) can be efficiently conducted via set intersections (line 9) for those branches following $(\emptyset,G,k)$. The correctness of this implementation can be easily verified. 

\smallskip
\noindent\textbf{Complexity.} 
{\cheng Given a branch $B=(S,g,l)$, let $\tau(g)$ be the largest size of a produced graph instance, i.e., }
\begin{equation}
\label{eq:tau}
    \tau(g) = \max_{e_i\in E(g)} |V(g_i)|.
\end{equation}
{\cheng We have the following observation.}
\begin{lemma}
\label{lemma:comparison}
    When applying the truss-based {\chengB edge} ordering $\pi_\tau(g)$ at $(S,g,l)$, we have $\tau(g)<\delta(g)$, where $\delta(g)$ is the degeneracy of $g$.
\end{lemma}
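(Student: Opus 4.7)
The plan is to exploit the standard min-max characterization of degeneracy: $\delta(g) = \max_{h \subseteq g,\, h \neq \emptyset} \min_{v \in V(h)} d(v, h)$. It therefore suffices to exhibit a non-empty subgraph $h'$ of $g$ in which every vertex has degree strictly larger than $\tau(g)$; such an $h'$ immediately gives $\delta(g) \geq \tau(g)+1 > \tau(g)$.

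To construct this $h'$, I would focus on the step of the truss-based peeling that realizes the maximum in Eq.~(\ref{eq:tau}). Let $i^{*}$ be an index achieving $\tau(g) = |N(V(e_{i^{*}}), \overline{g}_{i^{*}})|$, and set $h := \overline{g}_{i^{*}}$, the residual graph at the moment $e_{i^{*}}$ is about to be removed. By the greedy rule defining $\pi_\tau$, the edge $e_{i^{*}}$ has the \emph{minimum} number of common neighbors in $h$ among all remaining edges. Hence every edge $(u,v) \in E(h)$ has $|N(\{u,v\}, h)| \geq \tau(g)$.

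Next I would promote this edge-level lower bound to a vertex-level one. For any non-isolated vertex $u$ of $h$, pick an incident edge $(u,v) \in E(h)$; by the previous step $u$ and $v$ share at least $\tau(g)$ common neighbors in $h$, and together with $v$ itself this gives $d(u,h) \geq \tau(g)+1$. Let $h'$ be the subgraph of $h$ (and thus of $g$) induced by the non-isolated vertices of $h$. Then $h'$ is non-empty (it contains $e_{i^{*}}$), and the same lower bound $d(u,h') \geq \tau(g)+1$ continues to hold because restricting to non-isolated vertices discards no edges incident to such $u$. Applying the min-max characterization to $h'$ yields $\delta(g) \geq \tau(g)+1$, which is the desired strict inequality.

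The main obstacle will be purely expository: keeping the two different notions of ``common neighborhood'' separate, since the $\tau(g)$ lower bound is measured in the residual graph $\overline{g}_{i^{*}}$ rather than in the original $g$, and verifying that the step from $h$ to $h'$ preserves the minimum degree bound. A minor corner case is $E(g) = \emptyset$, in which $\tau(g)$ is vacuous; I would therefore state the lemma under the implicit assumption that $g$ has at least one edge (otherwise the inequality is trivial or undefined).
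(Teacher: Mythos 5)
Your proposal is correct and follows essentially the same route as the paper's proof: both identify the residual graph $\overline{g}_{i^*}$ at the step realizing $\tau(g)$, use the greedy peeling rule to conclude every remaining edge has at least $\tau(g)$ common endpoints' neighbors there, lift this to a minimum vertex degree of $\tau(g)+1$, and conclude $\delta(g)\ge\tau(g)+1$. The only cosmetic difference is that you argue directly via the min-max characterization of degeneracy, whereas the paper phrases the same fact as a contradiction with the emptiness of the $(\delta+1)$-core.
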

\begin{proof}
    We prove by contradiction. Suppose that $\tau\ge \delta$. Since $\delta$ is defined as the largest value of $k$ such that the $k$-core of $g$ is non-empty, there must not have a $(\delta+1)$-core in $g$, i.e., $V(C_{\delta+1})=\emptyset$. {\chengB Here, $C_k$ refers to a $k$-core.}
    Consider the branching step at such an edge $e_i\in E(g)$ that produces the largest size of graph instance, i.e., $|V(g_i)|=\tau$. According to Eq.~(\ref{eq:truss-based-edge}), $e_i$ has the minimum number of common neighbors of its end points in the graph $\overline{g}_i=g[e_{i},\cdots, e_{|E(g)|}]$. This means for each edge $e\in E(\overline{g}_i)$, the number of common neighbors of the end points of $e$ is no less than $\tau$, i.e., $|N(V(e), \overline{g}_i)| \ge \tau$,
    Obviously, $\overline{g}_{i}$ is non-empty, i.e., $V(\overline{g}_{i})\neq \emptyset$. 
    Then for each vertex $v\in V(\overline{g}_{i})$, the number of its neighbors is at least $\tau+1$, i.e., $|N(v,\overline{g}_{i})|\ge \tau+1$. Therefore, $\overline{g}_{i}$ is a subgraph of $(\tau+1)$-core, i.e., $V(\overline{g}_{i})\subseteq V(C_{\tau+1})$. According to the hereditary property of $k$-core\footnote{{\kaixin The hereditary property claims that given a graph $G$ and two integers $k$ and $k'$ with $k\le k'$, then the $k'$-core of $G$ is a subgraph of the $k$-core of $G$ \cite{batagelj2003m}.}} and the hypothesis $\tau\ge\delta$, we have $V(C_{\tau+1}) \subseteq V(C_{\delta+1})$, which leads to a contradiction that $V(\overline{g}_{i})\subseteq V(C_{\tau+1})\subseteq V(C_{\delta+1})= \emptyset$. 
\end{proof}

{\cheng Based on the above result, we derive that}
the time complexity of \texttt{EBBkC-T} is {\Revise better} than that of the state-of-the-art {\chengB algorithms}, i.e., $O(k m  (\delta/2)^{k-2})$, which we show in the following theorem.

\begin{theorem}
\label{theo:ebbkc-t}
     Given a graph $G=(V,E)$ and an integer $k\geq 3$, the time and {space complexities of \texttt{EBBkC-T} are} {\Revise $O(\delta m + k m  (\tau/2)^{k-2})$} and {$O(m+n)$}, respectively, where $\tau=\tau(G)$ and it is strictly smaller than the degeneracy $\delta$ of $G$.
\end{theorem}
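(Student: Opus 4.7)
The plan is to bound the runtime of Algorithm~\ref{alg:EBBkC-T} in two pieces. First, the preprocessing (lines 1--4) produces $\pi_\tau(G)$ together with the auxiliary sets $VSet(\cdot)$ and $ESet(\cdot)$; I would invoke the $O(\delta m)$-time truss-decomposition algorithm cited in the paper, whose peeling procedure naturally emits $\pi_\tau$ and, at the moment each edge $e_i$ is removed, writes out the residual vertex and edge sets of $g_i$ in $O(|V(g_i)|+|E(g_i)|)$ additional time. These writes are absorbed into the $O(\delta m)$ budget and yield the first term of the claimed bound.

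Second, for the $m$ top-level invocations of \textsf{EBBkC-T\_Rec}, the central observation is that, by the very definition of $\tau$ in Eq.~(\ref{eq:tau}), every top-level graph instance $g_i$ satisfies $|V(g_i)| \le \tau$, and this bound is preserved by the common-neighbourhood intersections performed on line 9. I would then bound the cost of \textsf{EBBkC-T\_Rec}$(S, g, l)$ by induction on $l$: the base cases $l=1$ and $l=2$ take $O(|V(g)|)$ and $O(|E(g)|)$ respectively, while for $l \ge 3$ each iteration of line 8 spends $O(|V(g)|+|E(g)|)$ on its two set-intersections and recurses on a graph with at most $|V(g)|-2$ vertices and parameter $l-2$. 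Unrolling this recurrence---or, equivalently, invoking the standard \texttt{VBBkC} bound $O(l\, m_g (\delta_g/2)^{l-2})$ with $l=k-2$, $m_g \le \binom{\tau}{2}$ and $\delta_g \le \tau-1$---gives $O(k(\tau/2)^{k-2})$ per top-level sub-branch; multiplying by $m$ delivers the $O(km(\tau/2)^{k-2})$ term. The strict inequality $\tau < \delta$ is inherited directly from Lemma~\ref{lemma:comparison}.

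For space, the input graph, the current partial clique $S$, and the $VSet/ESet$ pointers jointly occupy $O(m+n)$ memory. Each recursive frame can represent its graph restriction by a constant number of markers on the shared representation in the usual backtracking-with-restoration style, so the recursion stack of depth $O(k)$ contributes only $O(k)$ additional pointers and the total remains $O(m+n)$.

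The hardest part will be the recurrence accounting in the second step: line 9 performs two intersections whose naive cost is $|V(g)|+|E(g)|$, and one must verify that after summing across the up to $\binom{|V(g)|}{2}$ iterations of line 8 and unrolling through $\lfloor (k-2)/2 \rfloor$ recursion levels, the accumulated work genuinely telescopes into $O(k(\tau/2)^{k-2})$ rather than picking up an extra $\tau^2$ factor at every level. Getting the saving $(\tau/2)^{k-2}$ rather than $\tau^{k-2}$ requires exploiting the fact that the recursive vertex set strictly shrinks by at least $2$ at each level, mirroring the $(\delta/2)^{k-2}$ argument that underpins the \texttt{VBBkC} bound.
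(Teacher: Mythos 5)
Your overall decomposition matches the paper's: an $O(\delta m)$ preprocessing term for the truss ordering, a $\tau$ bound on every sub-branch's vertex set via Lemma~\ref{lemma:comparison}, and an induction on $l$ for the recursive part. However, the step you flag as ``the hardest part'' is exactly where the proof lives, and the mechanism you propose for it does not work. The factor of $1/2$ per level in $(\tau/2)^{k-2}$ does \emph{not} come from the vertex set shrinking by $2$ at each recursion level; it comes from averaging over the position of each edge in the ordering. Concretely, the paper's key counting lemma (Lemma~\ref{lemma:sum-edge-bound}) shows $\sum_{e_i\in E(g)}|E(g_i)| < \frac{\tau^2}{4}|E(g)|$ for internal branches, proved by combining $|E(g_i)|\le |E(g)|-i$ (the $i$-th sub-branch excludes the first $i$ edges) with $|E(g)|\le \tau(\tau-1)/2$ (from Lemma~\ref{lemma:comparison}), so that $\sum_i(|E(g)|-i)=\frac{|E(g)|(|E(g)|-1)}{2}\le\frac{\tau(\tau-1)}{4}(|E(g)|-1)$. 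Without this lemma the induction hypothesis $T(g,l)\le \lambda(k+2l)|E(g)|(\tau/2)^{l-2}$ cannot be pushed through the sum over sub-branches, and a naive per-level bound of $|E(g_i)|\le\binom{\tau}{2}$ does pick up the extra $\tau^2$ (rather than $\tau^2/4$) factor you were worried about. Your fallback of invoking the \texttt{VBBkC} bound as a black box on each $g_i$ is also not justified: the inner recursion of \texttt{EBBkC-T} remains edge-oriented and inherits the global truss ordering rather than running \texttt{VBBkC} with a degeneracy ordering of $g_i$, so that bound does not apply to the algorithm as written.

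Two smaller issues. First, the cost of materialising $VSet(e_i)$ and $ESet(e_i)$ for all $m$ top-level edges is $O(m\tau^2)$ (each $ESet(e_i)$ can have $\Theta(\tau^2)$ edges), which is \emph{not} absorbed by the $O(\delta m)$ term since $\tau^2$ can greatly exceed $\delta$; the paper charges it to the recursive term via $\tau^2\le 4(\tau/2)^{k-2}$ for $k\ge 4$ (and only vertex sets are needed when the sub-branch has $l=1$). Second, the top-level branching needs its own case in the counting lemma ($\sum_{e_i\in E(G)}|E(g_i)|<\frac{\tau^2}{2}|E|$, since $|E(G)|$ is not itself bounded by $\tau(\tau-1)/2$), which your per-sub-branch accounting handles implicitly but should be stated. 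The space bound and the appeal to Lemma~\ref{lemma:comparison} for $\tau<\delta$ are fine.
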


\begin{proof}
    We give a sketch of the proof and put the details in the technical report \cite{wang2024technical}. The running time of \texttt{EBBkC-T} {\Revise {\chengC consists of} the time of generating the truss-based edge ordering, which is $O(\delta m)$ \cite{che2020accelerating}, and the time of the recursive listing procedure (lines 6-10 of Algorithm~\ref{alg:EBBkC-T}). Consider the latter one. }
    Given a branch $B=(S, g, l)$, we denote by $T(g, l)$ the upper bound of time cost of listing $l$-cliques under such a branch. When $k\ge 3$, with different values of $l$, we have the following recurrences. 
\begin{equation}
\label{eq:ebbkc-recurrence-main}
T(g, l) \le \left\{
\begin{array}{lc}
    O(k\cdot |V(g)|) & l = 1\\ 
    O(k\cdot |E(g)|) &  l = 2 \\ 
    \sum_{e\in E(g)} \Big( T(g', l-2) + T'(g') \Big) & 3\le l\le k-2 \\
\end{array}
\right.
\end{equation}
where $T'(g')$ is the time for constructing $g'$ given $B=(S, g, l)$ (line 9 {\chengB of Algorithm~\ref{alg:EBBkC-T}}). We show that with different values of $l$, $T'(g')$ satisfies the following equation. 
\begin{equation}
\label{eq:close-T-prime-main}
    \sum_{e\in E(g)} T'(g') = \left\{
\begin{array}{lc}
    O(\tau \cdot |E(g)|) & l = 3 \\
    O(\tau^2 \cdot |E(g)|) & l > 3 \\
\end{array}
\right.
\end{equation}
The reason is as follows. When given a branch $B=(S,g,l)$ with $l=3$, for each edge, we just need to compute $V(g')$ for the sub-branch (since the termination when $l=1$ only cares about the vertices in $g'$), which can be done in $O(\tau)$. When given a branch $B=(S,g,l)$ with $l>3$, for each edge, we need to construct both $V(g')$ and $E(g')$, which can be done in $O(\tau^2)$ since there are at most $\tau(\tau-1)/2$ edges in $g$. Besides, we show that given a branch $B=(S,g,l)$ and the sub-branches $B'=(S',g',l')$ produced at $B$, we have
\begin{equation}
    \sum_{e\in E(g)} |E(g')| < \left \{
    \begin{array}{lc}
      \frac{\tau^2}{4} \cdot |E(g)|   &  l < k \\
       \frac{\tau^2}{2} \cdot |E(g)|  & l = k
    \end{array}
\right.
\end{equation}
This inequality is proven in the technical report \cite{wang2024technical}. With above inequalities, we can prove the theorem by induction on $l$. 
\end{proof}

{\Revise
{\chengC 
\noindent\textbf{\texttt{VBBkC} v.s. \texttt{EBBkC-T} (Time Complexity).}
{\chengC The time complexity of \texttt{EBBkC-T} (i.e., $O( m \delta + k m (\tau/2)^{k-2})$) is better than that of state-of-the-art {\chengC \texttt{VBBkC} algorithms (i.e., $O(km (\delta/2)^{k-2})$) {\Yui for $k>3$}.
This is because both (1) $O(m\delta)$ and (2) $O(k m (\tau/2)^{k-2})$ are bounded by $O(k m  (\delta/2)^{k-2})$.
For (1), it is because we have $\delta < 2k (\delta/2)^{k-2}$ when $k>3$; and for (2), it is because $\tau <\delta$.
}
\if 0
We compare between \texttt{EBBkC-T} and \texttt{VBBkC}-based algorithms' {\Yui time} complexities as follows. Let $T$ be the overall running time of \texttt{EBBkC-T}. When $k>3$, we have
\begin{equation}
\begin{aligned}
    T &\le \lambda \cdot (\delta m + km (\tau/2)^{k-2}) \\
     &< 2\lambda \cdot km (\delta/2)^{k-2} + \lambda \cdot km (\delta/2)^{k-2} = 3\lambda \cdot km (\delta/2)^{k-2}
\end{aligned}
\end{equation}
where $\lambda$ is a positive constant. The second inequality derives from the facts that (1) $\delta < 2k (\delta/2)^{k-2}$ when $k>3$ and (2) $\tau <\delta$. 
\fi
{\Yui We note that for $k=3$, the time complexity of \texttt{EBBkC-T} is dominated by $O(\delta m)$ since $\tau<\delta$, which is the same as that of state-of-the-art \texttt{VBBkC} algorithms (i.e., $O(\delta m)$) and is also the same as that of algorithms for listing triangles~\cite{latapy2008main,ortmann2014triangle} in the worst case (i.e., $O(m^{1.5})$) since $\delta<\sqrt{m}$.
}
}}

{
\smallskip
\noindent\textbf{Discussion on $\tau$.}
By definition, $\tau$ of a graph $G$, i.e., $\tau(G)$, corresponds to the largest integer such that there exists a non-empty subgraph where the two vertices of each edge have at least $\tau$ common neighbors.
Similar to the degeneracy $\delta$ of the graph, $\tau$ also measures the density of a graph, say, the larger the value of $\tau$, the denser the graph. However, $\tau$ is always smaller than $\delta$ as it imposes stricter constraint on connections (i.e., the two vertices of every edge have at least $\tau$ common neighbors v.s. every vertex has at least $\delta$ neighbors). 
\underline{Theoretically}, for a graph with $n$ vertices, the gap between $\delta$ and $\tau$ can be as large as $n/2$.
To see this, consider a complete bipartite graph with $p$ vertices on each side, where $p$ is a positive integer. For this graph, we have $\delta = p$ since each vertex has exactly $p$ neighbors and $\tau = 0$ since the two vertices of each edge have no common neighbors.
\underline{Practically}, the ratio $\tau/\delta$ is below 0.8 for many real-world graphs. For example, we have collected the statistics of $\tau/\delta$ on 139 more real-world graphs \cite{RealGraphs} and found that the ratio is below 0.8 for the majority of the graphs (105 out of 139). 
}

\smallskip
\noindent\textbf{Remark.} 
(1) We note that another option of designing \texttt{EBBkC-T} is to 
{\chengB compute the truss-based edge ordering for each individual branch and use it for branching at the branch.}
However, it {\chengB would} introduce additional time cost 
{\chengB without achieving} better theoretical time complexity. Thus, we {\chengB choose not to adopt this option.} 
{\kaixin
(2) It is worthy noting that for a truss-based edge ordering, {\chengB there} does not always exist a vertex ordering such that the instance of \texttt{VBBkC} {\chengB with the vertex ordering} is equivalent to the instance of \texttt{EBBkC-T}. {\chengB We include a counter-example in the technical report~\cite{wang2024technical} for illustration.}
}


\subsection{\texttt{EBBkC-C}: {\cheng \texttt{EBBkC} with the} Color-based Edge Ordering}
\label{subsec:EBBkC-C}

{\kaixin 
{\cheng While the truss-based edge ordering helps to form sub-branches with small sizes at a branch, it does not offer much power to prune the formed sub-branches - all we can leverage for pruning are some size constraints (line 3 of Algorithm~\ref{alg:EBBkC}). On the other hand, some existing studies of \texttt{VBBkC} have successfully adopted color-based \emph{vertex} ordering for effective pruning~\cite{hasenplaugh2014ordering, yuan2017effective}. Specifically, consider a branch $B=(S,g,l)$. They first color the vertices in $g$ by iteratively assigning to an uncolored vertex $v$ the smallest color value taken from $\{1,2,\cdots\}$ that has not been assigned to $v$'s neighbours. {\chengB Let $c$ be the number of color values used by the coloring procedure.} They then obtain a vertex orderng by sorting the vertices in a non-increasing order based on the color values $\langle v_1,v_2,\cdots,v_{|V(g)|} \rangle$ (with ties broken by node ID), i.e., for $v_i$ and $v_j$ with $i<j$, we have $col(v_i)\geq col(v_j)$, where $col(\cdot)$ is the color value of a vertex. As a result, they prune the sub-branch $B_i=(S_i, g_i, l_i)$, which includes $v_i$ to $S$, if $col(v_i)<l$.}
%
The rationale is that since all vertices in $g_i$ have their color values strictly smaller than $col(v_i)$ (according to Eq.~(\ref{eq:vertex-oriented-branching}) and the definition of the color-based vertex ordering), they do not have $l-1$ different color values, indicating $g_i$ does not contain any $(l-1)$-cliques, {\cheng and therefore} $B_i$ can be pruned.

\begin{algorithm}[t]
\small
\caption{{\cheng \texttt{EBBkC} with color-based edge ordering}: \texttt{EBBkC-C}}
\label{alg:EBBkC-C}
\KwIn{A graph $G=(V,E)$ and an integer $k\geq 3$}
\KwOut{All $k$-cliques within $G$}
\SetKwFunction{List}{\textsf{EBBkC-C\_Rec}}
\SetKwProg{Fn}{Procedure}{}{}
Conduct vertex coloring on $G$ and get $id(v)$ for each vertex in $V$\;
$\overrightarrow{G}\leftarrow (V,\overrightarrow{E})$ where $\overrightarrow{E}=\{u\rightarrow v \mid (u,v)\in E \land id(u)<id(v)\}$\;
\List{$\emptyset,\overrightarrow{G}, k$}\;

\Fn{\List{$S, \overrightarrow{g}, l$}} {
    Conduct Pruning and Termination (line 3-9 of Algorithm~\ref{alg:EBBkC})\; 
    \tcc{Branching when $l\geq 3$}
    \For{each edge $u\rightarrow v$ in $E(\overrightarrow{g})$}
    {
        $S'\leftarrow S\cup \{u,v\}$ and $\overrightarrow{g}'\leftarrow \overrightarrow{g}[N^+(\{u,v\},\overrightarrow{g})]$\;
        \lIf{either of the rules of pruning applies}{\textbf{continue}}
        \List{$S', \overrightarrow{g}', l-2$}\;
    }
}
\end{algorithm}



Recall that in Section~\ref{subsec:motivation}, for an instance of \texttt{VBBkC} with a vertex ordering, there {\cheng would} exist an edge ordering such that the corresponding \texttt{EBBkC} instance {\cheng based on the edge ordering} is equivalent to the \texttt{VBBkC} instance. 
{\cheng Motivated by this, we propose to adopt the edge ordering that corresponds to the color-based vertex ordering, which we call \emph{color-based edge ordering}, for our \texttt{EBBkC}. One immediate benefit is that it would naturally inherit the pruning power of the color-based vertex ordering, which has been demonstrated for \texttt{VBBkC}~\cite{hasenplaugh2014ordering, yuan2017effective}. Furthermore, it would introduce new opportunities for pruning, compared with existing \texttt{VBBkC} with the color-based vertex ordering, since it can leverage the two {\chengB vertices incident to} an edge collectively (instead of a single vertex twice as in \texttt{VBBkC}) for designing new pruning rules, which we explain next.}


}

\if 0
{\cheng \textbf{***Since the color ordering has been used for VBBkC, we'd better start with color ordering for VBBkC, then discuss the adaption for EBBkC, and then present why and how we can do better for VBBkC. Currently, we directly present our techniques - in fact the intuition presented in the first paragraph should not be presented as a new result.***}}
Consider a branch $B=(S,g,l)$, we can color the vertices of graph instance $g$ such that no two adjacent vertices are of the same color, which corresponds to the \emph{vertex coloring} problem. We observe that \emph{there would be no $k$-clique to be listed within $B$ if the vertices in $g$ have less than $l$ distinct colors}, which can be safely pruned. The reason is that every two vertices in a clique are connected with an edge and they must have different colors after coloring. 
%
To prune as many branches as possible, we design a color-based edge ordering and some additional pruning rules.
\fi

\begin{figure}[t]
\vspace{-5mm}
\subfigure[An example graph $G$.]{
    \label{subfig:g2}
    \includegraphics[width=0.23\textwidth]{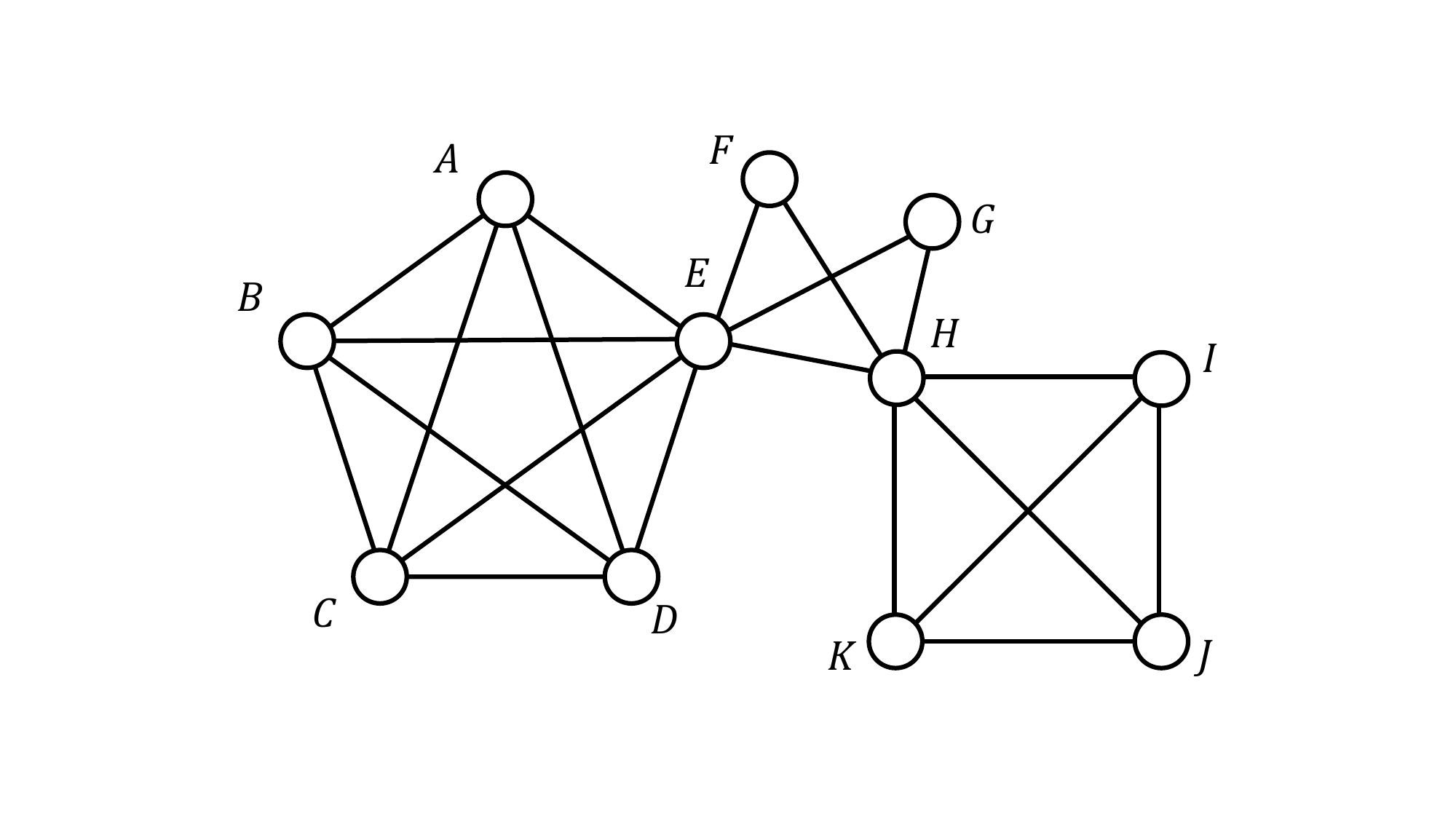}}
\subfigure[The DAG built upon $G$ {\chengB (the numbers inside circles are the color values)}.]{
    \label{subfig:color-dag}
    \includegraphics[width=0.23\textwidth]{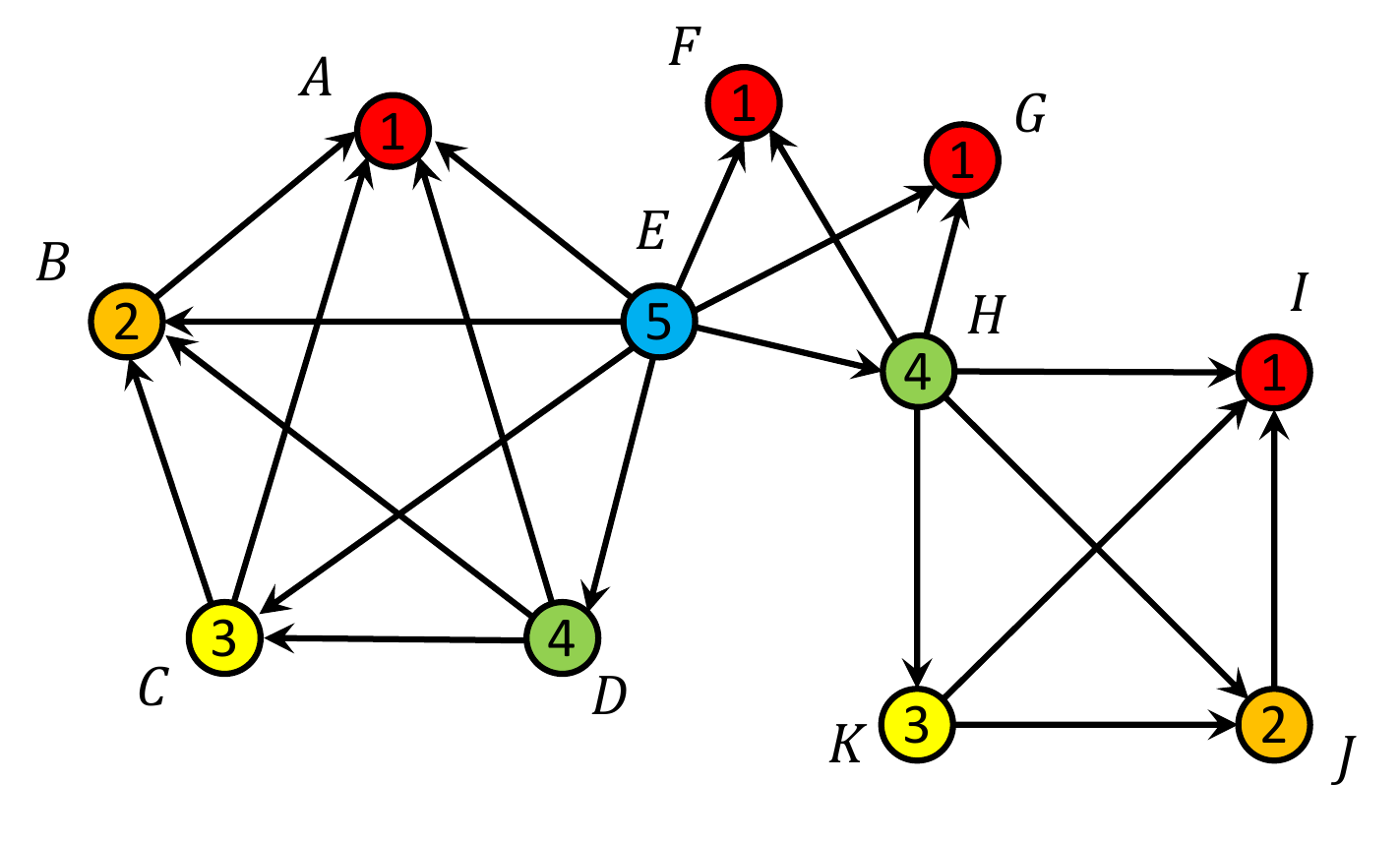}}
\vspace{-3mm}
\caption{Color-based edge ordering and pruning rules.}
\label{fig:example2}
\vspace{-5mm}
\end{figure}

\smallskip
\noindent\textbf{Color-based edge ordering and pruning rules.} 
Consider the branch $B=(S,g,l)$. 
We first color the graph $g$ {\cheng using} the graph coloring technique~\cite{hasenplaugh2014ordering, yuan2017effective} and obtain the color-based vertex ordering $\langle v_1,v_2,\cdots,v_{|V(g)|} \rangle$ such that $col(v_i)\geq col(v_j)$ for $i<j$. 
%
Then for each vertex $u\in V(g)$, let $id(u)$ be the position of a vertex $u$ in the ordering. 
{\kaixin{For each edge $e=(u,v)\in E(g)$ {\cheng with $id(u)<id(v)$}, we define $str(e)$ 
as a string,
{\cheng which is the concatenation of}
$id(u)$ and $id(v)$ (i.e., $str(e)=$`$id(u)$+$id(v)$').}} 
Finally, we {\cheng define} the color-based edge ordering 
{\cheng as}
the alphabetical ordering {\cheng based on} $str(e)$ for {\cheng edges} $e\in E(g)$. {\cheng That is}, one edge $e = (u,v)$ 
{\cheng with $id(u)<id(v)$}
comes before {\chengB another edge} $e' = (u', v')$ {\cheng with $id(u')<id(v')$}
if (1) $id(u)<id(u')$ or (2) $id(u)=id(u')$ and $id(v)<id(v')$.
%

Consider a branch $B=(S,g,l)$ and 
{\cheng and a sub-branch $B_i$ that includes edge $e_i=(u,v)$ with $col(u)>col(v)$ to $S$.}
We can apply the following two pruning rules.
%
\begin{itemize}[leftmargin=*]
    \item {\cheng \textbf{Rule (1).}} If $col(u)<l$ or $col(v)<l-1$, we prune sub-branch $B_i$;  
    \item {\cheng \textbf{Rule (2).}} If the vertices in produced sub-branch have less than $l-2$ distinct color values, we prune sub-branch $B_i$.  
\end{itemize}

{\kaixin 
We note that Rule (1) is 
{\cheng equivalent to the pruning that \texttt{VBBkC} with the color-based vertex ordering conducts at two branching steps of including $u$ and $v$~\cite{li2020ordering}.
Rule (2) is a new one, which applies only in our \texttt{EBBkC} framework with the color-based edge ordering. In addition, Rule (2) is more powerful than Rule (1) in the sense that if Rule (2) applies, then Rule (1) applies, but not vice versa.}
}
{\kaixin The reason is that the color values of $u$ and $v$ are sometimes much larger than the number of distinct color values in the sub-branch since both color values consider the information of their own neighbors instead of their common neighbors. 
{\cheng For illustration}, consider the example in Figure~\ref{fig:example2} and assume that we aim to list 4-cliques (i.e., $k=4$). We focus on the edge ${EH}$. It is easy to check that 
{\cheng Rule (1) does not apply, but Rule (2) applies since the vertices in the produced sub-branch, i.e., $F$ and $G$, have only one color value.
Given that it takes $O(1)$ time to check if Rule (1) applies and $O(|V(g_i)|)$ time to check if Rule (2) applies, our strategy is to check Rule (1) first, and if it does not apply, we further check Rule (2).}
}
{We remark that Rule (2) can be adapted to some \texttt{VBBkC} based algorithms including \texttt{DDegCol}~\cite{li2020ordering} and \texttt{BitCol}~\cite{yuan2022efficient}, that is, we prune sub-branch $B_i$, if the vertices inside have less than $l-1$ distinct color values.
}

%

\smallskip
\noindent\textbf{The \texttt{EBBkC-C} algorithm.} 
{\cheng When the color-based edge ordering is adopted in \texttt{EBBkC}, we call the resulting algorithm \texttt{EBBkC-C}.}
The pseudo-code of \texttt{EBBkC-C} is presented in Algorithm~\ref{alg:EBBkC-C}. {With the color-based edge ordering, a directed acyclic graph (DAG), denoted by $\overrightarrow{G}$, is built for efficiently conducting the branching steps \cite{li2020ordering, yuan2022efficient}. Specifically, $\overrightarrow{G}$ is built upon $G$ by orienting each edge $(u,v)$ in $E$ with $id(u)<id(v)$ 
{\cheng from $u$ to $v$}
(line 2). 
{\chengB For illustration, consider Figure~\ref{fig:example2}.}
Given $V_{sub}\subseteq V$, let $N^+(V_{sub},\overrightarrow{g})$ be the common out-neighbours of the vertices in $V_{sub}$ in $\overrightarrow{g}$. Then, given a branch $B=(S,\overrightarrow{g},l)$\footnote{Note that $\overrightarrow{g}$ is a subgraph of $\overrightarrow{G}$, whose edge {\chengB orientations are the same as those} of $\overrightarrow{G}$.}, the edge-oriented branching at a branch with the color-based ordering can be easily conducted by calculating the common out-neighbors of the {\chengB vertices incident to} an edge in $\overrightarrow{g}$ (line 7). Then, we will prune the produced branch if either of the above two 
{\cheng rules applies}
(line 8). 
}


\smallskip
\noindent\textbf{Complexity.} The time cost of \texttt{EBBkC-C} is {$O(k m  (\Delta/2)^{k-2})$}, {\cheng which we show in the following theorem.}
In practice, \texttt{EBBkC-C} runs faster than \texttt{EBBkC-T}. 

\begin{theorem}
\label{theo:ebbkc-c}
    Given a graph $G=(V,E)$ and an integer $k\geq 3$, the time and {space complexities of \texttt{EBBkC-C} are} {$O(k m  (\Delta/2)^{k-2})$} and {$O(m+n)$}, respectively, where $\Delta$ is the maximum degree of $G$.
\end{theorem}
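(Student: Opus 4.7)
The plan is to mirror the structure of the proof of Theorem~\ref{theo:ebbkc-t}, replacing the parameter $\tau$ by $\Delta$ throughout. First I would account for the preprocessing in lines 1--2 of Algorithm~\ref{alg:EBBkC-C}: greedy vertex coloring runs in $O(m+n)$ time, and the construction of the DAG $\overrightarrow{G}$ is also $O(m+n)$, so both are absorbed into the overall bound. The analysis therefore reduces to bounding the cost of the recursive procedure \textsf{EBBkC-C\_Rec}.

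Next I would establish the structural bound that drives the analysis: for any branch $(S,\overrightarrow{g},l)$ encountered during the recursion and any edge $u\rightarrow v\in E(\overrightarrow{g})$, the produced sub-branch $\overrightarrow{g}'=\overrightarrow{g}[N^+(\{u,v\},\overrightarrow{g})]$ satisfies $|V(\overrightarrow{g}')|\leq\min(d^+(u,\overrightarrow{g}),d^+(v,\overrightarrow{g}))\leq\Delta$, since edge orientations inherited from $\overrightarrow{G}$ ensure that out-degrees in any induced sub-DAG are bounded by the underlying graph degrees. This plays the role that the inequality $|V(g_i)|\leq\tau$ played in the proof of Theorem~\ref{theo:ebbkc-t}.

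Then I would set up exactly the same recurrence form as Eq.~(\ref{eq:ebbkc-recurrence-main}), namely $T(\overrightarrow{g},1)=O(k|V(\overrightarrow{g})|)$, $T(\overrightarrow{g},2)=O(k|E(\overrightarrow{g})|)$, and $T(\overrightarrow{g},l)\leq\sum_{e\in E(\overrightarrow{g})}\bigl(T(\overrightarrow{g}',l-2)+T'(\overrightarrow{g}')\bigr)$ for $3\leq l\leq k-2$, where the construction cost $T'(\overrightarrow{g}')$ at line 7 of Algorithm~\ref{alg:EBBkC-C} is $O(\Delta)$ when $l=3$ (only $V(\overrightarrow{g}')$ is needed for the base case of listing single vertices) and $O(\Delta^2)$ when $l>3$ (since $\overrightarrow{g}'$ has at most $\binom{\Delta}{2}$ edges to enumerate). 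Solving the recurrence by induction on $l$ then yields $O(km(\Delta/2)^{k-2})$, and since the two pruning rules at line 8 only discard sub-branches they preserve the worst-case bound.

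The main obstacle is establishing the edge-sum inequality $\sum_{e\in E(\overrightarrow{g})}|E(\overrightarrow{g}')|\leq(\Delta^2/4)\cdot|E(\overrightarrow{g})|$ (with the constant $1/4$ replaced by $1/2$ at the top-level call), which is what turns the naive per-edge factor $\binom{\Delta}{2}$ into $(\Delta/2)^{l-2}$ per recursion level; this requires a double counting across (parent edge, sub-branch edge) pairs rather than a per-edge bound, mirroring the argument deferred to the technical report in Theorem~\ref{theo:ebbkc-t}. For the space complexity, storing the DAG costs $O(m+n)$, the auxiliary structures needed at each recursion level are each $O(m+n)$ and can be reused across sibling calls, and the recursion depth is at most $\lceil k/2\rceil$, which together give $O(m+n)$ overall.
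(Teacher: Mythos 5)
Your proposal is correct and follows essentially the same route as the paper, which simply observes that the argument for Theorem~\ref{theo:ebbkc-t} carries over with the produced-instance size bound $\tau$ replaced by $\Delta$ (via the common-out-neighbor bound $|V(\overrightarrow{g}')|\le\min(d^+(u),d^+(v))\le\Delta$) and with the $O(m+n)$ coloring/DAG preprocessing dominated by the recursive cost. Your filled-in details, including the recurrence, the $O(\Delta)$ versus $O(\Delta^2)$ construction costs, and the edge-sum inequality needed for the $1/4$ constant, match the paper's deferred technical-report argument.
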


\begin{proof}
The proof is similar to that of Theorem~\ref{theo:ebbkc-t}. The difference is that the largest size of the produced graph instance in \texttt{EBBkC-C} can only be bounded by $\Delta$ {\chengC and the time of generating the color-based edge ordering is $O(m)$, which is dominated by the cost of the recursive procedure}. 
\end{proof}


\subsection{\texttt{EBBkC-H}: {\cheng \texttt{EBBkC} with} Hybrid Edge Ordering}
\label{subsec:EBBkC-H}

Among \texttt{EBBkC-T} and \texttt{EBBkC-C}, the former has a better theoretical time complexity and the latter enables effective pruning in practice. 
{
Inspired by the hybrid vertex ordering used for \texttt{DDegCol}~\cite{li2020ordering} and \texttt{BitCol}~\cite{yuan2022efficient}, we aim to achieve the merits of both algorithms by adopting both the truss-based edge ordering (used by \texttt{EBBkC-T}) and the color-based edge ordering (used by \texttt{EBBkC-C}) in the \texttt{EBBkC} framework.}
%
%
%
Specifically, we first apply the truss-based edge ordering for the branching step at the initial branch $(\emptyset,G,k)$. Then, for the following branches, we adopt the color-based ordering for their branching steps. We call this algorithm based on the hybrid edge ordering \texttt{EBBkC-H}. The pseudo-code of \texttt{EBBkC-H} is presented in Algorithm~\ref{alg:EBBkC-H} and the implementations of branching steps are similar to those in \texttt{EBBkC-T} and \texttt{EBBkC-C}. 
{\cheng The size of a produced problem instance for \texttt{EBBkC-H} is bounded by $\tau$ (due to the branching at the initial branch based on the truss-based edge ordering),} and thus \texttt{EBBkC-H} achieves the same time complexity as \texttt{EBBkC-T}.
{\cheng In addition, \texttt{EBBkC-H} enables effective pruning for all branches except for the initial branch (since color-based edge coloring is adopted at these branches), and thus it runs fast in practice as \texttt{EBBkC-C} does.}

\begin{algorithm}[t]
\small
\caption{{\chengB \texttt{EBBkC} with hybrid edge ordering:}
\texttt{EBBkC-H}}
\label{alg:EBBkC-H}
\KwIn{A graph $G=(V,E)$ and an integer $k\geq 3$}
\KwOut{All $k$-cliques within $G$}
\SetKwFunction{List}{\textsf{EBBkC-C\_Rec}}
\SetKwProg{Fn}{Procedure}{}{}
\tcc{Initialization and branching at $(\emptyset,G,k)$}
$\pi_\tau(G)\leftarrow$ the truss-based ordering of edges in $G$\;
\For{each edge $e_i\in E(G)$ following $\pi_\tau(G)$}{
    Obtain $S_i$ and $g_i$ according to Eq.~(\ref{eq:edge-oriented-branching})\;
    Do vertex coloring on $g_i$ and get $id(v)$ for each vertex in $V(g_i)$\;
    $\overrightarrow{g_i}\leftarrow(V(g_i),\{u\rightarrow v \mid (u,v)\in E(g_i)  \land id(u)<id(v)\})$\;
    \List{$S_i, \overrightarrow{g_i}, k-2$}\;    
}
\end{algorithm}

\smallskip
\noindent\textbf{Complexity.} The time complexity of \texttt{EBBkC-H} is {\Revise $O(\delta m + k m  (\tau/2)^{k-2})$}, which is the same as that of \texttt{EBBkC-T}.

\begin{theorem}
\label{theo:ebbkc-h}
     Given a graph $G=(V,E)$ and an integer $k\geq 3$, the time and {space complexities of \texttt{EBBkC-H} are} {\Revise $O(\delta m + k m  (\tau/2)^{k-2})$} and {$O(m+n)$}, respectively, where $\tau=\tau(G)$ is strictly smaller than the degeneracy $\delta$ of $G$.
\end{theorem}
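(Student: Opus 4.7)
The plan is to show that \texttt{EBBkC-H} inherits the time complexity of \texttt{EBBkC-T} by exploiting a simple monotonicity argument: the initial branching on $G$ uses the truss-based edge ordering, so every sub-branch $(S_i, g_i, k-2)$ created at the root has $|V(g_i)| \le \tau$ by the analysis underlying Theorem~\ref{theo:ebbkc-t}. Since all further branches are induced subgraphs of some $g_i$, every graph instance encountered in the recursion satisfies the same $\tau$-bound on its vertex count. This removes the $\Delta$ factor that appears in Theorem~\ref{theo:ebbkc-c} and yields the same asymptotic bound as \texttt{EBBkC-T}, while the color-based ordering and its pruning rules, being used only for sub-branches, affect only constants.

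Concretely, I would split the running time into three parts and bound each. First, computing $\pi_\tau(G)$ costs $O(\delta m)$ by \cite{che2020accelerating}. Second, the initial branching step iterates over the $m$ edges of $G$, constructs each $g_i$ by the same set-intersection routine as in \texttt{EBBkC-T}, and then computes a coloring together with the DAG orientation $\overrightarrow{g_i}$; each coloring is done in time $O(|V(g_i)|+|E(g_i)|) = O(\tau^2)$ so the total initialization cost of the sub-branches is $O(\tau^2 m)$, which is absorbed by $O(\delta m)$ when $k=3$ and by $O(k m (\tau/2)^{k-2})$ when $k \ge 4$. Third, the recursive listing procedure at each sub-branch is a \texttt{EBBkC-C}-style execution, but restricted to graphs with at most $\tau$ vertices, so I would re-run the recurrence used in the proof of Theorem~\ref{theo:ebbkc-t}, namely Eq.~(\ref{eq:ebbkc-recurrence-main}) together with the key inequality $\sum_{e\in E(g)} |E(g')| < \frac{\tau^2}{4}|E(g)|$, and conclude by induction on $l$ that the recursive cost is $O(k m (\tau/2)^{k-2})$. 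Summing the three parts gives $O(\delta m + k m (\tau/2)^{k-2})$. For space, the recursion depth is $O(k)$, each level stores a graph on at most $\tau$ vertices plus the partial clique $S$, and the original graph plus the auxiliary sets $VSet(\cdot), ESet(\cdot)$ take $O(m+n)$; adding these gives the claimed $O(m+n)$ bound.

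The main obstacle is justifying that swapping in the color-based edge ordering at non-root branches does not inflate the worst-case bound. The structural inequality on $\sum_{e\in E(g)}|E(g')|$ from Theorem~\ref{theo:ebbkc-t} depends only on the vertex bound $\tau$ and the branching pattern (every sub-branch $g'$ is induced on $N(V(e), \overline{g})$ for some edge $e$), not on the specific ordering of edges, so it transfers verbatim. What requires some care is the cost of evaluating the color-based pruning rules and constructing $\overrightarrow{g}'$ via common-out-neighbor intersections: Rule~(1) is $O(1)$ per sub-branch, Rule~(2) is $O(|V(g')|) = O(\tau)$, and the sub-branch construction is $O(\tau^2)$, each of which is already absorbed into the $T'(g')$ bound in Eq.~(\ref{eq:close-T-prime-main}). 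The strict inequality $\tau < \delta$ is inherited directly from Lemma~\ref{lemma:comparison}, since $\tau$ is defined from $G$ alone and is unchanged by the choice of edge ordering used deeper in the recursion.
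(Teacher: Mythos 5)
Your proposal is correct and follows essentially the same route as the paper, whose proof simply observes that the truss-based ordering at the initial branch bounds every produced graph instance by $\tau$ and then reuses the analysis of Theorem~\ref{theo:ebbkc-t}; your write-up supplies the details the paper leaves implicit. One small slip: for $k=3$ the $O(\tau^2 m)$ initialization cost is not absorbed by $O(\delta m)$ (since $\tau<\delta$ does not imply $\tau^2\le\delta$), but in that case each root sub-branch has $l=1$ and terminates immediately, so only $V(g_i)$ (not $E(g_i)$ or a coloring) is needed, costing $O(\tau m)\subseteq O(\delta m)$.
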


\begin{proof}
The proof is similar to that of Theorem~\ref{theo:ebbkc-t}. Since the largest size of the produced graph instance in \texttt{EBBkC-H} can also be bounded by $\tau$, it has the same worst-case time complexity as that of \texttt{EBBkC-T}.
\end{proof}

\subsection{{Other Potential Applications of \texttt{EBBkC}}}
\label{subsec:discussion}
{
Our \texttt{EBBkC} framework can potentially be applied to other problems than the $k$-clique listing problem, which we discuss as follows.
\underline{First}, our framework can be easily adapted to solve other clique mining tasks, including maximal clique enumeration (MCE) \cite{eppstein2010listing, naude2016refined, tomita2006worst,jin2022fast}, maximum clique search (MCS) \cite{chang2019efficient, chang2020efficient} and diversified top-$k$ clique search (DCS) \cite{yuan2016diversified, wu2020local}. The rationale is that our framework can explore all possible cliques in an input graph and thus can output only the desired cliques that satisfy some properties (e.g., maximality and diversity) by filtering out others.
%
%
\underline{Second}, our framework can be potentially extended to mining other types of \textit{connected} dense subgraphs, e.g., connected $k$-plex. This is because our framework can be used to explore all possible subsets of edges by recursively including an edge, and thus the induced subgraphs will cover all possible connected dense subgraphs.
\underline{Third}, there are some potential benefits when adapting our framework to the above tasks. As discussed in Section~\ref{subsec:motivation}, the edge-oriented branching can provide more information (i.e., an edge involving two vertices instead of one vertex) towards designing more effective pivot techniques and/or pruning rules than the existing vertex-oriented branching.

}

\section{Early Termination Technique}
\label{sec:early-termination}

\begin{figure}[t]
\vspace{-5mm}
\subfigure[An example 2-plex graph $g$.]{
    \label{subfig:2-plex}
    \includegraphics[width=0.20\textwidth]{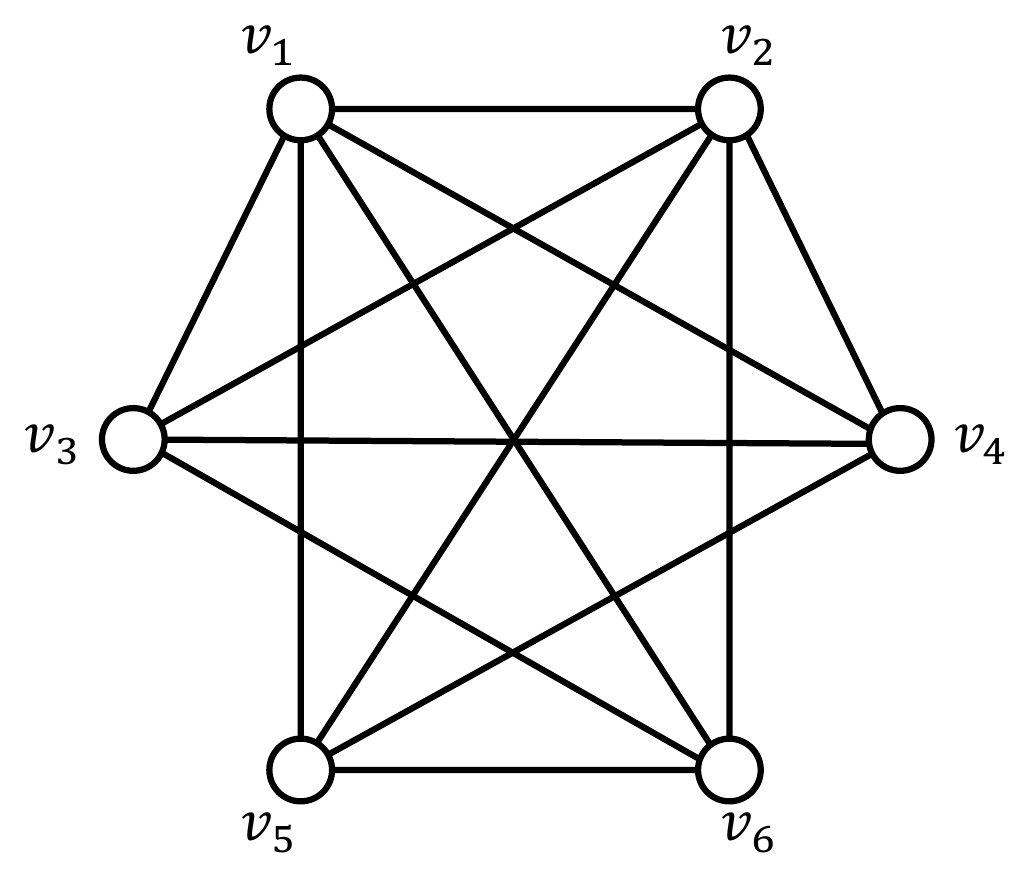}}
\subfigure[Inverse graph $g_{inv}$ of 2-plex $g$.]{
    \label{subfig:2-plex-inv}
    \includegraphics[width=0.20\textwidth]{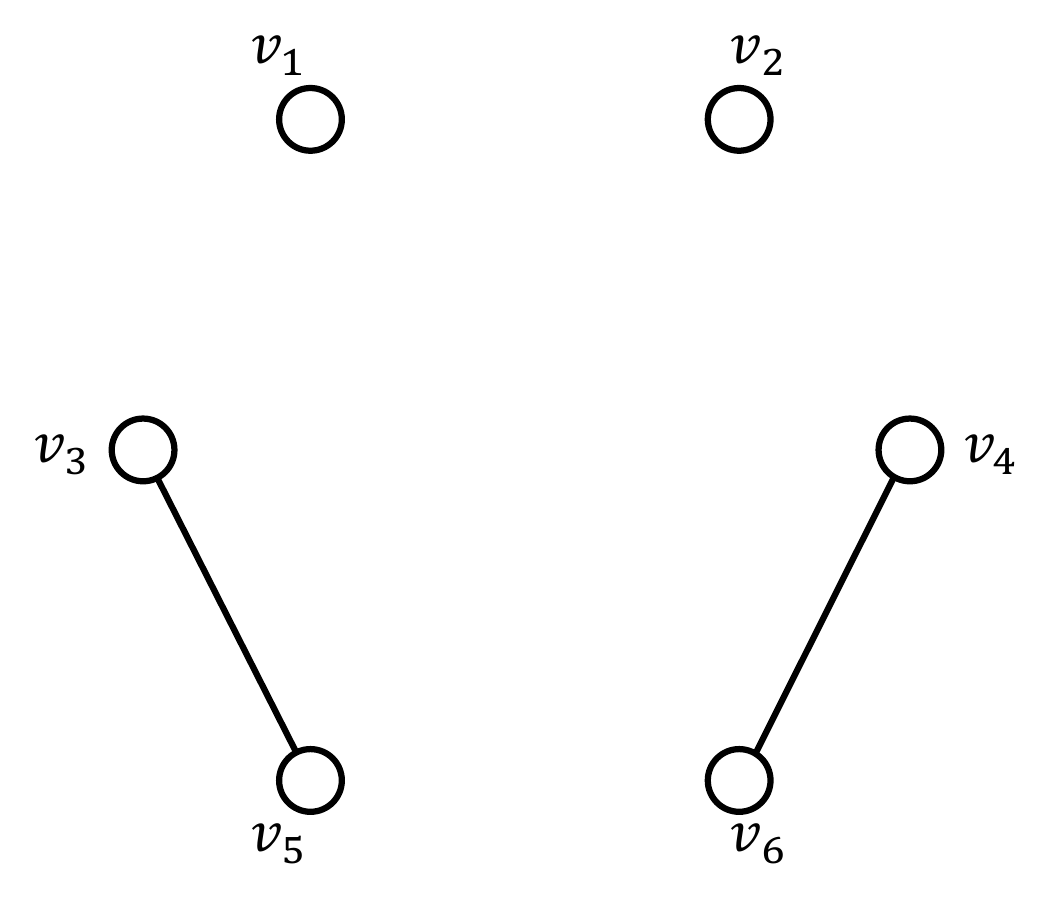}}
\vspace{-3mm}
\caption{{Examples of a 2-plex and its inverse graph.}} 
\label{fig:t-plex}
\vspace{-5mm}
\end{figure}

{\yui
{\cheng Suppose that we are at a branch $B = (S, g, l)$, where graph $g$ is dense (e.g., $g$ is a clique or nearly a clique), and the goal is to list $l$-cliques in $g$ (and merge them with $S$). Based on the \texttt{EBBkC} framework, we would conduct branching at branch $B$ and form sub-branches. Nevertheless, since $g$ is dense, there would be many sub-branches to be formed (recall that we form $|E(g)|$ sub-branches), which would be costly. Fortunately, for such a branch, we can list the $l$-cliques efficiently without continuing the recursive branching process of \texttt{EBBkC}, i.e., we can \emph{early terminate} the branching process. Specifically, we have the following two observations.}
\begin{itemize}[leftmargin=*]
    \item 
    If $g$ is a clique or a $2$-plex (recall that a $t$-plex is a graph where each vertex inside has at most $t$ non-neighbors including itself,
    we can list $l$-cliques in $g$ efficiently in a combinatorial manner. For the former case, we can directly enumerate all possible sets of $l$ vertices in $g$. For the latter case, we can do similarly, but in a bit more complex manner (details will be discussed in Section~\ref{subsec:2-plex}).
    \item 
    If $g$ a $t$-plex (with $t\ge 3$), the branching procedure based on $g$ can be converted to that on its inverse graph $g_{inv}$ (recall that $g_{inv}$ has the same set of vertices as $g$, with an edge between two vertices in $g_{inv}$ if and only if they are disconnected from each other in $g$), which is sparse, and the converted procedure would run faster (details will be discussed in Section~\ref{subsec:3-plex}).
\end{itemize}

{We determine whether $g$ is a $t$-plex for some $t$ by checking the minimum degree of a vertex in $g$ - if it is no less than $|V(g)|-t$, $g$ is a $t$-plex; otherwise, $g$ is not a $t$-plex.
This can be done while constructing the corresponding branch $B=(S,g,l)$ in $O(V(g))$ time.}


\subsection{{\cheng Listing $k$-Cliques} from 2-Plex in Nearly Optimal Time}
\label{subsec:2-plex}

Consider a branch $B=(S,g,l)$ with $g$ as a 2-plex.
{\cheng The} procedure for listing $k$-cliques inside, called \texttt{kC2Plex}, utilizes the \emph{combinatorial technique}. The rationale behind is that listing $k$-cliques from a large clique can be solved in the optimal time by directly enumerating all possible combinations of $k$ vertices. 

{\cheng Specifically,}
we first partition $V(g)$ into three disjoint sets, namely $F$, $L$ and $R$, each of which induces a clique. 
%
This can be done in two steps.
\underline{First}, it partitions $V(g)$ into two disjoint parts: one containing those vertices that are adjacent to all other vertices in $V(g)$
(this is $F$) and the other containing the remaining vertices that are not adjacent to two vertices including itself (this is $L\cup R$). Note that $L\cup R$ always involves an even number of vertices and can be regarded as a collection of pairs of vertices $\{u,v\}$ such that $u$ is disconnected from $v$. \underline{Second}, it further partitions $L\cup R$ into two parts by breaking each pair in $L\cup R$, that is, $L$ and $R$ 
contain the first and the second vertex in each pair, respectively. As a result, every vertex in one set connects all others within the same set  and is disconnected from one vertex from the other set. Note that the partition of $L\cup R$ is not unique and can be an arbitrary one. 
%
{For illustration, consider the example in Figure~\ref{subfig:2-plex}~and~\ref{subfig:2-plex-inv}. The vertices $v_1$ and $v_2$ are adjacent to all other vertices. Thus, $F=\{v_1,v_2\}$ and $L\cup R=\{v_3, v_4, v_5, v_6\}$. One possible partition is $L=\{v_3,v_4\}$ and $R=\{v_5, v_6\}$.}
Below, we elaborate on how the partition $V(g)=F\cup L \cup R$ helps to speedup the $k$-clique listing.

{\cheng Recall}
that the set of $k$-cliques in $B$ can be listed by finding all $l$-cliques in $g$ and merging each of them with $S$. Consider a $l$-clique in $g$. Based on the partition $V(g)=F\cup L \cup R$, it consists of three disjoint subsets of $F$, $L$ and $R$, namely $F_{sub}$, $L_{sub}$ and $R_{sub}$, each of which induces a small clique. Therefore, all $l$-cliques with the form of $F_{sub}\cup L_{sub}\cup R_{sub}$ can be found by iteratively enumerating all possible $|F_{sub}|$-combinations over $F$, $|L_{sub}|$-combinations over $L$ and $|R_{sub}|$-combinations over $R$ such that $|F_{sub}|+|L_{sub}|+|R_{sub}|=l$.

\begin{algorithm}[t]
\yui
\small
\caption{List {\cheng $k$-cliques} in a 2-plex: \texttt{kC2Plex}}
\label{alg:kC2Plex}
\KwIn{A branch $(S,g,l)$ with $g$ corresponding to a 2-plex}
\KwOut{All $k$-cliques within $(S,g,l)$}
Partition $V(g)$ into three disjoint sets $F$, $L$ and $R$\;
\lIf{$|F|+|L|<l$}{\Return}
\For{$c_1\in [\max\{0,l-|L|\}, \min\{l,|F|\}]$ and each $c_1$-combination $F_{sub}$ over $F$}{
    \For{$c_2\in [0, \min\{l-c_1,|L|\}]$ and each $c_2$-combination $L_{sub}$ over $L$}{
        \For{$c_3 \gets l-c_1-c_2$ and each $c_3$-combination $R_{sub}$ over $R\setminus \overline{N}(L_{sub},g)$}{
            \textbf{Output} a $k$-clique $S\cup F_{sub}\cup  L_{sub}\cup R_{sub}$;
        }
    }
}
\end{algorithm}

We present the pseudo-code of \texttt{kC2Plex} in Algorithm~\ref{alg:kC2Plex}. In particular, the integers $c_1$, $c_2$ and $c_3$ are used to ensure the satisfaction of $|F_{sub}|+|L_{sub}|+|R_{sub}|=l$. Specifically, it first finds a $c_1$-clique $F_{sub}$ from $F$ and a $c_2$-clique $L_{sub}$ from $L$. Recall that every vertex in $L$ is disconnected from one vertex in $R$ and vice versa. Hence, it removes from $R$ those vertices that is disconnected from one vertex in $L_{sub}$, which we denote by $\overline{N}(L_{sub},g)$, and this can be done efficiently in $\Theta(|L_{sub}|)$ (as verified by Theorem~\ref{theo:2-plex}), and then finds a $c_3$-clique from $R\setminus \overline{N}(L_{sub},g)$. Besides, when $|F|+|L|<l$, it terminates the procedure since no $l$-clique will be found in $g$ (line 2).

\smallskip
\noindent\textbf{Time complexity.} We analyze the time complexity of 
\texttt{kC2Plex} as follows. 

\begin{theorem}
\label{theo:2-plex}
Given a branch $B=(S,g,l)$ with $g$ being a 2-plex, \texttt{kC2Plex} lists all $k$-cliques within $B$ in $O(|E(g)|+k\cdot c(g,l))$ {\cheng time} where $c(g,l)$ is the number of $l$-cliques in $g$.
\end{theorem}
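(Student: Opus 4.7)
The plan is to split the argument into correctness of the enumeration and a bound on its cost, exploiting the special structure of a 2-plex throughout.

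For correctness, I would first leverage the 2-plex property: since each vertex has at most 2 non-neighbors including itself, every vertex in $L \cup R$ has exactly one non-neighbor in $g$, namely its pair partner on the opposite side, which in particular gives $|L| = |R|$. Any $l$-clique $C \subseteq V(g)$ then decomposes uniquely as $(C \cap F,\, C \cap L,\, C \cap R)$, and each piece is a clique because $F$, $L$, and $R$ each induce a clique. Since every vertex in $F$ is adjacent to every other vertex of $g$, the only way $F_{sub} \cup L_{sub} \cup R_{sub}$ can fail to be a clique is to contain some pair $(u,u') \in L_{sub} \times R_{sub}$ that are partners, and removing $\overline{N}(L_{sub}, g)$ from $R$ in line 5 eliminates exactly those choices. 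Conversely, the three nested loops iterate over all triples $(c_1, c_2, c_3)$ with $c_1 + c_2 + c_3 = l$ in the prescribed ranges and over all size-appropriate subsets, so each $l$-clique of $g$ (and hence each $k$-clique $S \cup F_{sub} \cup L_{sub} \cup R_{sub}$) is enumerated exactly once.

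For the time bound, the preprocessing in line 1 can be done in $O(|E(g)|)$: compute degrees in $g$, place vertices of degree $|V(g)|-1$ into $F$, and for every other vertex scan its adjacency list once to identify its unique non-neighbor and pair them up into $L$ and $R$. The heart of the analysis is showing that the nested enumeration does no wasted work. Combining $|L| = |R|$ with the loop range $c_1 \geq \max\{0,\, l - |L|\}$ yields $|R| = |L| \geq l - c_1 \geq c_2 + c_3$, and because each $u \in L$ has at most one non-neighbor in $R$, $|\overline{N}(L_{sub}, g) \cap R| \leq |L_{sub}| = c_2$; therefore $|R \setminus \overline{N}(L_{sub}, g)| \geq c_3$ holds whenever the innermost loop is entered. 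Consequently every triple $(F_{sub}, L_{sub}, R_{sub})$ generated yields a distinct $l$-clique, so the total number of innermost iterations is exactly $c(g, l)$. Each emission costs $O(k)$; the per-$L_{sub}$ work $O(|L_{sub}|) = O(k)$ for materialising $R \setminus \overline{N}(L_{sub}, g)$ is incurred only when at least one $R_{sub}$ is enumerated and hence amortises against an output. Summing gives $O(k \cdot c(g, l))$ for the enumeration plus $O(|E(g)|)$ for preprocessing.

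The main obstacle I expect is the no-waste claim, which is where the 2-plex structure enters essentially: without the identity $|L| = |R|$ and the bound $|\overline{N}(L_{sub}, g) \cap R| \leq |L_{sub}|$, the nested loop could explore empty branches in which $R \setminus \overline{N}(L_{sub}, g)$ is too small to yield any $l$-clique, and the overhead would not amortise against $c(g, l)$. The remaining pieces, namely uniqueness of the decomposition, the $O(|E(g)|)$ cost of building the partition and partner pointers, and the per-emission $O(k)$ cost, are routine bookkeeping once this key structural observation is in place.
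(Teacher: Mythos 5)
Your proposal is correct and follows essentially the same route as the paper's proof: an $O(|E(g)|)$ degree-based partition into $F$, $L$, $R$, the observation that every innermost iteration emits exactly one $l$-clique (which the paper asserts and you justify via $|L|=|R|$ and the loop bound $c_1\ge l-|L|$ forcing $|R\setminus\overline{N}(L_{sub},g)|\ge c_3$), and an $O(|L_{sub}|)$ implementation of the set difference using the partner pairing between $L$ and $R$. Your writeup is in fact more explicit than the paper's sketch on the correctness and no-wasted-work points, but there is no substantive difference in approach.
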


\begin{proof}
Algorithm~\ref{alg:kC2Plex} takes $O(|E(g)|)$ for partitioning $V(g)$ by obtaining the degree of each vertex inside (line 1). For each round of lines 3-6, the algorithm can guarantee exactly one $k$-clique to be outputted at line 6 based on the settings of $c_1$, $c_2$ and $c_3$. Besides, the operation $R \setminus \overline {N}(L_{sub},g)$ only takes $\Theta(|L_{sub}|)$ time. Specifically, we (1) maintain two arrays $L=\{u_1,u_2,\cdots\}$ and $R=\{v_1,v_2,\cdots\}$ such that $u_i$ is not adjacent to $v_i$ for $1 \le i \le |L|$, and (2) reorder $R$ by switching $|L_{sub}|$ vertices with the same indices as those in $L_{sub}$ to the tail of $R$ which runs in $\Theta(|L_{sub}|)$ time, and take the first $|R|-|L_{sub}|$ vertices in $R$ as $R\setminus  \overline {N}(L_{sub},g)$. 
\end{proof}

\noindent\textbf{Remark.} We remark that \texttt{kC2Plex} achieves the \emph{input-output sensitive} time complexity since the time cost depends on both the size of input $|E(g)|$ and the number of $k$-cliques within the branch. 
Besides, we note that $O(k\cdot c(g,k))$ is the optimal time for listing $k$-cliques within the branch and thus \texttt{kC2Plex} only takes extra $O(|E(g)|)$ time, {\cheng i.e., \texttt{kC2Plex}} 
is \emph{nearly optimal}.

\subsection{{\cheng Listing $k$-Cliques} from $t$-Plex with $t\ge 3$ }
\label{subsec:3-plex}

\begin{algorithm}[t]
\small
\yui
\caption{List {\cheng $k$-cliques} in a $t$-plex ($t\ge 3$): \texttt{kCtPlex}}
\label{alg:k-plex}
\KwIn{A branch $(S,g,l)$ with $g$ corresponding to a $t$-plex}
\KwOut{All $k$-cliques within $(S,g,l)$}
\SetKwFunction{List}{\textsf{kCtPlex\_Rec}}
\SetKwProg{Fn}{Procedure}{}{}

Construct the inverse graph $g_{inv}$ of $g$\;
$I\gets$ set of vertices in $V(g_{inv})$ that  are disconnected from all others\;
\List{$S$, $V(g_{inv})\setminus I$, $l$}

\Fn{\List{$S'$, $C$, $l'$}} {

    {\kaixin
        \tcc{Termination when $l'=0$}
        \If{$l'=0$} {\textbf{Output} a $k$-clique $S'$\; \Return\;}
    }

    \tcc{Choose all rest $l'$ vertices from $I$}
    \If {$|I|\geq l'$} {
    \For{each $l'$-combination $I_{sub}$ over $I$}{
            \textbf{Output} a $k$-clique $S'\cup I_{sub}$\;
        }
        }
    
    \tcc{Choose at least one vertex from $C$}
    \For{each $v_i\in C$}{
        Create a branch $(S_i,C_i,l_i)$ based on Eq.~(\ref{eq:kCtPlex})\;

        \lIf{$|C_i|+|I|\geq l_i$}{\List{$S_i$, $C_i$, $l_i$}}
    }
    
}
\end{algorithm}

Consider a branch $B=(S,g,l)$ with $g$ as a $t$-plex {\chengB with $t\ge 3$}.
{\cheng The} 
procedure for listing $k$-clique inside, called \texttt{kCtPlex}, differs in the way of branching (i.e., forming new branches). 
%
{\cheng Specifically, it branches} based on {\cheng the} inverse graph $g_{inv}$
{\cheng instead of $g$}. 
{\cheng The rationale is that since $g$ is a $t$-plex and tends to be dense, its inverse graph $g_{inv}$ would be sparse. As a result, branching on $g_{inv}$ would} run empirically faster. Below, we give the details.

Specifically, it maintains the inverse graph $g_{inv}$ of $g$ and represents a branch $(S,g,l)$ by the new form of $(S,C,l)$ where $C$ is the set of vertices in $g$, i.e., $C=V(g)$. We note that the new form omits the information of edges in $g$, which is instead stored in $g_{inv}$. Consider the branching step of \texttt{kCtPlex} at a branch $(S,C,l)$. Let $\langle v_1,v_2,\cdots,v_{|C|} \rangle$ be an arbitrary ordering of vertices in $C$. Then, the branching step would produce $|C|$ new sub-branches. The $i$-th branch, denoted by $B_i=(S_i,C_i,l_i)$, includes $v_i$ to $S$ and excludes $\{v_1,v_2,..,v_i\}$ from $C$. Formally, for $1\leq i \leq |C|$, we have
\begin{equation}
\label{eq:kCtPlex}
    S_i=S\cup \{v_i\},\ C_i=C\setminus \{v_1,v_2,\cdots,v_i\} \setminus N(v_i,g_{inv}),\ l_i=l-1.
\end{equation}
Note that we need to remove from $C$ those vertices that are not adjacent to $v_i$ in $g$ (since they cannot form any $k$-clique with $v_i$) and they connect to $v_i$ in $g_{inv}$. Clearly, all $k$-cliques in $(S,C,l)$ will be listed exactly once after branching. 
We note that the branching strategy we used in \texttt{kCtPlex} differs from that for \texttt{EBBkC} (and that for \texttt{VBBkC}). Specifically, the former (resp. the latter) is based on a sparse inverse graph $g_{inv}$ (resp. a dense $t$-plex $g$) and maintains a vertex set $C_i$ (resp. a graph instance $g_i$) for the produced branches. We remark that the former (correspondingly, the early stop strategy with $t$ at least 3) runs faster than the latter in practice, as verified in our experiments.


}


We summarize the procedure \texttt{kCtPlex} in Algorithm~\ref{alg:k-plex}. In particular, it also utilizes the combinatorial technique for boosting the performance (lines 8-10). Specifically, it figures out the set of vertices, denoted by $I$,  in $g_{inv}$, which are disconnected from all others (line 2). Consider a $k$-clique in $B$. It may involve $c$ vertices in $I$ where $c\in [0,\min\{|I|,k\}]$. Hence, we remove from $V(g_{inv})$ those vertices in $I$ for branching at line 3 while adding them back to a $k$-clique at lines 8-10. It is not difficult to verify the correctness. Due to the page limit, we include the time complexity analysis of \texttt{kCtPlex} in the technical report~\cite{wang2024technical}.

\smallskip
\noindent\textbf{Remark.}
(1) With the early termination strategy, the BB algorithms retain the same time complexity provided before but run practically faster as verified in the experiments. (2) The early termination strategy is supposed to set a small threshold of $t$ so as to apply the alternative procedures only on dense graph instances (i.e., $t$-plexes). We test different choices of $t$ in the experiments; the results suggest that \texttt{EBBkC} with $t$ being set from 2 to 5 runs comparably faster than other choices while the best one among them varies for different settings of $k$.


\if 0
We analyze the time complexity of the algorithm as follows. Given a $l$-plex $g$ ($l\ge 3$) and an integer $k$. Assume that the number of $k$-cliques in $G$ is $c(g, k)$. We have the following theorem.



\begin{theorem}
\label{theo:k-plex}
Given a $l$-plex $g$ and an integer $k\ge 3$. Let $n=|V(g)|$ and $m=|E(g)|$.
Then the worst-case time complexity of listing $k$-cliques in $g$ is $O(m+ \gamma \cdot l \cdot \tbinom{n}{k}+k\cdot c(g,k))$, where $\gamma$ is a constant slightly larger than 1. 
\end{theorem}
\fi

\if 0
\smallskip
\noindent\textbf{Remark.} We note that the early termination strategy is orthogonal to frameworks. Thus, it can accelerate both \texttt{VBBkC} and \texttt{EBBkC} frameworks. 
We present the pseudo-code of the integration in the technical report. 
At the beginning of algorithm, we need to specify a parameter $l$ to control the early termination. Then an immediate question is how we set parameter $l$ in practice? We answer this question by the following theorem.

\begin{theorem}
\label{theo:for-ebbkc}
Given a $l$-plex $g$ and an integer $k\ge 3$. 
If the graph is large and dense enough (with $l\le \frac{k-2}{2\gamma}$ where $\gamma$ is constant slightly larger than 1), 
early termination can accelerate $k$-clique enumeration in $g$ for both \texttt{VBBkC} and \texttt{EBBkC}.
\end{theorem}

This theorem provides a guideline on how we set $l$ in practice. Specifically, when we aim to list $k$-cliques with smaller $k$, we should set a smaller $l$ as well (i.e., $l=2$); when we aim to list cliques with a size near $\omega$ (i.e., the size of maximum clique), we can relax the constraint on $l$ such that early-terminating on a sparser $l$-plex graph with larger $l$ is also efficient. 

\fi

\section{Experiments}
\label{sec:exp}

\subsection{Experimental Setup}
\label{subsec:setup}

\begin{table}[t]
\vspace{-3mm}
\footnotesize
    \centering
    \caption{Dataset Statistics.}
\vspace{-2mm}
    \label{tab:data}
    \begin{tabular}{r|cc|cccc}
    \hline
        \textbf{Graph (Name)} & $|V|$ & $|E|$ & $\Delta$ & $\delta$ & $\tau$ & $\omega$ \\
    \hline
        \textsf{nasasrb (NA)} & 54,870 & 1,311,227 & 275 & 35 & 22 & 24  \\
        \textsf{fbwosn (FB)} & 63,731 & 817,090 & 2K & 52 & 35 & 30  \\
        \textbf{\textsf{wikitrust (WK)}} & 138,587 & 715,883 & 12K  & 64 & 31 & 25 \\
        \textsf{shipsec5 (SH)} & 179,104 & 2,200,076 & 75  & 29 & 22 & 24  \\
        \textsf{socfba (SO)} & 3,097,165 & 23,667,394 & 5K & 74 & 29 & 25 \\
        \textbf{\textsf{pokec (PO)}} & 1,632,803 & 22,301,964 & 15K  & 47 & 27 & 29 \\
        \textsf{wikicn (CN)} & 1,930,270 & 8,956,902 & 30K  & 127 & 31 & 33 \\    
        \textsf{baidu (BA)} & 2,140,198 & 17,014,946 & 98K & 82 & 29 & 31 \\
    \hline
        \textsf{websk (WE)} & 121,422 & 334,419 & 590 & 81 & 80 & 82  \\
        \textsf{citeseer (CI)} & 227,320 & 814,134 & 1K & 86 & 85 & 87 \\
        \textbf{\textsf{stanford (ST)}} & 281,904 & 1,992,636 & 39K & 86 & 61 & 61 \\
        \textsf{dblp (DB)} & 317,080 & 1,049,866 & 343 & 113 & 112 & 114 \\
        \textsf{dielfilter (DE)} & 420,408 & 16,232,900 & 302 & 56 & 43 & 45 \\ 
        \textsf{digg (DG)} & 770,799 & 5,907,132 & 18K & 236 & 72 & 50  \\
        \textsf{skitter (SK)} & 1,696,415 & 11,095,298 & 35K  & 111 & 67 & 67 \\
        \textbf{\textsf{orkut (OR)}} & 2,997,166 & 106,349,209 & 28K  & 253 & 74 & 47 \\ 
    \hline
        \textsf{allwebuk (UK)}  & 18,483,186  & 261,787,258 & 3M & {943} & {942} & {944}  \\
        {\textsf{clueweb (CW)}} & {147,925,593} & {446,766,953} &{1M} & {192} & {83} &  {56} \\
        {\textsf{wikipedia (WP)}} & {25,921,548} & {543,183,611} & {4M} & {1120} & {426} & {428} \\
    \hline
    \end{tabular}
\vspace{-2mm}
\end{table}

\begin{figure*}[t]
\vspace{-4mm}
\subfigure[\textsf{NA}]{
    \includegraphics[width=0.24\textwidth]{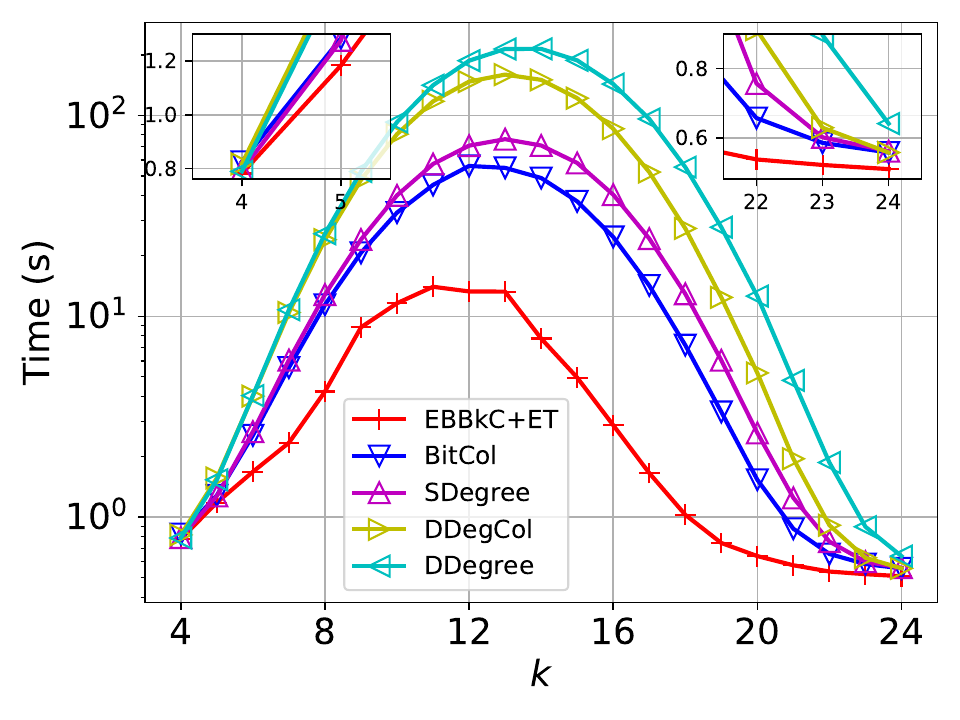}}
\subfigure[\textsf{FB}]{
    \includegraphics[width=0.24\textwidth]{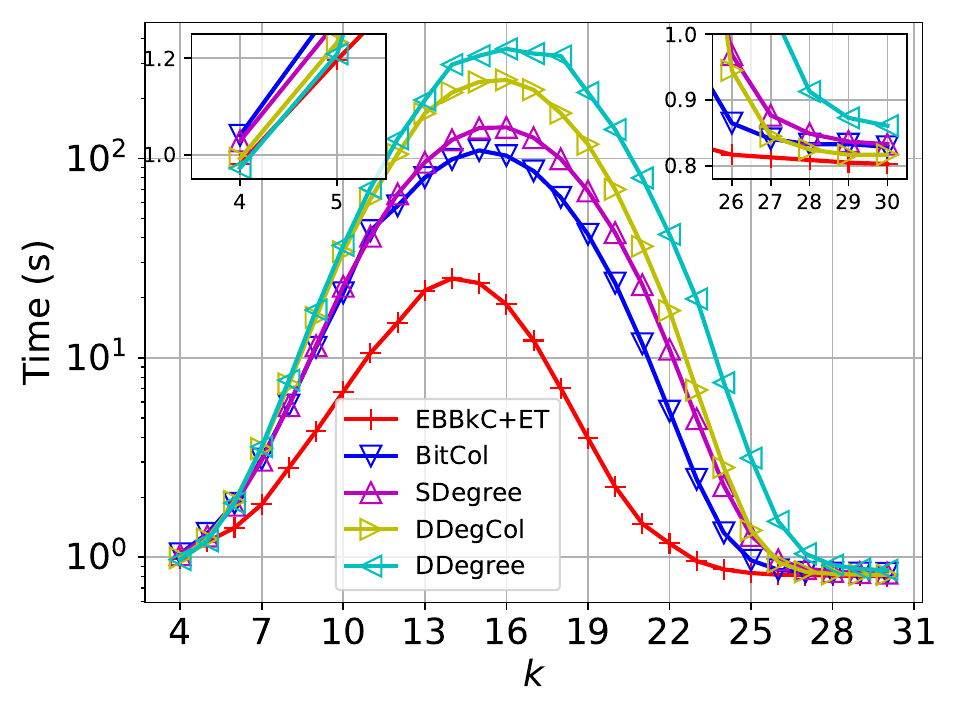}}
\subfigure[\textsf{WK}]{
    \includegraphics[width=0.24\textwidth]{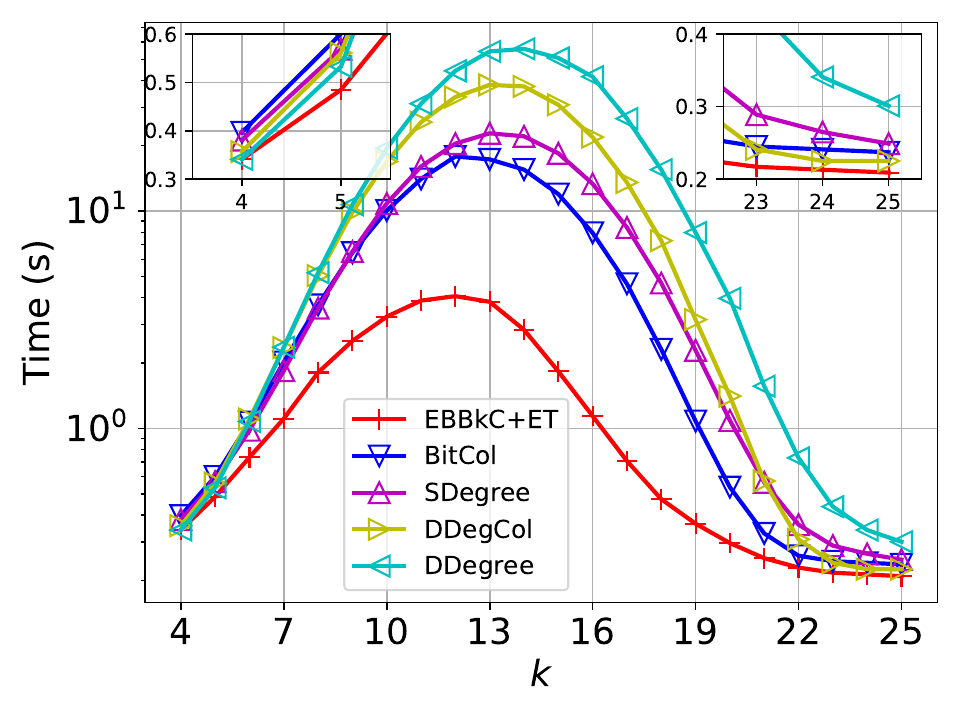}}
\subfigure[\textsf{SH}]{
    \includegraphics[width=0.24\textwidth]{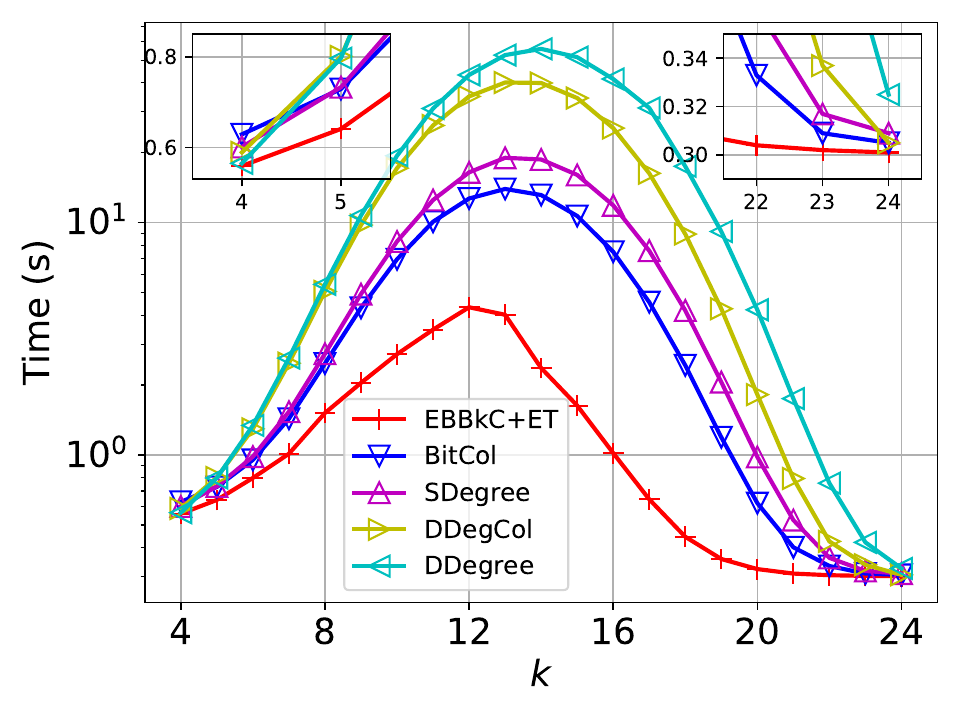}}
\subfigure[{\textsf{SO}}]{
\label{subfig:a}
    \includegraphics[width=0.24\textwidth]{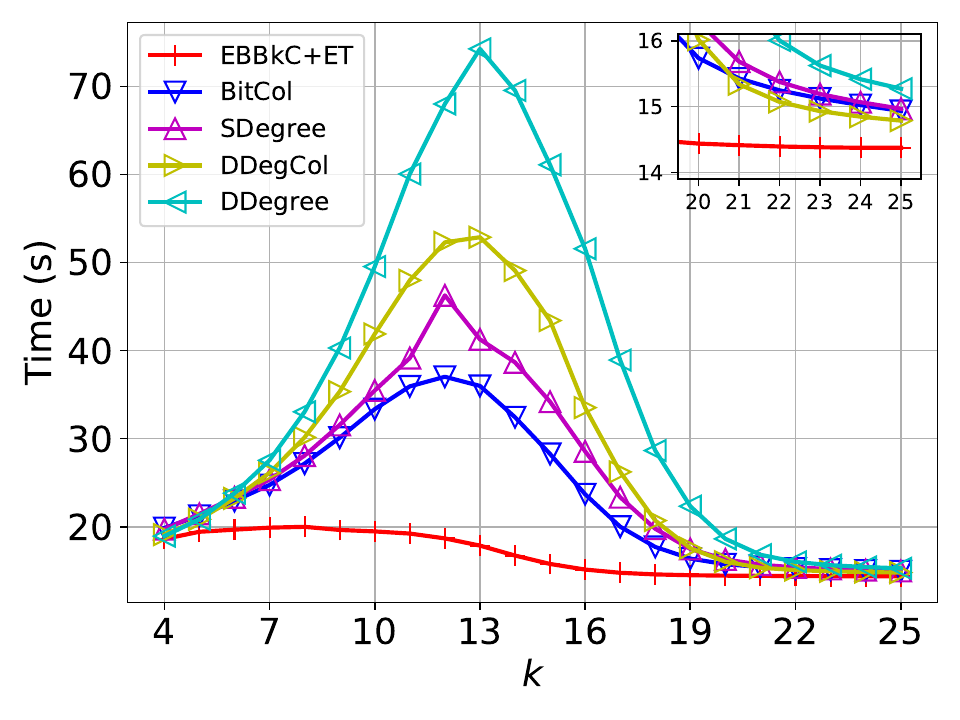}}
\subfigure[\textsf{PO}]{
    \includegraphics[width=0.24\textwidth]{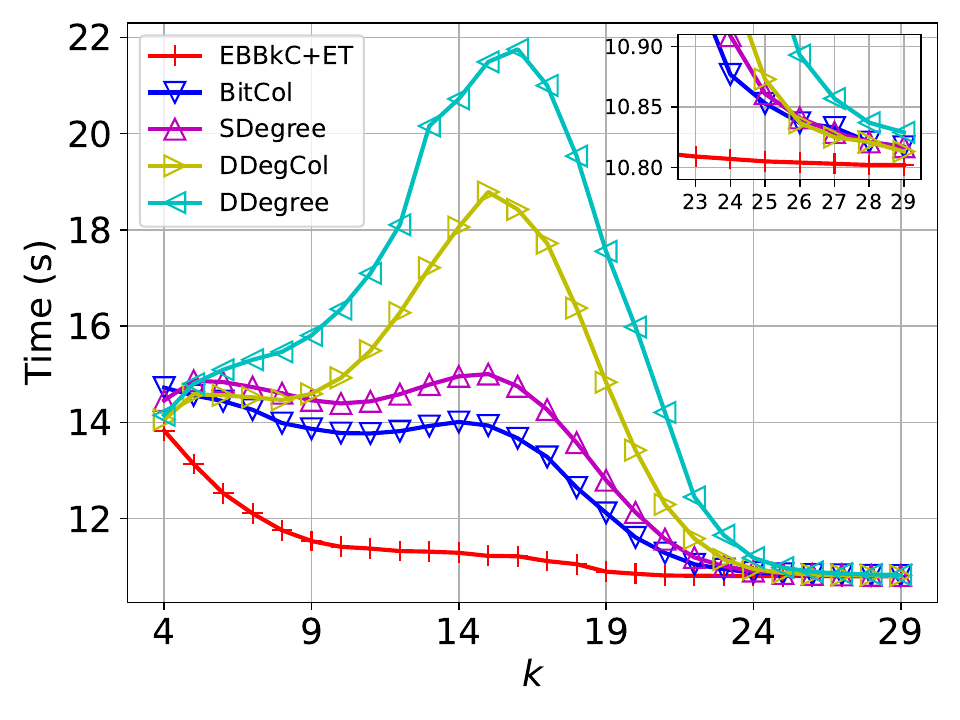}}
\subfigure[\textsf{CN}]{
    \includegraphics[width=0.24\textwidth]{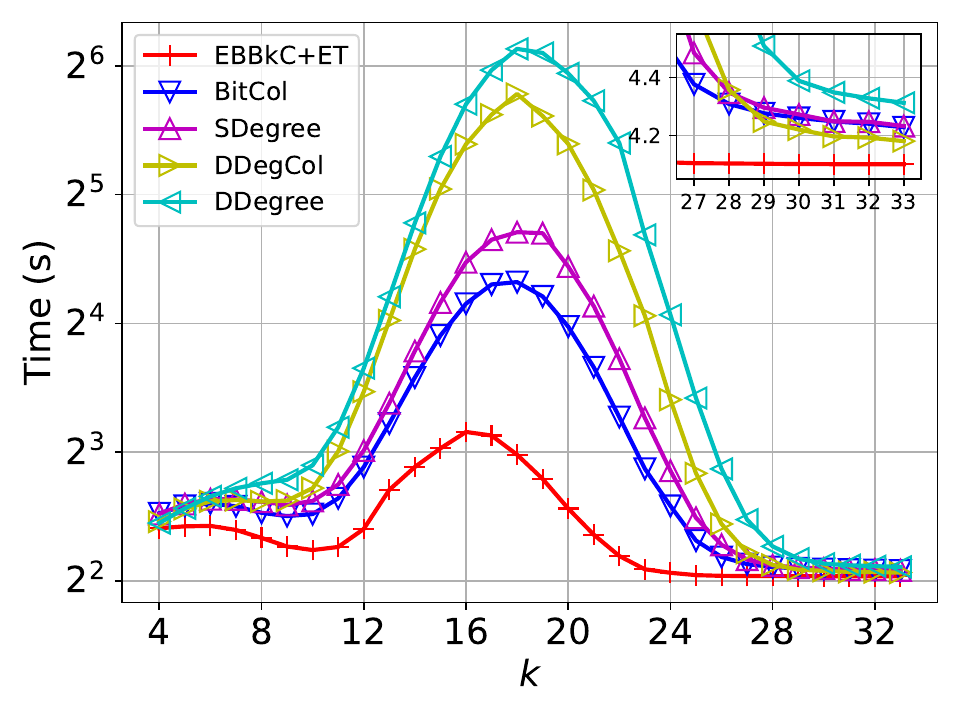}}
\subfigure[\textsf{BA}]{
    \includegraphics[width=0.24\textwidth]{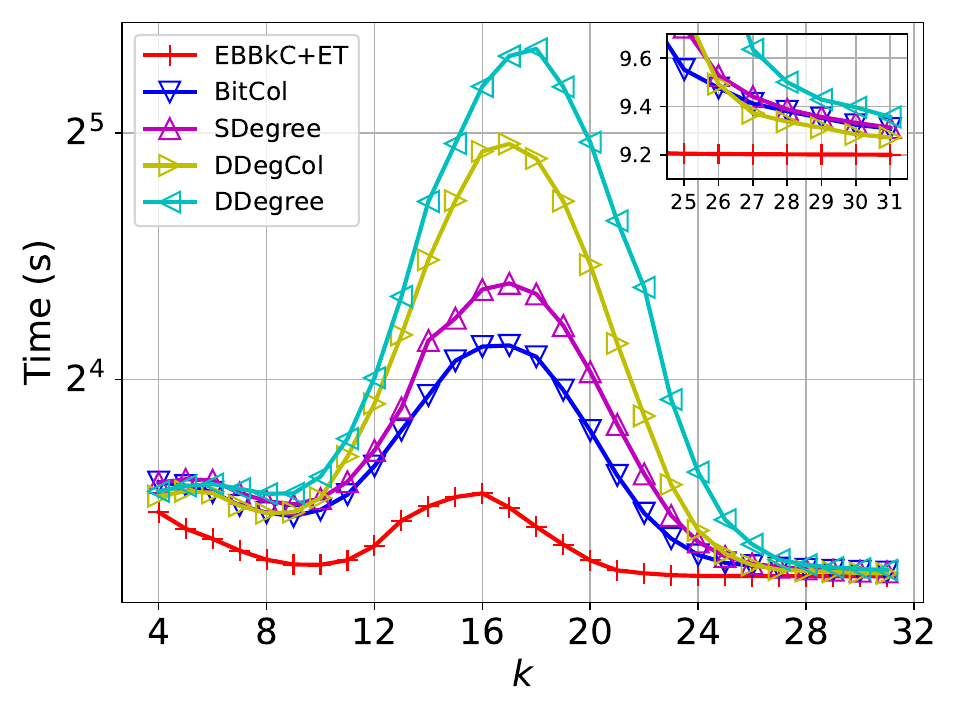}}
\vspace{-3mm}
\caption{Comparison with baselines on the small-$\omega$ graphs, varying $k$ from 4 to $\omega$.}
\vspace{-2mm}
\label{fig:real-small}
\end{figure*}

\begin{figure*}[t]
\vspace{-2mm}
\subfigure[\textsf{WE}]{
    \includegraphics[width=0.24\textwidth]{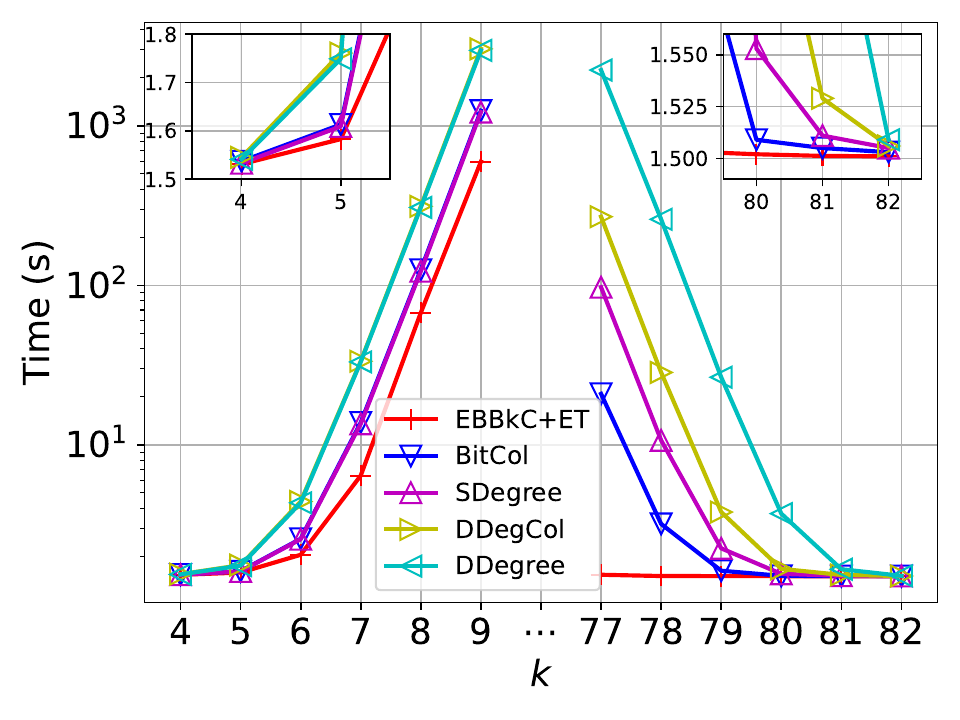}}
\subfigure[\textsf{CI}]{
    \includegraphics[width=0.24\textwidth]{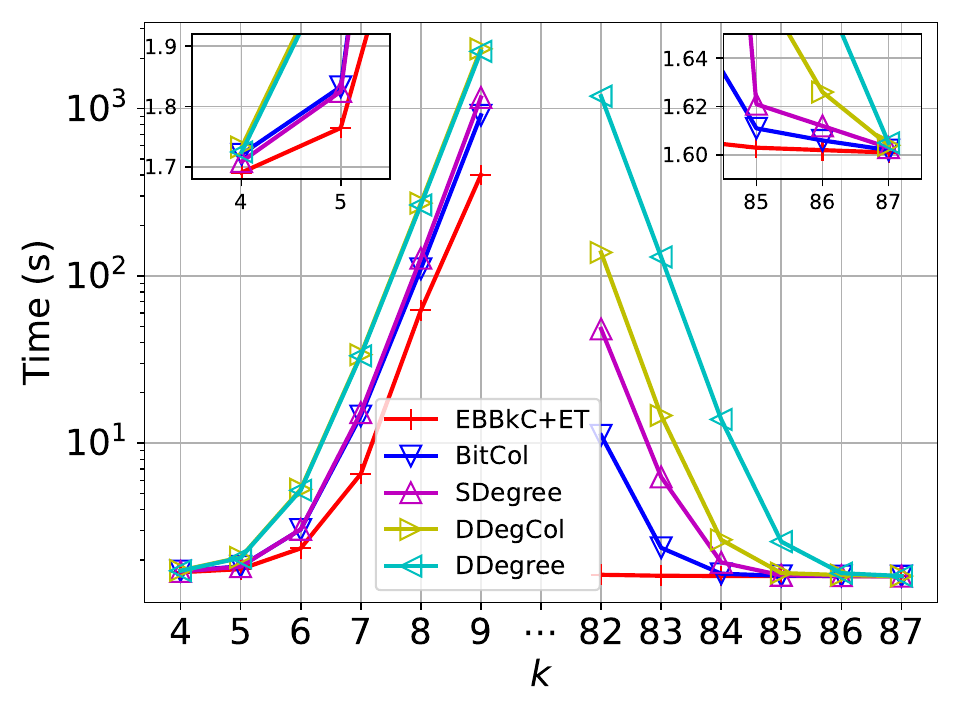}}
\subfigure[\textsf{ST}]{
    \includegraphics[width=0.24\textwidth]{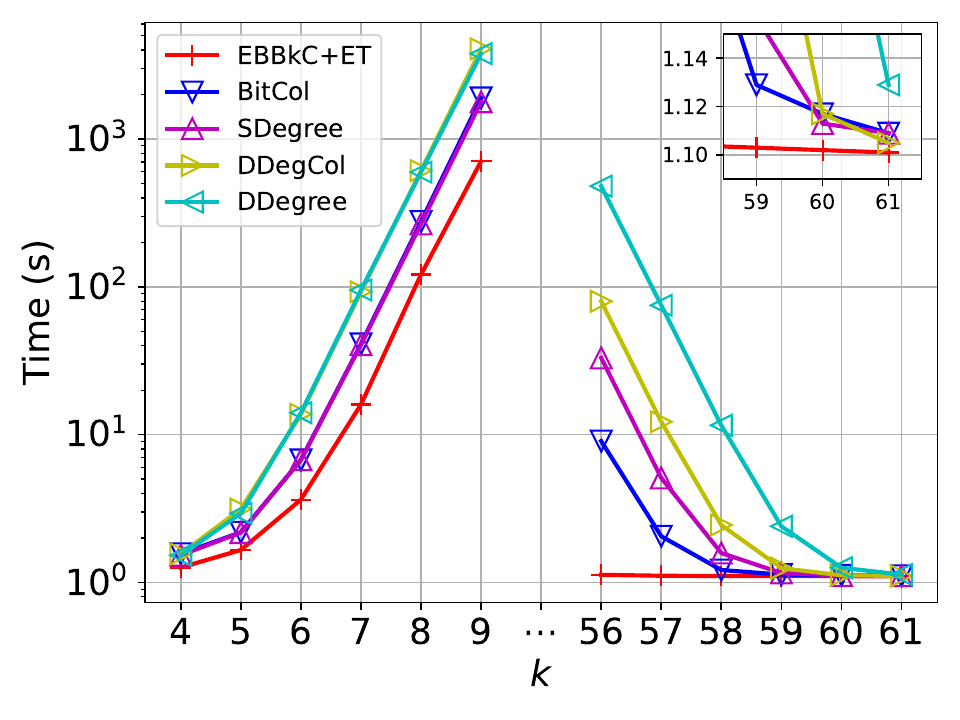}}
\subfigure[\textsf{DB}]{
    \includegraphics[width=0.24\textwidth]{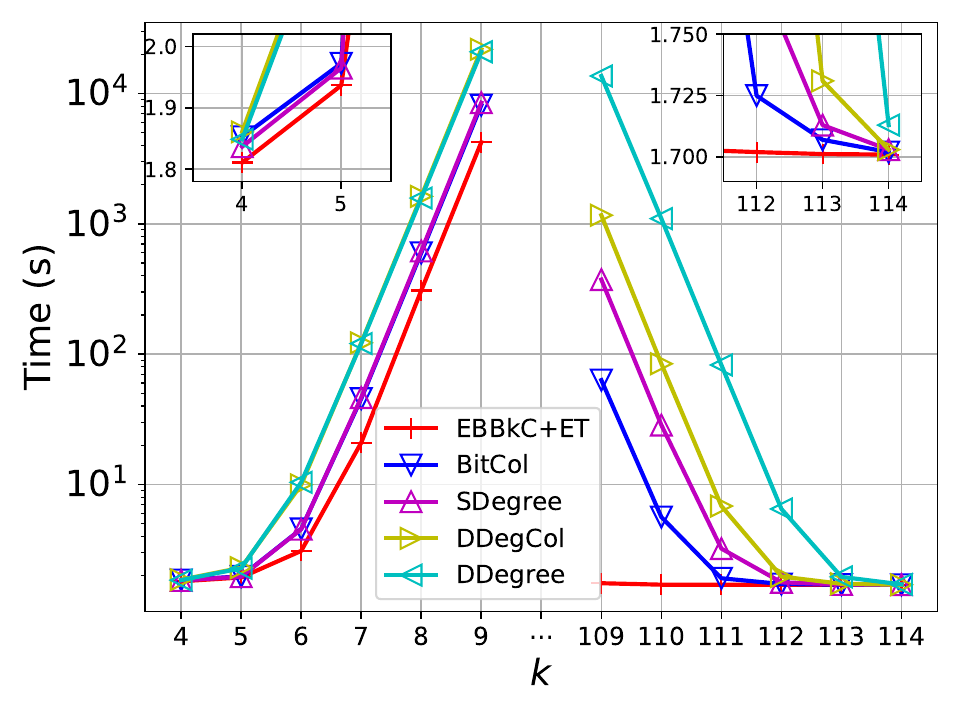}}
\subfigure[\textsf{DE}]{
    \includegraphics[width=0.24\textwidth]{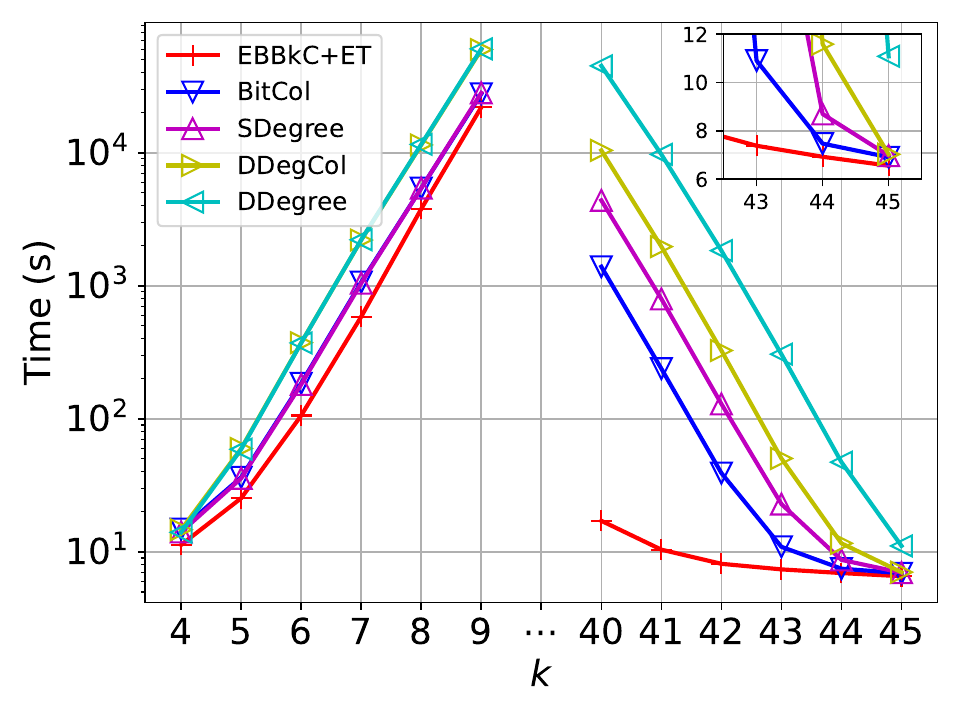}}
\subfigure[\textsf{DG}]{
    \includegraphics[width=0.24\textwidth]{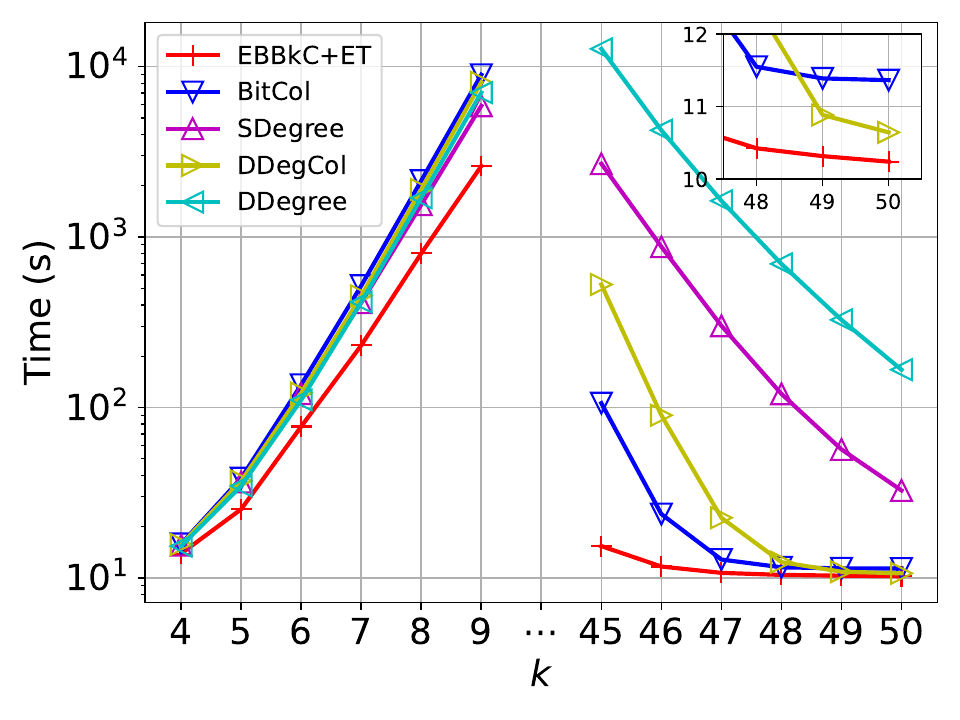}}
\subfigure[\textsf{SK}]{
    \includegraphics[width=0.24\textwidth]{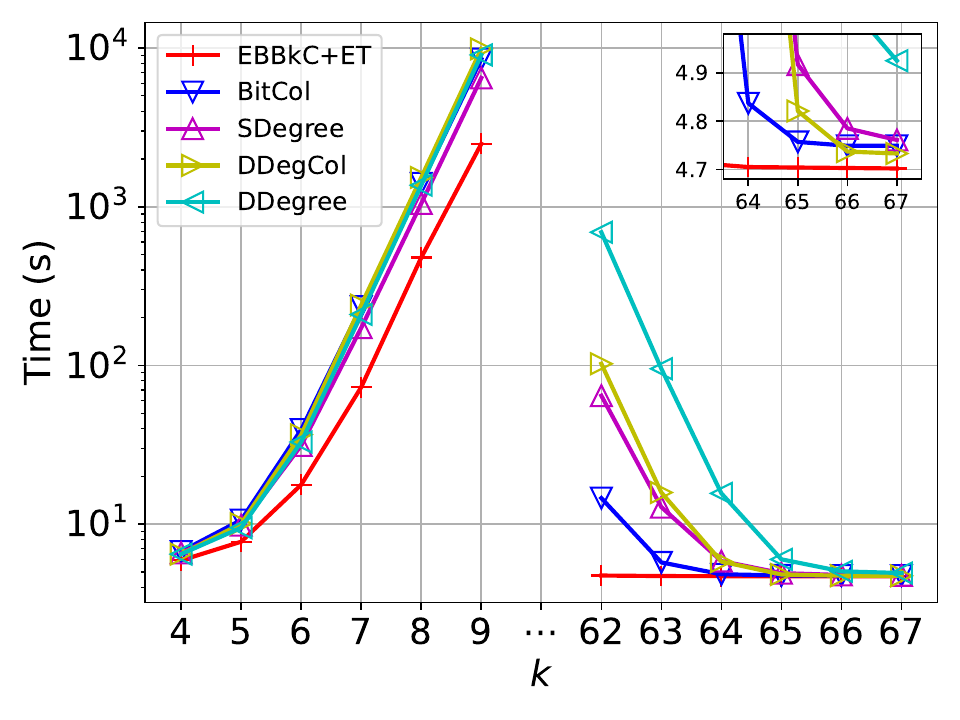}}
\subfigure[\textsf{OR}]{
    \includegraphics[width=0.24\textwidth]{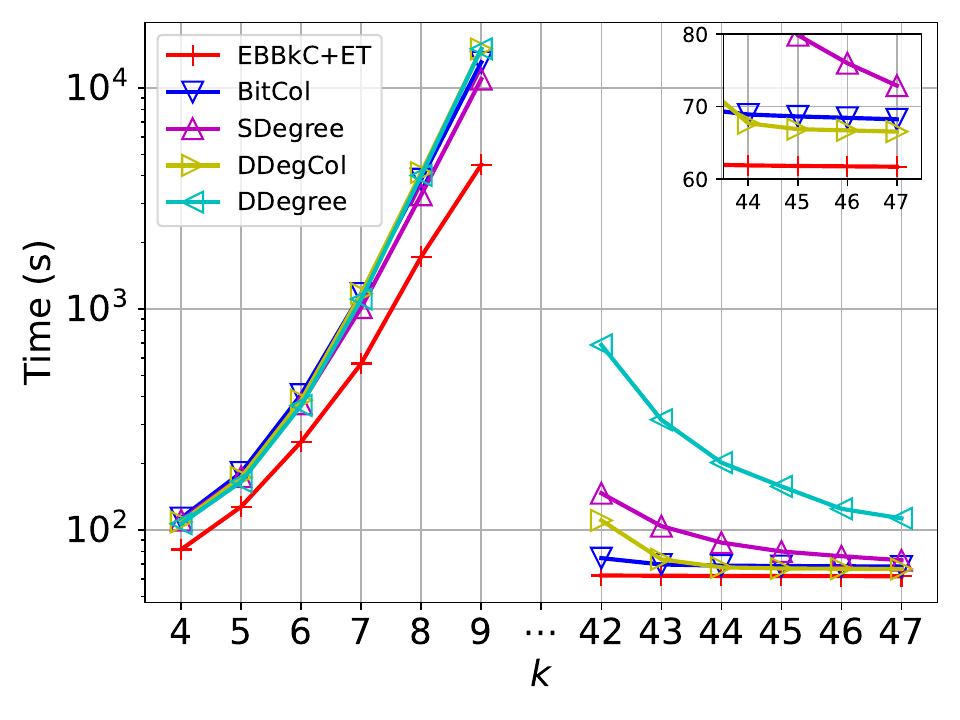}}
\caption{Comparison with baselines on the large-$\omega$ graphs, varying $k$ from 4 to 9 and from $\omega-5$ to $\omega$.}
\label{fig:real-large}
\vspace{-3mm}
\end{figure*}

\smallskip\noindent
\textbf{Datasets.}
We use {19} real datasets in our experiments, which can be obtained from an open-source network repository \cite{rossi2015network}. For each graph, we ignore the directions, weights and self-loops (if any) at the very beginning. Following the existing study \cite{li2020ordering}, we divide the real datasets into two groups based on the size of a maximum clique $\omega$: small-$\omega$ graphs and large-$\omega$ graphs. For small-$\omega$ graphs, we 
list all $k$-cliques for all $k$, while for large-$\omega$ graphs, we 
only list $k$-cliques for small $k$ values {\cheng and} large $k$ values which are near $\omega$. 
We collect the graph statistics and report the maximum degree $\Delta$, the degeneracy number $\delta$, the 
{\chengB truss related number}
$\tau$ and the maximum clique size $\omega$, which are shown in Table~\ref{tab:data}. 
{\cheng We select four datasets, namely \textsf{WK}, \textsf{PO}, \textsf{ST} and \textsf{OR}, which are bold in the table, as default ones since they cover different size of graphs.}

\smallskip\noindent
\textbf{Baselines and Metrics.} 
We choose \texttt{EBBkC-H} as the default edge-oriented branching-based BB framework for comparison, and denote it by \texttt{EBBkC} for brevity in the experimental results.
We compare our algorithm \texttt{EBBkC+ET} with four existing algorithms, namely \texttt{DDegCol} \cite{li2020ordering}, \texttt{DDegree} \cite{li2020ordering}, \texttt{SDegree} \cite{yuan2022efficient} and \texttt{BitCol} \cite{yuan2022efficient} in terms of running time. Specifically, \texttt{EBBkC+ET} employs the edge-oriented branching-based BB framework with hybrid edge ordering, and applies the early-termination technique. 
{The running times of all algorithms reported in our paper include the time costs of (1) conducting the pre-processing techniques (if any) and (2) generating orderings of vertices. For our early termination technique, we set the parameter $t=2$ when $k\le \tau/2$ and $t=3$ when $\tau/2<k\le \omega$ for early-termination.}
All baselines are the state-of-the-art algorithms for $k$-cliques listing with vertex-oriented {\chengB branching-based} BB framework. 
%
{\kaixin Following the existing study \cite{li2020ordering}, we vary $k$ from 4 since $k$-clique listing problem reduces to triangle listing problem when $k=3$ and there are efficient algorithms \cite{latapy2008main,ortmann2014triangle} for triangle listing which run in polynomial time.}


%
%

\smallskip\noindent
\textbf{Settings.}
The source codes of all algorithms are written in C++ and the experiments are conducted on a Linux machine with a 2.10GHz Intel CPU and 128GB memory. 
{\cheng We note that we have not utilized SIMD instructions for data-level parallelism in our implementation, despite the potential to further enhance the acceleration of our algorithm.}
We set the time limit as 24 hours (i.e., 86,400 seconds) and  the running time of any algorithm that exceeds the time limit is recorded with ``{INF}''. Implementation codes and datasets can be found via this link \url{https://github.com/wangkaixin219/EBBkC}. 

\subsection{Experimental Results}
\label{subsec:result}

\smallskip\noindent
\textbf{(1) Comparison among algorithms (on small-$\omega$ graphs).}
Figure~\ref{fig:real-small} shows the results of listing $k$-cliques on small-$\omega$ graphs. We observe that \texttt{EBBkC+ET} (indicated with red lines) runs faster than all baselines on all datasets. 
This is consistent with our theoretical analysis that the 
{\chengB time complexity}
of \texttt{EBBkC+ET} is {\cheng better than those of the baseline algorithms.}
Besides, on \textsf{PO}, \textsf{CN} and \textsf{BA}, we observe that the running time of \texttt{EBBkC+ET} first decreases when $k$ is small. 
There are two possible reasons: (1) on some dataset, the number of $k$-cliques decreases as $k$ increases when $k$ is small. On \textsf{BA}, for example, the number of 4-cliques ($k=4$) is nearly 28M while the number of 7-cliques ($k=7$) is 21M; (2) {\cheng as $k$ increases, a larger} number of branches can be pruned by the size constraint with the truss-based edge ordering,
which provides the opportunity to run faster. To see this, we collect the number of promising branches after the branching step with the truss-based edge ordering. On \textsf{PO}, there are 11M branches left when enumerating 4-cliques ($k=4$) while there are only less than 1M branches left when enumerating 10-cliques ($k=10$). 

\smallskip\noindent
\textbf{(2) Comparison among algorithms (on large-$\omega$ graphs).} 
Figure~\ref{fig:real-large} shows the results of listing $k$-cliques on large-$\omega$ graphs. We omit the results for some values of $k$ since the time costs are beyond 24 hours due to the large number of $k$-cliques. We find that \texttt{EBBkC+ET} still runs the fastest. It is worthy noting that \texttt{EBBkC+ET} can greatly improve the efficiency by 1-2 orders of magnitude over the baselines when $k$ is near the size of a maximum clique $\omega$, {\kaixin e.g., 
\texttt{EBBkC+ET} runs 9.2x and 97.7x faster than \texttt{BitCol} 
on \textsf{DB} (when $k=109$) and on \textsf{DE} (when $k=40$), respectively}. The reasons are two-fold: (1) when $k$ is near $\omega$, a large number of branches can be pruned by the size constraint with the truss-based edge ordering,
and as a result, the remaining branches are relatively dense; (2) \texttt{EBBkC+ET} can quickly enumerate cliques within a dense structure, e.g., $t$-plex, without {\cheng making branches for} the search space, which dramatically reduces the running time.

{
\smallskip\noindent
\textbf{(3) Ablation studies.}
We compare two variants of our method, 
namely \texttt{EBBkC+ET} (the full version) and \texttt{EBBkC} (the full version without the early termination), with two \texttt{VBBkC} algorithms, namely \texttt{DDegCol+} (the \texttt{DDegCol} algorithm with Rule (2) proposed in this paper) and \texttt{BitCol+} (the \texttt{BitCol} algorithm with Rule (2) and without SIMD-based implementations).
We note that \texttt{DDegCol+} and \texttt{BitCol+} correspond to the SOTA \texttt{VBBkC} algorithms without SIMD-based implementations.
We report the results in Figure~\ref{fig:variant} and have the following observations. 
\underline{First}, \texttt{DDegCol+} and \texttt{BitCol+} have very similar running times, which shows that pre-processing techniques in the BitCol paper are not effective. \underline{Second}, \texttt{EBBkC} runs clearly faster than \texttt{DDegCol+} and \texttt{BitCol+}, which demonstrates the clear contribution of our edge-oriented BB framework (note that \texttt{EBBkC} and \texttt{DDegCol+} differ only in their frameworks). We note that it is not fair to compare \texttt{EBBkC} with \texttt{BitCol} directly since the latter is based on SIMD-based implementations while the former is not. \underline{Third}, \texttt{EBBkC+ET} runs clearly faster than \texttt{EBBkC}, which demonstrates the clear contribution of the early termination technique. For example, on dataset \textsf{WK}, the edge-oriented BB framework and early-termination contribute 28.5\% and 71.5\% to the efficiency improvements over \texttt{BitCol+} when $k=13$, respectively.
We also show the time costs of generating the truss-based ordering for edges in \texttt{EBBkC} and those of generating the degeneracy ordering for vertices in \texttt{VBBkC} in Table~\ref{tab:ordering}. We note that while the former are slightly larger than the latter, the overall time cost of \texttt{EBBkC} is smaller than that of \texttt{VBBkC} (as shown in Figure~\ref{fig:variant}).
}

\begin{table}[t]
\centering
    \caption{{Time for generating truss-based edge ordering and degeneracy ordering (unit: sec). }}
\vspace{-2mm}
    \label{tab:ordering}
    \begin{tabular}{|c|c|c|c|c|}
    \hline
         & \textsf{WK} & \textsf{PO} & \textsf{ST} & \textsf{OR} \\ \hline
    Truss (s) & 0.2 & 10.7 & 1.1 & 60.4 \\ \hline
    Degen. (s) & 0.1 & 7.3 & 0.6 & 53.3 \\ \hline
    \end{tabular}
\end{table}

\begin{figure}[t]
\subfigure[{\textsf{WK}}]{
    \includegraphics[width=0.23\textwidth]{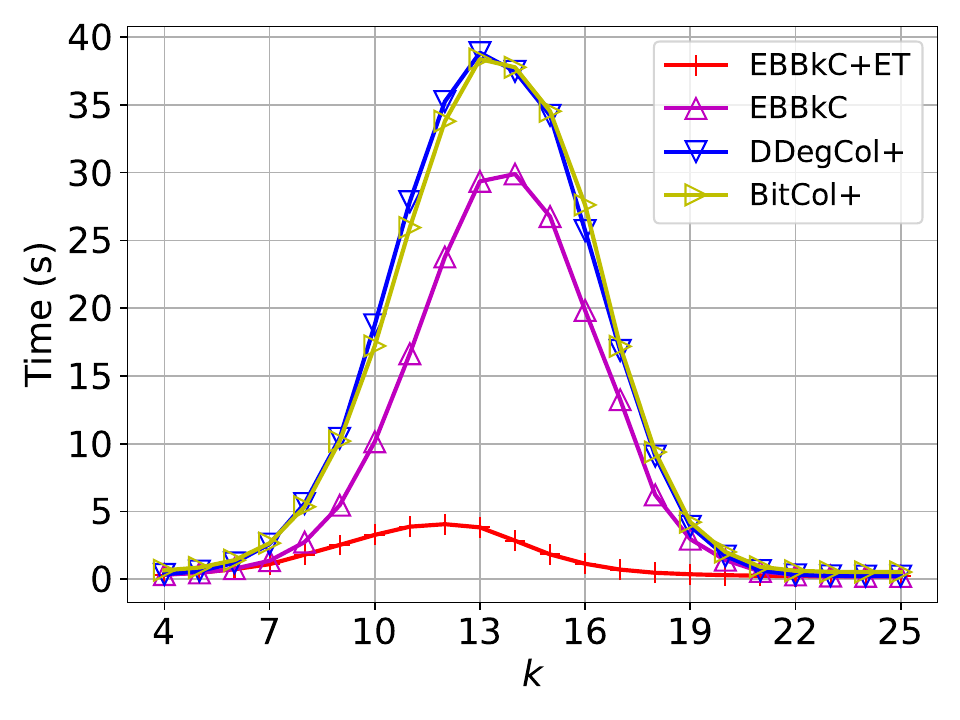}}
\subfigure[{\textsf{PO}}]{
    \includegraphics[width=0.23\textwidth]{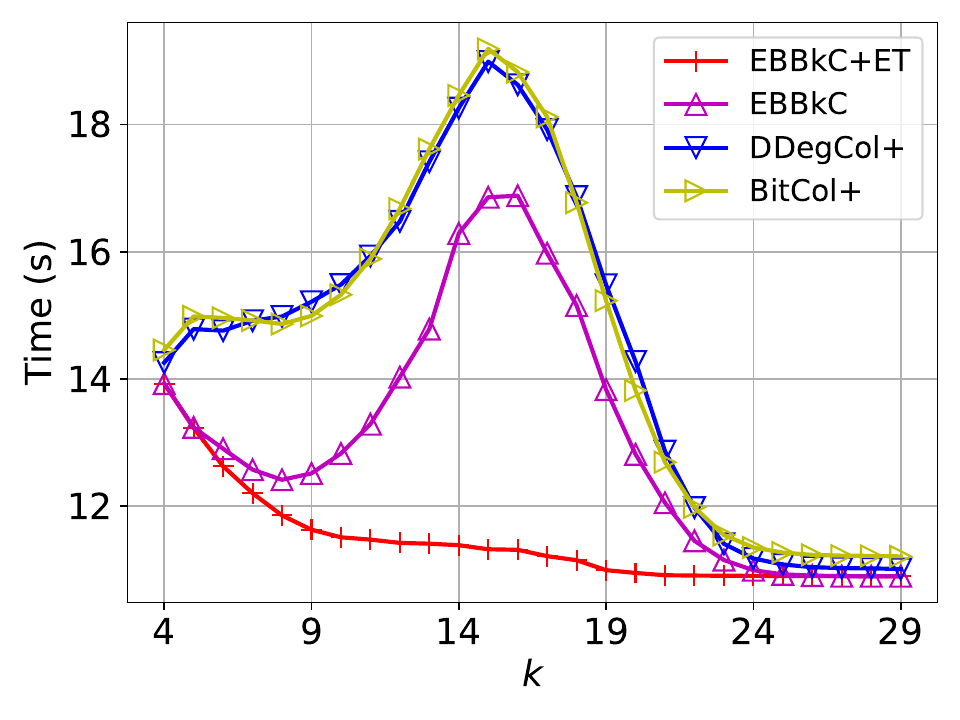}}
\vspace{-3mm}
\caption{{Ablation studies.} }
\vspace{-2mm}
\label{fig:variant}
\end{figure}

\begin{figure}[t]
 \vspace{-2mm}
\subfigure[\textsf{WK}]{
    \includegraphics[width=0.23\textwidth]{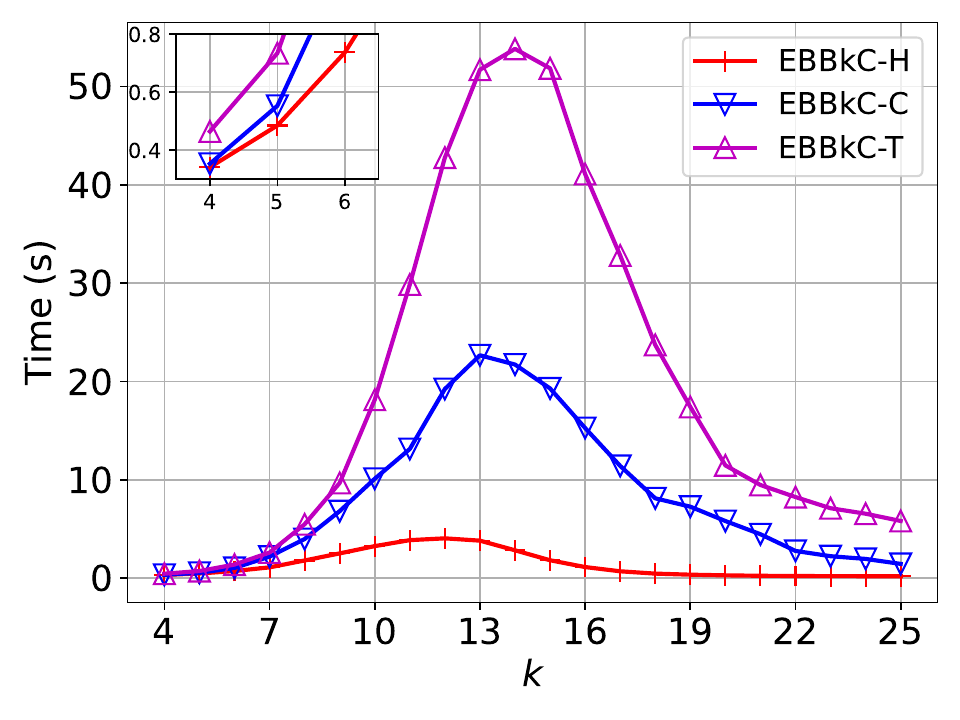}}
\subfigure[\textsf{PO}]{
    \includegraphics[width=0.23\textwidth]{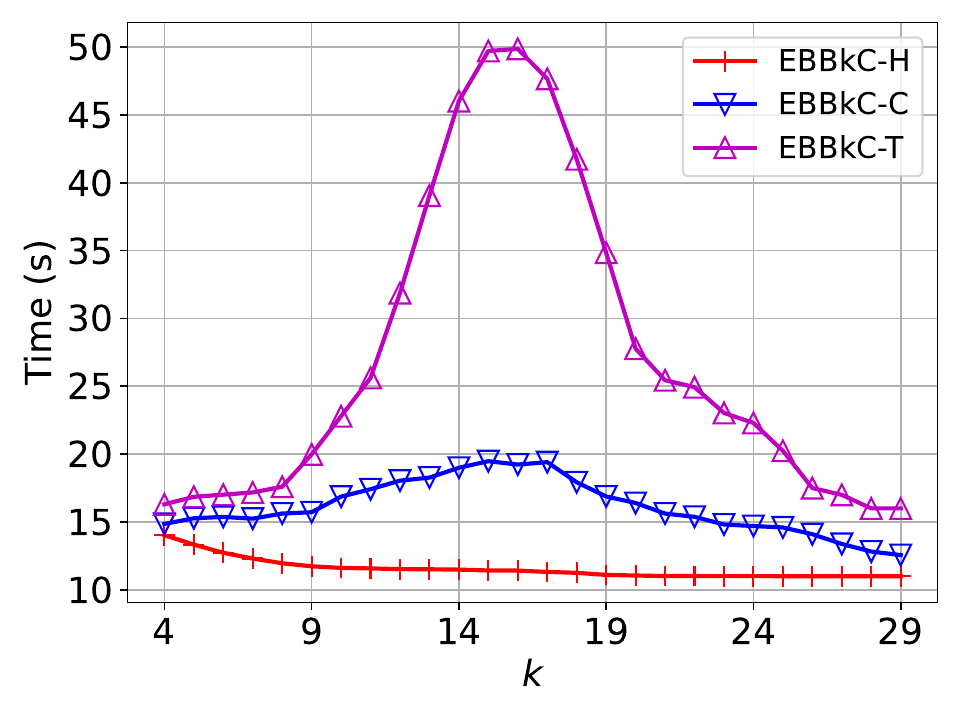}}
\vspace{-3mm}
\caption{Effects of the {\cheng edge ordering}. 
}
\vspace{-2mm}
\label{fig:ebbkc-frame}
\end{figure}

{\kaixin
\smallskip\noindent
\textbf{(4) Effects of the {\cheng edge ordering} (comparison among  \texttt{EBBkC-T}, \texttt{EBBkC-C} and \texttt{EBBkC-H}).}
%
For the sake of fairness, we employ all color-based pruning rules for \texttt{EBBkC-C} and \texttt{EBBkC-H} frameworks and employ the early-termination technique for all frameworks. The results are shown in Figure~\ref{fig:ebbkc-frame}. Consider \texttt{EBBkC-H} and \texttt{EBBkC-T}. Although both frameworks have the same time complexity, \texttt{EBBkC-H} runs much faster since it can prune more unpromising search paths by color-based pruning rules than \texttt{EBBkC-T}. Consider \texttt{EBBkC-H} and \texttt{EBBkC-C}. \texttt{EBBkC-H} outperforms \texttt{EBBkC-C} since the largest sub-problem instances produced by \texttt{EBBKC-H} is smaller than that of \texttt{EBBkC-C}, which also conforms our theoretical analysis.}

\begin{figure}[t]
\vspace{-1mm}
\subfigure[\textsf{WK}]{
    \includegraphics[width=0.23\textwidth]{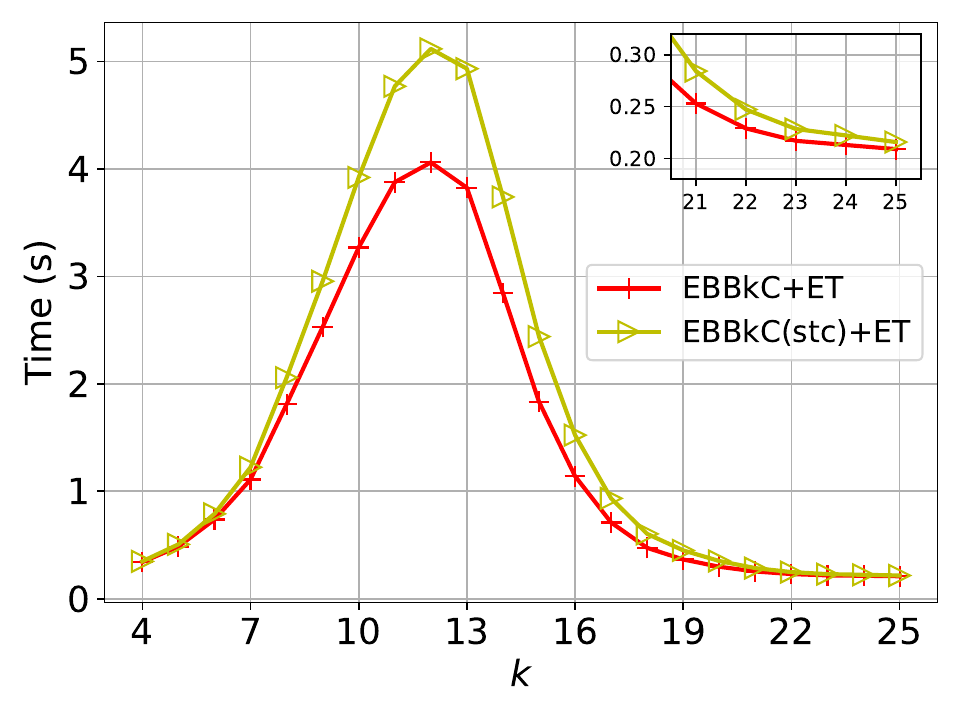}}
\subfigure[\textsf{PO}]{
    \includegraphics[width=0.23\textwidth]{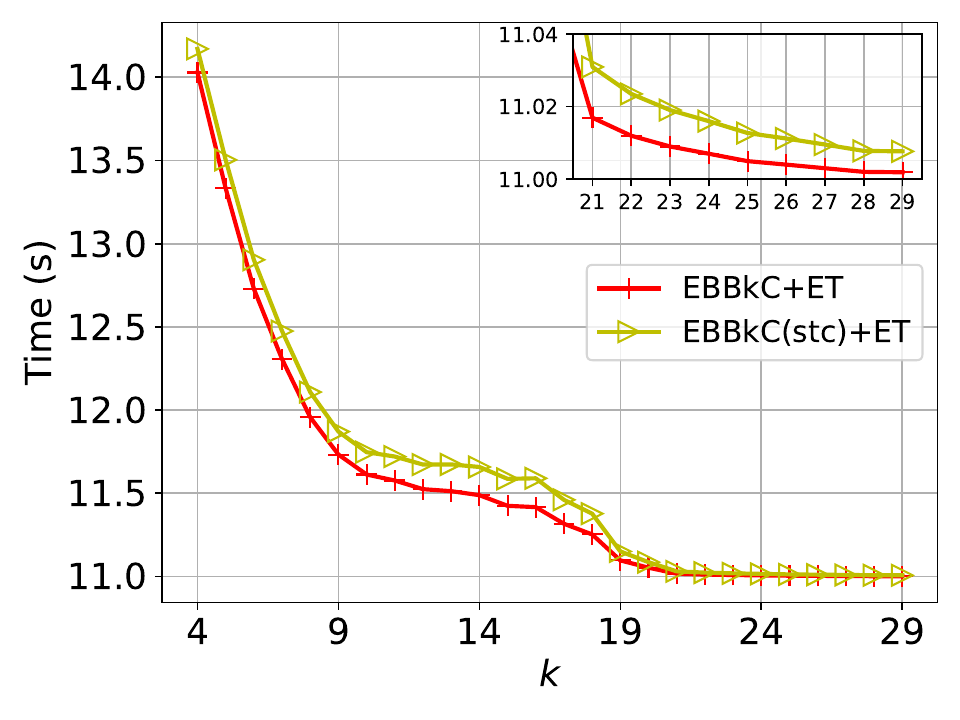}}
\vspace{-2mm}
\caption{Effects of the pruning rules (comparison between the algorithms w/ and w/o the Rule (2)).}
\vspace{-3mm}
\label{fig:color}
\end{figure}

\begin{figure}[t]
\centering
    \subfigure[Results on \textsf{WK} varying $k$]{
    \includegraphics[width=0.23\textwidth]{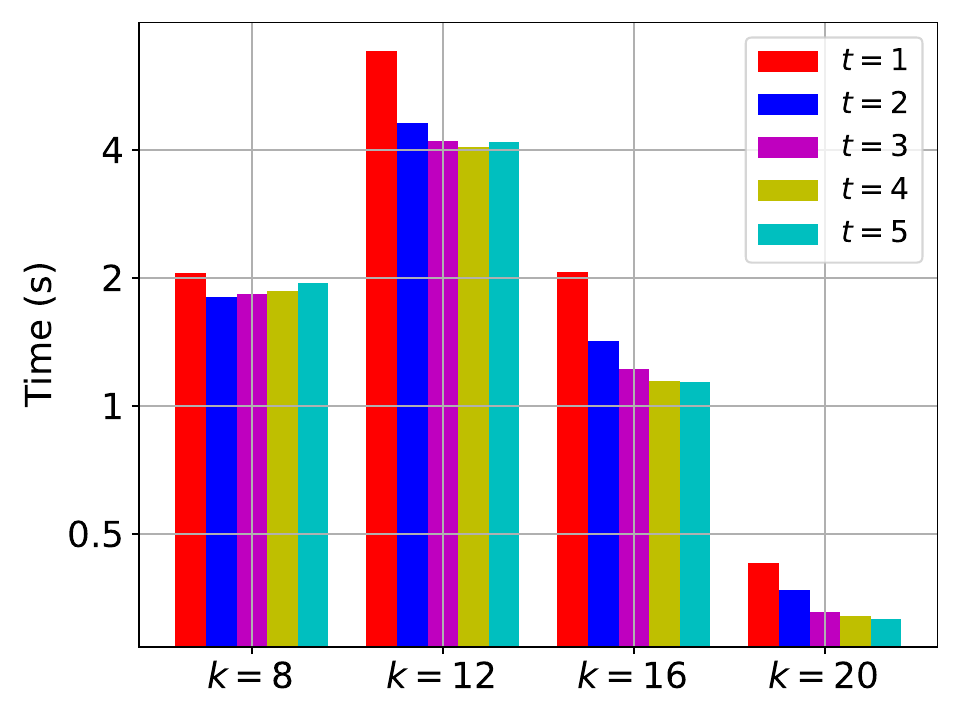}}
\subfigure[Results on \textsf{PO} varying $k$]{
    \includegraphics[width=0.23\textwidth]{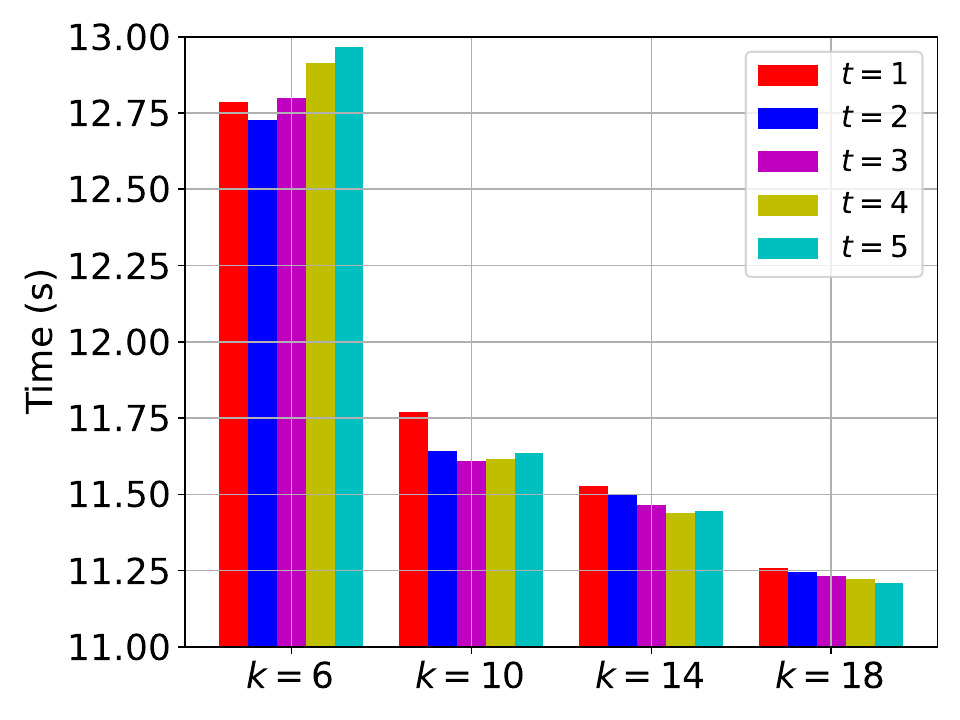}}
\vspace{-3mm}
\caption{Effects of early-termination technique (varying $t$).}
\label{fig:stop}
\end{figure}

\begin{figure}[t]
\vspace{-2mm}
\subfigure[{Results on \textsf{ST} ($k=8$)}]{
\label{subfig:parallel-stanford}
    \includegraphics[width=0.23\textwidth]{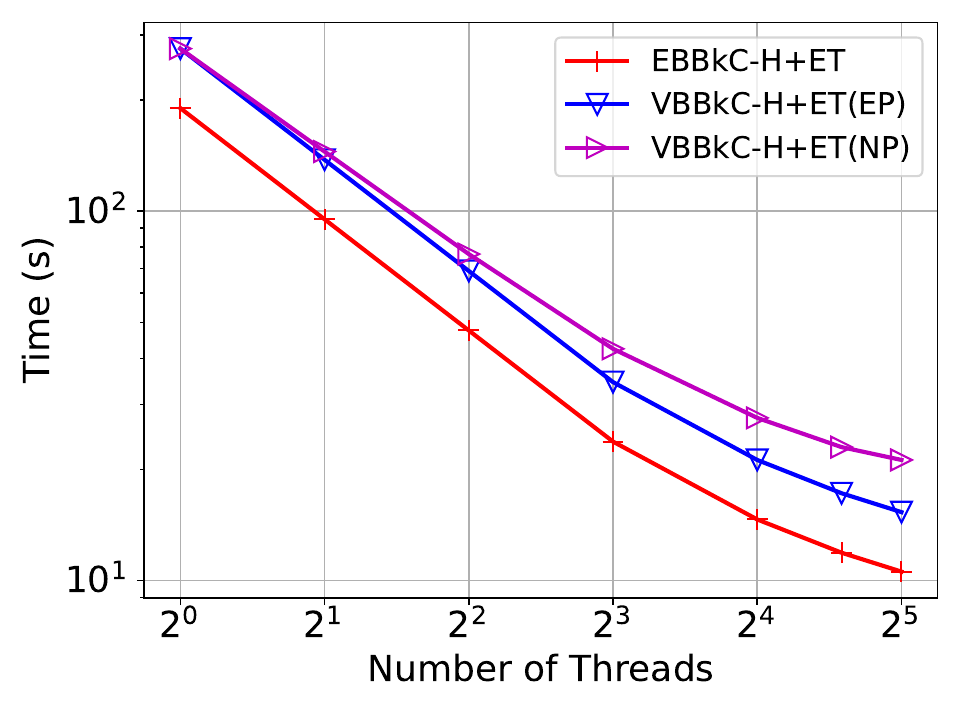}}
\subfigure[{Results on \textsf{OR} ($k=8$)}]{
\label{subfig:parallel-orkut}
    \includegraphics[width=0.23\textwidth]{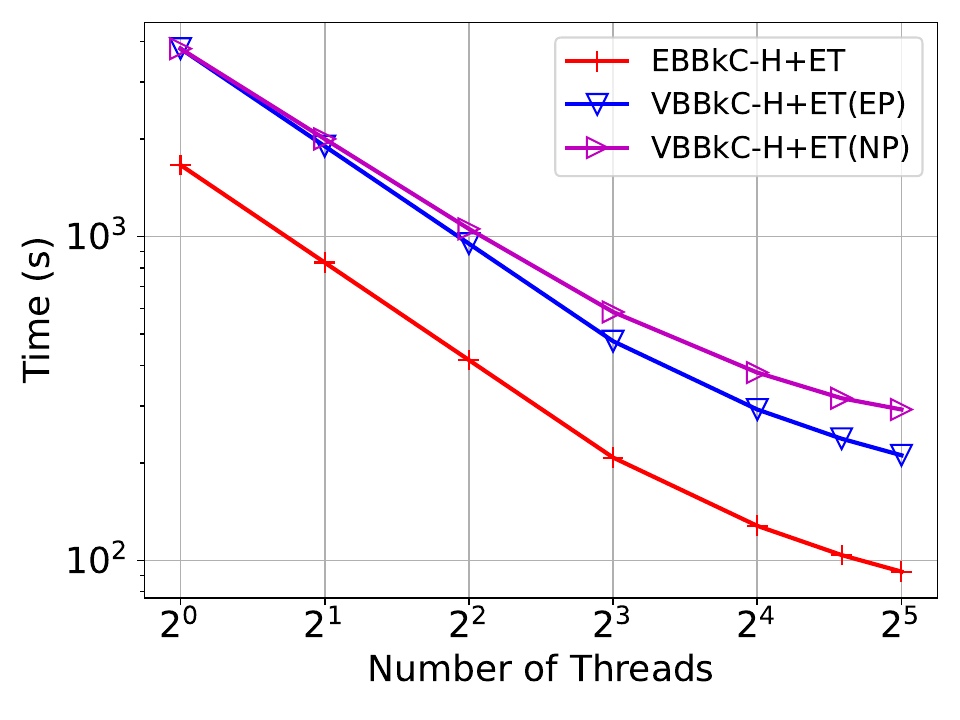}}
\vspace{-3mm}
\caption{{Comparison among different parallel schemes, varying the number of threads.}}
\label{fig:parallel}
\end{figure}

\begin{figure*}[t]
\vspace{-2mm}
\begin{minipage}{0.23\linewidth}
    \centering
     \includegraphics[width=\textwidth]{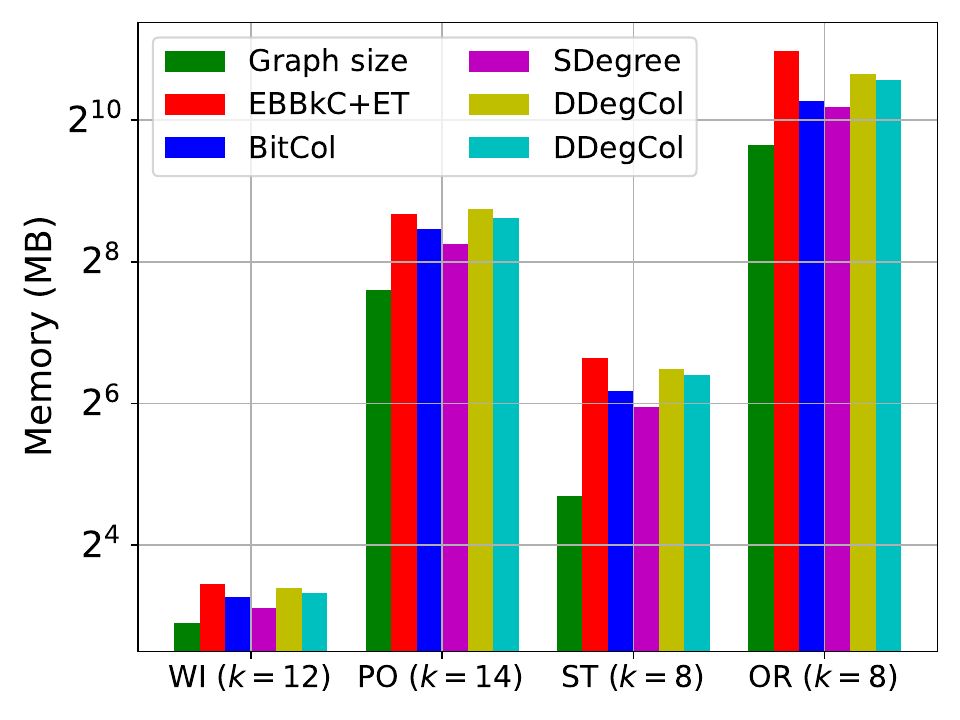}
     \caption{{Space costs.}}\label{fig:mem} 
\end{minipage}
\begin{minipage}{0.72\linewidth} 
\centering
    \subfigure[{\textsf{UK}}]{
\label{subfig:scale-uk}
    \includegraphics[width=0.32\textwidth]{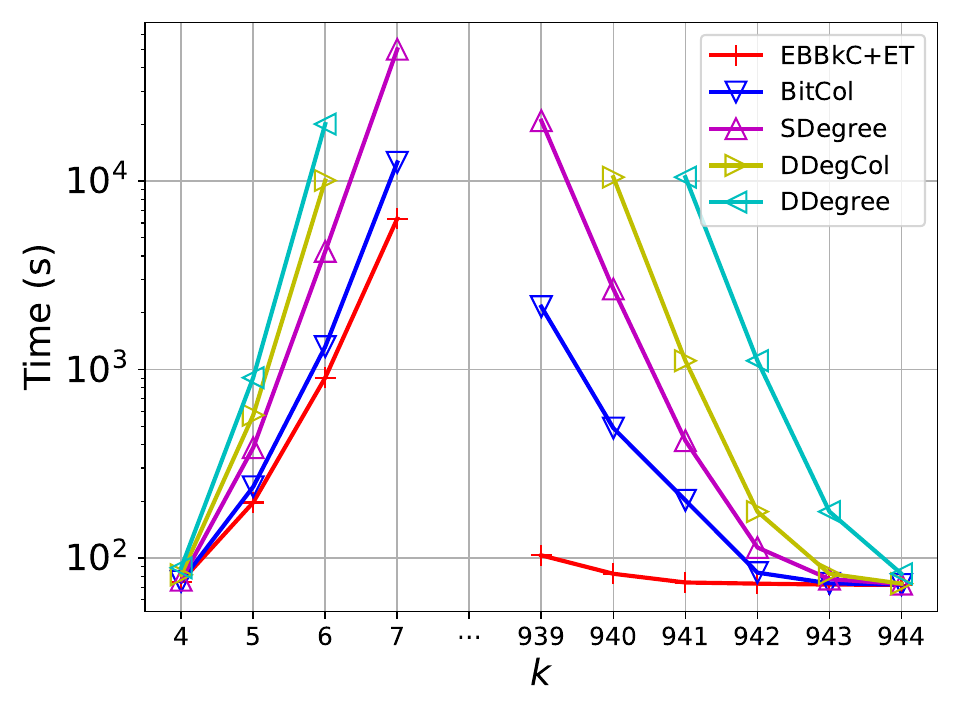}}
\subfigure[{\textsf{CW}}]{
\label{subfig:scale-cw}
    \includegraphics[width=0.32\textwidth]{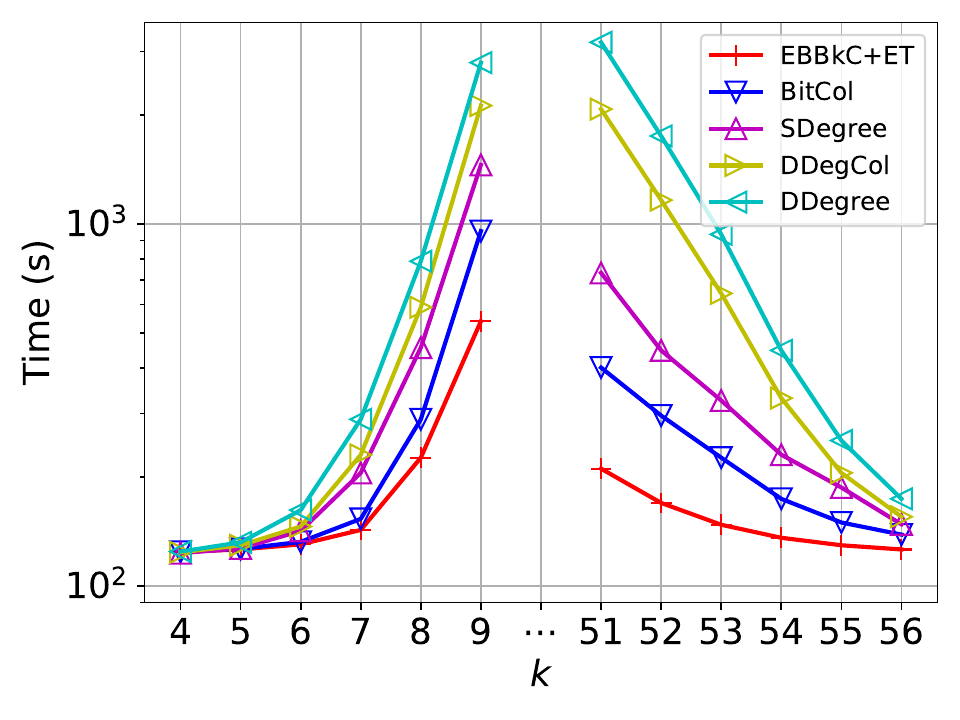}}
\subfigure[{\textsf{WP} }]{
\label{subfig:scale-wp}
    \includegraphics[width=0.32\textwidth]{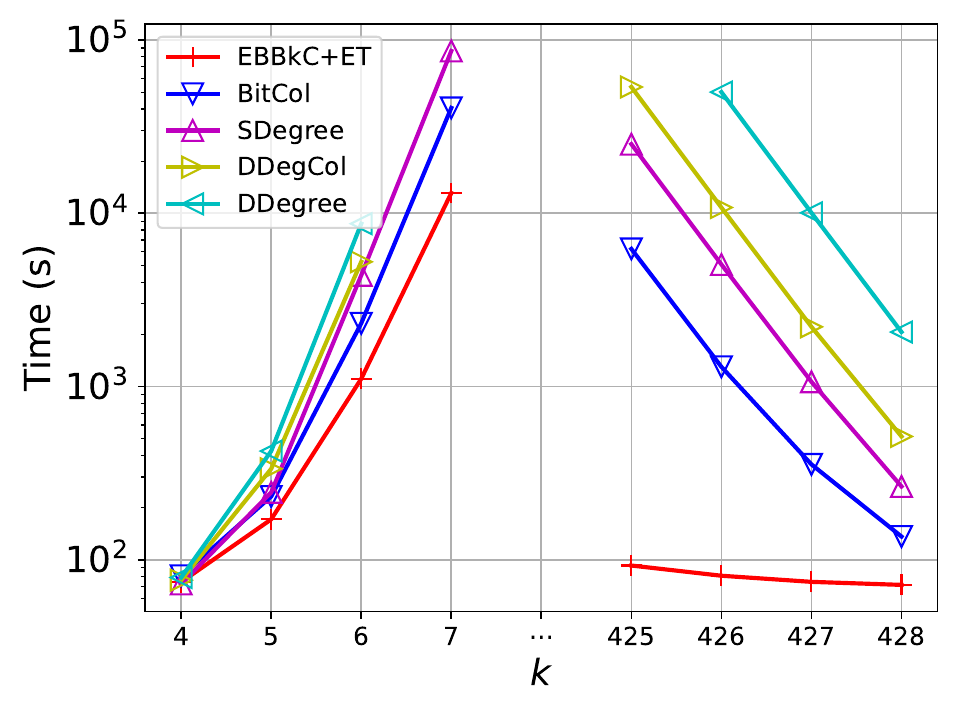}}
\vspace{-3mm}
\caption{{Results on scalability test. }} \label{fig:scale}
\end{minipage}
\vspace{-3mm}
\end{figure*}

\smallskip\noindent
\textbf{(5) Effect of the color-based pruning rules.}
Recall that in Section~\ref{subsec:EBBkC-C}, we introduce two color-based pruning rules, where the first rule is adapted from the existing studies \cite{li2020ordering} and the second rule is newly proposed in this paper. Therefore, we study the effect of the second pruning rule by making comparison between the running time of the algorithms with and without this rule, respectively. We denote the algorithm without this rule by \texttt{EBBkC(stc)+ET}. The results are shown in Figure~\ref{fig:color}, indicated by red lines and yellow lines. We observe that the second pruning rule brings more advantages as $k$ increases. {\cheng A} possible reason is that when $k$ is small, the graph instance $g$ usually has more than $l$ colors ($l\le k$), which cannot {\cheng be} pruned by the second rule while as $k$ increases, sparse graph instances can easily violate the rule and they can be safely pruned.





\smallskip\noindent
\textbf{{(6) Effects of early-termination technique (varying $t$).}}
%
We study the effects of choosing different parameter $t$ (in $t$-plex) for early-termination. We vary $t$ in the range $\{1,2,3,4,5\}$ and report the corresponding running time of \texttt{EBBkC+ET} under different values of $k$. The results are shown in Figure~\ref{fig:stop}. We have the following observations. First, for all values of $k$, \texttt{EBBkC+ET} with $t=2$ always runs faster than that with $t=1$, 
since we can list all $k$-cliques inside a $t$-plex in optimal time when $t=1$ and $t=2$ but we can early-terminate the recursion \textsf{EBBkC\_Rec} in an earlier phase when $t=2$ than that when $t=1$ (Section~\ref{subsec:2-plex}). Second, when the value of $k$ is small, \texttt{EBBkC+ET} with smaller $t$ runs faster while when the value of $k$ is large, \texttt{EBBkC+ET} with larger $t$ runs faster. On \textsf{WK}, for example, when $k=8$, \texttt{EBBkC+ET} with $t=2$ runs the fastest; when $k=12$ and $k=16$, \texttt{EBBkC+ET} with $t=4$ runs the fastest; and when $k=20$, \texttt{EBBkC+ET} with $t=5$ runs the fastest. 
This phenomenon conforms our theoretical analysis in Section~\ref{sec:early-termination} that when the value of $k$ increases, listing $k$-cliques with early-termination on sparser plexes, i.e., $t$-plex with larger $t$, can also be efficient.

\smallskip\noindent
\textbf{(7) Parallelization.}
%
{\cheng We compare different algorithms in a parallel computing setting.}
{\kaixin Specifically, for \texttt{EBBkC} framework, since each produced sub-branch produced from $B=(\emptyset, G, k)$ can be solved independently (see line 6 of Algorithm~\ref{alg:EBBkC-H}), we process all of such sub-branches in parallel.
For existing studies under \texttt{VBBkC} framework \cite{li2020ordering, danisch2018listing, yuan2022efficient}, there are two parallel schemes. One is called \texttt{NodeParallel} (\texttt{NP} for short). This strategy processes each produced sub-branch at $B=(\emptyset, G, k)$ in parallel since they are independent (see line 9 of Algorithm~\ref{alg:VBBkC}). The other is called \texttt{EdgeParallel} (\texttt{EP} for short). This strategy first aggregates the first two consecutive branching steps at $B=(\emptyset, G, k)$ as a unit, and as a result, it would produce $|E(G)|$ sub-branches, then it processes each sub-branch in parallel since they are independent \cite{li2020ordering, danisch2018listing, yuan2022efficient}.}
The results of the comparison are shown in Figure~\ref{fig:parallel}. 
%
{\cheng Consider the comparison between \texttt{VBBkC+ET(NP)} and \texttt{VBBkC+ET(EP)}},
indicated by blue lines and magenta lines. 
The algorithm with edge-level parallelization strategy achieves a higher degree of parallelism since it would produce a number of problem instances with smaller {\cheng and} similar scales, which balances the computational loads across the threads.
%
Then, we consider the comparison between \texttt{EBBkC+ET} and \texttt{VBBkC+ET(EP)}, indicated by red lines and blue lines. Both algorithms can be regarded as adopting edge-level parallelization strategy but differ in the ordering of the edges. We observe that \texttt{EBBkC+ET} runs faster than \texttt{VBBkC+ET(EP)}. The reason is that \texttt{EBBkC} uses truss-based edge ordering, which would produce even smaller problem instances than those of \texttt{VBBkC+ET(EP)}. 

{
\smallskip\noindent
\textbf{(8) Space costs.}
We report the space costs of different algorithms in Figure~\ref{fig:mem}. We have the following observations.
First, the space costs of all algorithms are comparable and a few times larger than the graph size, which is aligned with the space complexity of $O(n+m)$.
Second, \texttt{EBBkC+ET} has the space cost slightly larger than those of others since it employs extra data structures for maintaining edge ordering and conducting early-termination.}



{
\smallskip\noindent
\textbf{(9) Scalability test.}
We test the scalability of the algorithms on three large graphs under the parallel setting with 48 threads and report the results in Figure~\ref{fig:scale}.
All algorithms use the \texttt{EP} parallel scheme.
%
\texttt{EBBkC+ET} outperforms other baselines consistently. In particular, on the largest graph \textsf{WP}, \texttt{EBBkC+ET} is up to about 100$\times$ faster than  \texttt{BitCol} (when $k=425$). 
}

\section{Related Work}
\label{sec:related}

\smallskip\noindent
\textbf{Listing $k$-cliques for arbitrary $k$ values.} 
Existing exact $k$-clique listing algorithms for arbitrary $k$ values can be classified into two categories: backtracking based algorithms \cite{chiba1985arboricity} and branch-and-bound based algorithms \cite{yuan2017effective, li2020ordering, yuan2022efficient}.
Specifically, the algorithm \texttt{Arbo} \cite{chiba1985arboricity} is the first practical algorithm for listing all $k$-cliques, whose time complexity ({\cheng we focus on the worst-case time complexies in this paper}) is $O(km\alpha^{k-2})$, where $m$ and $\alpha$ are the number of edges and the arboricity of the graph, respectively. However, {\cheng it is difficult to paralleize} \texttt{Arbo} 
since it involves a depth-first backtracking procedure. 
To solve this issue, several (vertex-oriented {\chengB branching-based}) branch-and-bound (BB) based algorithms are proposed, 
{\cheng including}
\texttt{Degree} \cite{ortmann2014triangle, finocchi2015clique}, \texttt{Degen} \cite{danisch2018listing}, \texttt{DegenCol} \cite{li2020ordering}, \texttt{DegCol} \cite{li2020ordering}, \texttt{DDegCol} \cite{li2020ordering}, \texttt{DDegree} \cite{li2020ordering}, \texttt{SDegree} \cite{yuan2022efficient} and \texttt{BitCol} \cite{yuan2022efficient}. As introduced in Section~\ref{sec:vbbkc}, these algorithms follow the 
{\cheng same}
{\chengB framework} but differ in how the vertex orderings are adopted {\cheng and some implementation details}.
Specifically, \texttt{Degree} uses a global degree ordering, with which {\cheng the size of} the largest problem instance produced 
can be bounded by $\eta$ (i.e., the $h$-index of the graph).
\texttt{Degree} has a time complexity of $O(km(\eta/2)^{k-2})$. \texttt{Degen} uses a global degeneracy ordering, with which {\cheng the size of} the largest problem instance produced 
can be bounded by $\delta$ (i.e., the degeneracy of the graph).
\texttt{Degen} has a time complexity of $O(km(\delta/2)^{k-2})$. Since it has been proven that $\delta\le 2\alpha-1$ \cite{zhou1994edge}, \texttt{Degen} is the first algorithm that outperforms \texttt{Arbo} theoretically and practically. However, \texttt{Degen} suffers from the issue that it cannot {\chengC efficiently} list the clique whose size is near $\omega$ (i.e., the size of a maximum clique). 
To solve this issue, the authors in \cite{li2020ordering} propose several {\cheng algorithms, which are based on} {\chengB color-based vertex orderings}. \texttt{DegCol} and \texttt{DegenCol} first color the graph with some graph coloring algorithms (e.g., inverse degree based \cite{yuan2017effective} and inverse degeneracy based \cite{hasenplaugh2014ordering}) and generate the an ordering of vertices {\cheng based on the color values of the vertices}.
{\cheng While the color values} of the vertices can significantly prune the unpromising search paths and make the algorithm efficient to list near-$\omega$ cliques, \texttt{DegCol} and \texttt{DegenCol} both {\cheng have} the time complexity of $O(km(\Delta/2)^{k-2})$, where $\Delta$ is the maximum degree of a vertex in the graph since the {\chengB color-based} ordering cannot guarantee a tighter size bound for the produced problem instance. 
To overcome this limitation, \texttt{DDegCol} adopts a hybrid ordering, where it first uses degeneracy ordering to branch the universal search space such that the size of each produced problem instance is bounded by $\delta$, then it uses {\chengB color-based} orderings to branch each produced sub-branch. Following a similar procedure, \texttt{DDegree} also combines degeneracy ordering and degree ordering to branch the universal search space and the sub-spaces, respectively. In this way, \texttt{DDegCol} and \texttt{DDegree} have the 
time complexity of $O(km(\delta/2)^{k-2})$. 
\texttt{BitCol} and \texttt{SDegree} implement \texttt{DDegCol} and \texttt{DDegree} in a more efficient way with SIMD instructions, respectively, 
{\cheng and}
retain the same time complexity as that of \texttt{DDegCol} and \texttt{DDegree}. 
{\cheng In contrast, our \texttt{EBBkC} algorithm is an edge-oriented {\chengB branching-based BB} algorithm based on edge orderings and has its time complexity (i.e., {\chengC $O(m\delta + km(\tau/2)^{k-2})$}) better than 
that of the state-of-the-art algorithms including \texttt{DDegCol}, \texttt{DDegree}, \texttt{BitCol} and \texttt{SDegree} (i.e., $O(km(\delta/2)^{k-2})$), where $\tau$ is strictly smaller than $\delta$. 
}
Some other algorithms, e.g., \texttt{MACE} \cite{makino2004new}, 
{\cheng which are originally designed for}
the maximal clique enumeration problem, can also be adapted to listing $k$-cliques problem \cite{makino2004new, schmidt2009scalable, takeaki2012implementation}. These adapted algorithms are mainly based on the well-known Bron-Kerbosch (BK) algorithm \cite{bron1973algorithm} with some size constraints to ensure that each clique to be outputted has exactly $k$ vertices. However, these adapted algorithm are even less efficient than \texttt{Arbo} theoretically and practically, e.g., \texttt{MACE} has a time complexity of $O(kmn\alpha^{k-2})$ \cite{schmidt2009scalable, li2020ordering}, 
{\cheng and cannot}
handle large real graphs. 

\smallskip
\noindent
\textbf{Listing $k$-cliques for special $k$ values.} There are two special cases for $k$-clique listing problem: when $k=3$ and when $k=\omega$. When $k=3$, the problem reduces to triangle listing problem \cite{chiba1985arboricity, latapy2008main, ortmann2014triangle}. The state-of-the-art algorithm for triangle listing problem 
follows a vertex ordering based framework \cite{ortmann2014triangle}, whose time complexity is $O(m\alpha)$, where $\alpha$ is the arboricity of the graph. When $k=\omega$, the problem reduces to maximum clique search problem \cite{ostergaard2002fast, pattabiraman2015fast, lu2017finding, chang2019efficient}. The state-of-the-art algorithm for maximum clique search problem first transforms the maximum clique problem to a set of clique finding sub-problems, then it conducts a branch-and-bound framework to iteratively check whether a clique of a certain size can be found in the sub-problem \cite{chang2019efficient}, whose time complexity is $O(n 2^n)$. 
{\cheng We note that these existing algorithms cannot solve the $k$-clique listing problem for arbitrary $k$'s.}

\balance

\section{Conclusion}
\label{sec:conclusion}
{\chengB In this paper, we study the $k$-clique listing problem, a fundamental graph mining operator with diverse applications in various networks. We propose a new branch-and-bound framework, named \texttt{EBBkC}, which incorporates an edge-oriented branching strategy. This strategy expands a partial $k$-clique using two connected vertices (an edge), offering new opportunities for optimization. Furthermore, to handle dense graph sub-branches more efficiently, we develop specialized algorithms that enable early termination, contributing to improved performance. We conduct extensive experiments on 19 real graphs, and the results consistently demonstrate \texttt{EBBkC}'s superior performance compared to state-of-the-art \texttt{VBBkC}-based algorithms.
In the future, {we plan to explore the potential applications of our \texttt{EBBkC} technique to other cohesive subgraph mining tasks including clique and connected dense subgraph mining}. In addition, we will explore the possibility of adopting our algorithms to list $k$-cliques' counterparts in other types of graphs such as bipartite graphs.
}
%

\begin{acks}
We would like to thank the anonymous reviewers for providing constructive feedback and valuable suggestions. This research is supported by the Ministry of Education, Singapore, under its Academic Research Fund (Tier 2 Award MOE-T2EP20221-0013, Tier 2 Award MOE-T2EP20220-0011, and Tier 1 Award (RG77/21)). Any opinions, findings and conclusions or recommendations expressed in this material are those of the author(s) and do not reflect the views of the Ministry of Education, Singapore.
\end{acks}

\appendix


\section{Proof of Theorem~\ref{theo:ebbkc-t}}

{\Revise The running time of \texttt{EBBkC-T} {\chengC consists of} the time of generating the truss-based edge ordering, which is $O(\delta \cdot |E(G)|)$ \cite{che2020accelerating}, and the time of the recursive listing procedure (lines 6-10 of Algorithm~\ref{alg:EBBkC-T}). Consider the latter one.}
Given a branch $B=(S, g, l)$, we denote by $T(g, l)$ the upper bound of time cost of listing $l$-cliques under such branch. When $k\ge 3$, with different values of $l$, we have the following recurrences. 
\begin{equation}
\label{eq:ebbkc-recurrence}
T(g, l) \le \left\{
\begin{array}{lc}
    O(k\cdot |V(g)|) & l = 1\\
    O(k\cdot |E(g)|) &  l = 2 \\
    \sum_{e_i\in E(g)} \Big( T(g_i, l-2) + T'(g_i) \Big) & 3\le l\le k-2 \\
\end{array}
\right.
\end{equation}
where $T'(g_i)$ is the time for constructing $g_i$ given $B=(S, g, l)$ (line 9 {\chengB of Algorithm~\ref{alg:EBBkC-T}}). We first show the a lemma which builds the relationship between $g_i$ and $g$, then we present the complexity of $T'(g_i)$. 

\begin{lemma}
\label{lemma:sum-edge-bound}
Given a branch $B=(S,g,l)$ and the sub-branches $B_i=(S_i,g_i,l_i)$ produced at $B$. We have (1) when $l<k$, 
\begin{equation}
    \sum_{e_i\in E(g)} |E(g_i)| < \frac{\tau^2}{4} \cdot |E(g)|
\end{equation}
and (2) when $l=k$ (we note that the branch $B$ corresponds to the universal branch $B=(\emptyset, G, k)$),
\begin{equation}
    \sum_{e_i\in E(G)} |E(g_i)| < \frac{\tau^2}{2} \cdot |E|
\end{equation}
\end{lemma}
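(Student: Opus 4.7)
My plan is to derive both inequalities from a single structural identity, and then apply two different counting arguments depending on whether we are at the initial branch or at a strict sub-branch. The common identity is that, under the inherited edge ordering $\pi_\tau(G)$, $\sum_{e_i\in E(g)} |E(g_i)|$ counts exactly the number of 4-cliques in $g$. The reason is that every pair $(e_i,(u,v))$ with $(u,v)\in E(g_i)$ forces $V(e_i)\cup\{u,v\}$ to span a 4-clique of $g$ whose five edges other than $e_i$ all come after $e_i$ in $\pi_\tau(G)$; conversely, each 4-clique $K$ of $g$ has a unique earliest edge $e^*$ in the ordering, and its opposite edge $e^{**}$ then lies automatically in $E(g_{e^*})$. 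So each 4-clique of $g$ contributes exactly once to the sum.

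For case (2), $l=k$, I would bound each summand pointwise using the very definition of $\tau=\tau(G)=\max_{e_i\in E(G)}|V(g_i)|$: we have $|V(g_i)|\le \tau$, hence $|E(g_i)|\le \binom{|V(g_i)|}{2}\le \binom{\tau}{2}<\tau^2/2$. Summing over $e_i\in E(G)$ yields $\sum|E(g_i)|<\frac{\tau^2}{2}|E|$ immediately.

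For case (1), $l<k$, the pointwise bound is $|E(g_i)|\le\binom{\tau-2}{2}\approx\tau^2/2$, which is a factor of about two too weak; the missing factor must come from averaging the contribution of each 4-clique over all six of its edges. The key auxiliary fact is that $|V(g)|\le \tau$ at every sub-branch below the initial one, which I would establish by a one-line induction on sub-branching depth using that $|V(g_1)|\le \tau$ by definition and every deeper sub-branch is an induced subgraph of $g_1$. Given $|V(g)|\le \tau$, every edge $e$ of $g$ has $|N_g(e)|\le |V(g)|-2\le \tau-2$, and thus $|E(g[N_g(e)])|\le\binom{\tau-2}{2}$. Since each 4-clique of $g$ contributes one opposite edge to $g[N_g(e)]$ for each of its six edges $e$,
\[
\#\{\text{4-cliques in } g\}=\frac{1}{6}\sum_{e\in E(g)}|E(g[N_g(e)])|\le \frac{(\tau-2)(\tau-3)}{12}|E(g)|,
\]
and the algebraic check $(\tau-2)(\tau-3)/12<\tau^2/4\iff 2\tau^2+5\tau-6>0$, valid for every $\tau\ge 1$, completes case (1).

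The main obstacle I anticipate is case (1): the pointwise bound is a factor of roughly $6$ too weak, so the proof must instead use the global identity $\sum_e |E(g[N_g(e)])|=6\cdot\#\{\text{4-cliques}\}$ to gain that factor. A secondary technical point is carefully propagating the bound $|V(g)|\le \tau$ through the sub-branches under the \emph{global} ordering $\pi_\tau(G)$ (rather than a locally recomputed truss ordering), which rests on the inclusion $\overline{g}_i\subseteq G[\{e_j:j\ge i\text{ in }\pi_\tau(G)\}]$ so that the common-neighborhood bound inherited from the initial truss decomposition continues to apply at every level.
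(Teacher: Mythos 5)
Your proof is correct, and for part (1) it takes a genuinely different route from the paper's. For part (2) the two arguments coincide: bound each $|E(g_i)|$ pointwise by $\binom{\tau}{2}<\tau^2/2$ using $|V(g_i)|\le\tau$ and sum. For part (1), the paper's argument is purely positional: since $E(g_i)\subseteq\{e_{i+1},\dots,e_{|E(g)|}\}$ it bounds $|E(g_i)|\le |E(g)|-i$, sums the arithmetic series to get $|E(g)|(|E(g)|-1)/2$, and then invokes $|E(g)|\le\tau(\tau-1)/2$ (from $|V(g)|\le\tau$ at every non-initial branch) to reach $\tfrac{\tau(\tau-1)}{4}(|E(g)|-1)<\tfrac{\tau^2}{4}|E(g)|$. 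You instead identify $\sum_i|E(g_i)|$ with the number of $4$-cliques of $g$ (each counted exactly once, at its earliest edge under the inherited total order — this identity is valid) and recover the missing constant by double-counting each $4$-clique over its six edges, obtaining the tighter coefficient $\tfrac{(\tau-2)(\tau-3)}{12}$. Both proofs ultimately rest on the same two facts (the inherited ordering is a total order on $E(g)$, and $|V(g)|\le\tau$ below the initial branch, which your one-line induction correctly establishes), but they harvest the needed constant from different places: the paper from the triangular sum $\sum_i(|E(g)|-i)$, you from the $6$-fold edge multiplicity of a $4$-clique. Your route buys a cleaner combinatorial meaning for the quantity and a better constant; the paper's is more elementary and needs no exactness claim about $4$-clique counting. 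Two cosmetic points only: your closing remark that the pointwise bound is ``a factor of roughly $6$ too weak'' contradicts your earlier (correct) ``factor of about two''; and, like the paper, your strict inequalities degenerate when $E(g)=\emptyset$ or $\tau=0$, which is harmless for the theorem.
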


\begin{proof}
\smallskip
\noindent\textbf{Case $l < k$.} $E(g_i)$ can be obtained by checking {\chengB for} each edge in $\{e_{i+1}, \cdots, e_{|E(g)|}\}$ whether it is in $E[e_i]$. Clearly, we have $|E(g_i)|\le |E(g)| - i$. Then 
\begin{equation}
\label{eq:intermediate}
    \sum_{e_i\in E(g)} |E(g_i)| \le \sum_{i=1}^{|E(g)|} (|E(g)| - i) =\frac{|E(g)|(|E(g) - 1|)}{2} 
\end{equation}
According to Lemma~\ref{lemma:comparison}, each branch contains at most $\tau$ vertices, which indicates that $|E(g)|$ is at most $\tau(\tau-1)/2$. Therefore, 
\begin{equation}
    \frac{|E(g)|(|E(g) - 1|)}{2} \le \frac{\tau(\tau-1)}{4} \cdot (|E(g)-1|) < \frac{\tau^2}{4} \cdot |E(g)|
\end{equation}

\smallskip
\noindent\textbf{Case $l = k$.} According Lemma~\ref{lemma:comparison}, $V[e]\le \tau$. Then $E[e]$ contains at most $\tau(\tau-1)/2$ edges. Thus,
\begin{equation}
    \sum_{e_i\in E} |E(g_i)| \le \sum_{e_i\in E} \frac{\tau(\tau-1)}{2} < \frac{\tau^2}{2} \cdot |E|
\end{equation}
which completes the proof. 
\end{proof}
Consider $T'(g_i)$. When $l=3$, for each edge $e_i\in E(g)$, we just need to compute $V(g_i)$ by checking {\chengB for} each vertex in $V(g)$ whether it is in $V[e_i]$, which costs at most $O(\tau)$ {\chengB time} since $|V(g)|\le \tau$. When $l>3$, for each edge $e_i\in E(g)$, we need to compute both $V(g_i)$ and $E(g_i)$. Therefore, 
\begin{equation}
\label{eq:close-T-prime}
    \sum_{e_i\in E(g)} T'(g_i) = \left\{
\begin{array}{lc}
    O(\tau \cdot |E(g)|) & l = 3 \\
    O(\tau^2 \cdot |E(g)|) & l > 3 \\
\end{array}
\right.
\end{equation}
Given the above analyses, we prove the Theorem~\ref{theo:ebbkc-t} as follows.


\begin{proof}
We prove by induction on $l$ to show that 
\begin{equation}
\label{eq:T-close}
T(g,l) \le \lambda \cdot (k+2l) \cdot |E(g)| \cdot \big( \frac{\tau}{2} \big)^{l-2}
\end{equation}
where $\lambda$ is positive constant.
Since the integer $l$ decreases by 2 when branching, then when $k$ is odd (resp. even), $l$ would always be odd (resp. even) in all branches. Thus, we need to consider both cases. 

\smallskip\noindent
\textit{\underline{Base Case ($l=2$ and $l=3$).}} When $l=2$, it is easy to verify that there exists $\lambda$ such that $T(g,2) \le \lambda \cdot k \cdot |E(g)|$ satisfies Eq.~(\ref{eq:T-close}). When $l=3$, we put Eq.~(\ref{eq:close-T-prime}) into Eq.~(\ref{eq:ebbkc-recurrence}), {\chengB and} it is easy to verify that $T(g,3)\le \lambda \cdot k \cdot |E(g)|\cdot \tau$, which also satisfies Eq.~(\ref{eq:T-close}).


\smallskip\noindent
\textit{\underline{Induction Step.}} We first consider the case when $3<l<k$. 
Assume the induction hypothesis, i.e., Eq.~(\ref{eq:T-close}), is true for $l=p$ ($p+2<k$). When $l=p+2$, 
\begin{equation}
\begin{aligned}
& T(g,p+2) \le \sum_{e_i\in E({g})} \Big(  T(g_i,p) + T'(g_i) \Big) \\
&\le \lambda \cdot \tau^2 \cdot |E(g)| + \sum_{e_i\in E({g})}  \lambda \cdot (k+2p) \cdot |E(g_i)| \cdot \big( \frac{\tau}{2} \big)^{p-2} \\
&< \lambda \cdot \tau^2 \cdot |E(g)| + \lambda \cdot (k+2p) \cdot |E(g)| \cdot \big( \frac{\tau}{2} \big)^{p}\\
&\le \lambda \cdot (k+2p+4) \cdot |E(g)| \cdot \big( \frac{\tau}{2} \big)^{p}\\
\end{aligned}
\end{equation}
The first inequality derives from recurrence. The second inequality {\chengB derives from} Eq.~(\ref{eq:close-T-prime}) and the induction hypothesis. The third inequality derives from Lemma~\ref{lemma:sum-edge-bound} (case $l<k$). The fourth inequality {\chengB derives from} the fact that $\tau^2 \le 4\cdot ({\tau}/{2})^{p}$ when $p\ge 2$. 

Then consider the cases when $l=k$. 
\begin{equation}
\begin{aligned}
& T(G,k) \le \sum_{e_i\in E(G)} \Big( T(g_i, k-2) +T'(g_i) \Big)  \\
&\le \lambda \cdot \tau^2 \cdot E(G) + \sum_{e_i\in E(G)} \lambda \cdot (3k-4) \cdot |E(g_i)| \cdot \big( \frac{\tau}{2} \big)^{k-4} \\
&< \lambda \cdot \tau^2 \cdot E(G) + 2\lambda \cdot (3k-4) \cdot |E(G)| \cdot \big( \frac{\tau}{2} \big)^{k-2} \\
&< 6\lambda \cdot k \cdot |E(G)| \cdot  \big( \frac{\tau}{2} \big)^{k-2}
\end{aligned}
\end{equation}
The first inequality derives from recurrence. The second inequality {\chengB derives from} Eq.~(\ref{eq:close-T-prime}) and the induction hypothesis. The third inequality derives from Lemma~\ref{lemma:sum-edge-bound} (case $l=k$). The fourth inequality {\chengB derives from} the fact that $\tau^2 \le 4\cdot ({\tau}/{2})^{k-2}$ when $k\ge 4$.

\smallskip\noindent
\textit{\underline{Conclusion.}} We conclude that given a graph $G=(V,E)$ and an integer $k\ge 3$, the time complexity of \texttt{EBBkC} can be upper bounded by {\Revise $O(\delta \cdot |E(G)| + k\cdot |E(G)|\cdot ({\tau}/{2})^{k-2})$}. 
\end{proof}

\begin{figure}[t]
\subfigure[An example graph $G$.]{
    \label{subfig:g}
    \includegraphics[width=0.20\textwidth]{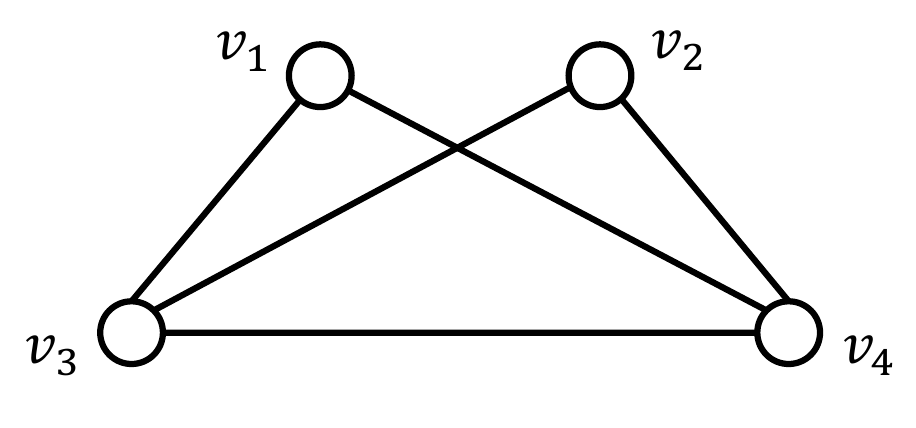}}
\subfigure[Branching with \texttt{VBBkC} on $G$.]{
    \label{subfig:example-vbbkc}
    \includegraphics[width=0.43\textwidth]{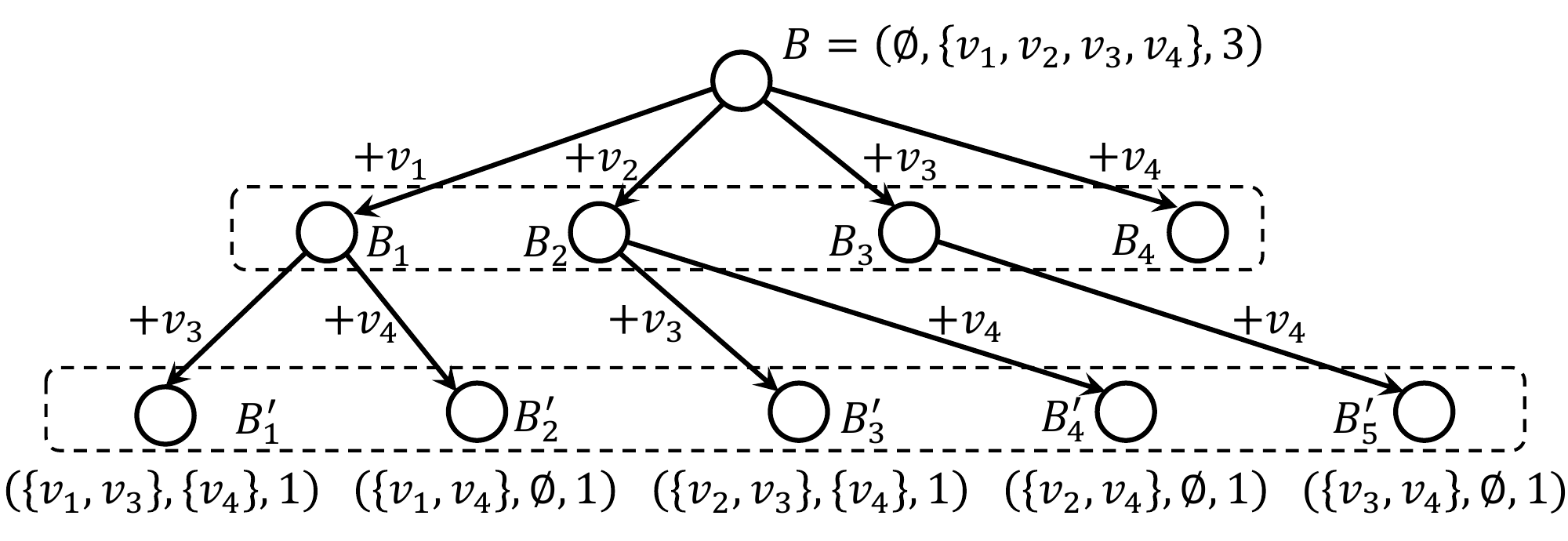}}
\subfigure[Branching with \texttt{EBBkC} that can generate the same branches as those with \texttt{VBBkC}.]{
    \label{subfig:example-ebbkc}
    \includegraphics[width=0.43\textwidth]{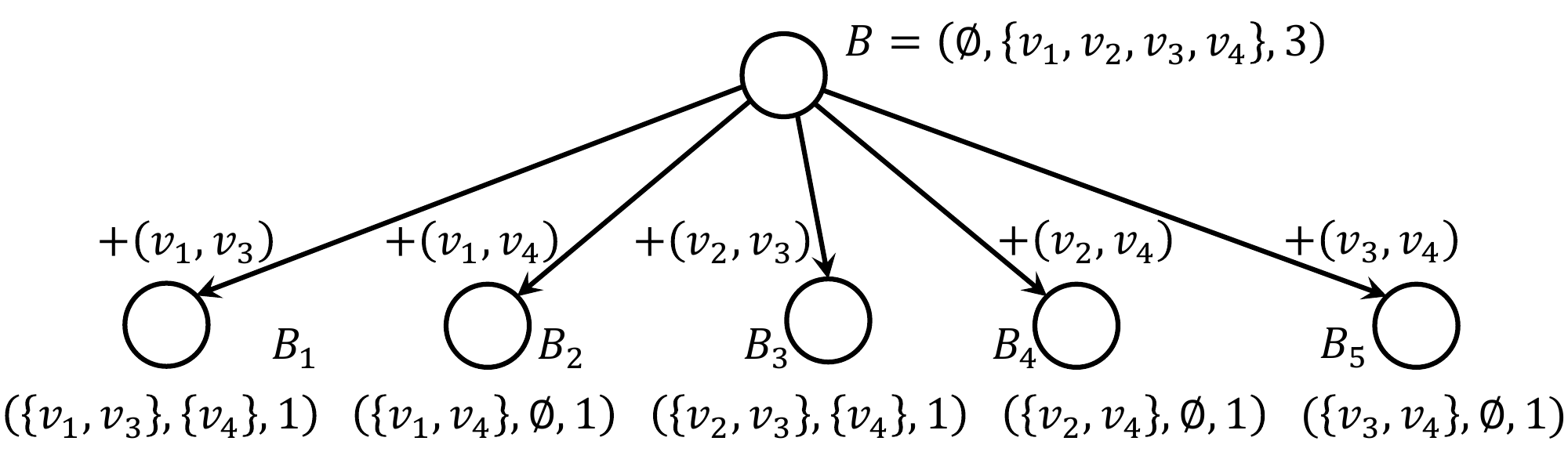}}
\subfigure[Branching with \texttt{EBBkC} that can generate the branches that \texttt{VBBkC} cannot generate.]{
    \label{subfig:example-ebbkc-2}
    \includegraphics[width=0.43\textwidth]{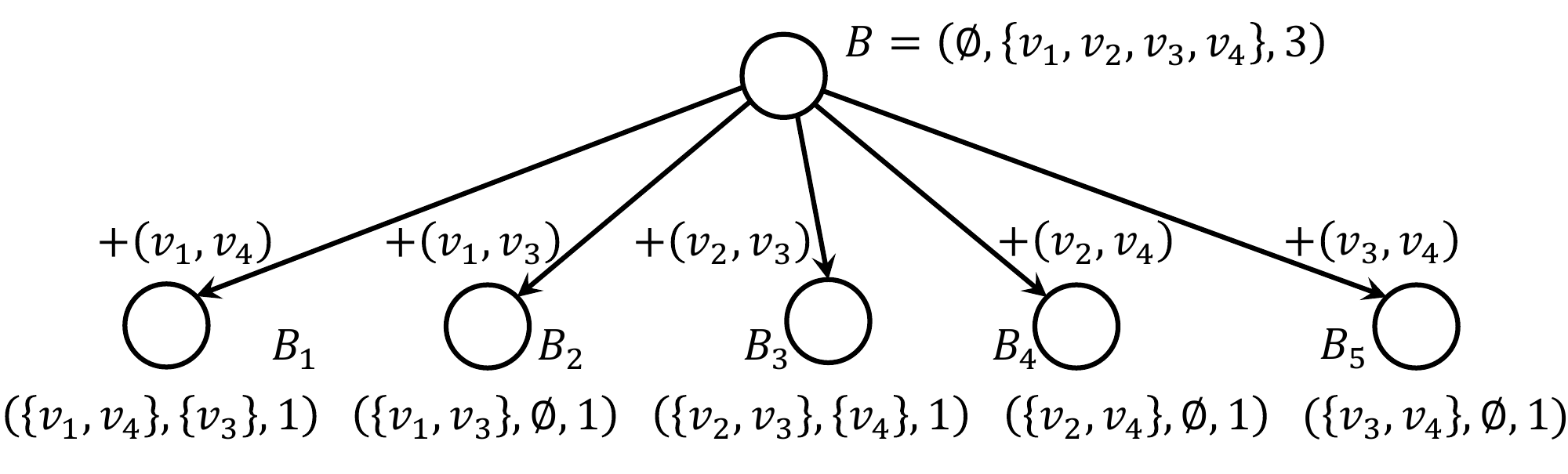}}
\vspace{-3mm}
\caption{Illustration of the advantages of \texttt{EBBkC} over \texttt{BBkC}.}
\label{fig:example}
\end{figure}

\section{A Counter-Example for the Statement in Section~\ref{subsec:EBBkC-T}}

\smallskip\noindent
\textbf{Example.} 
To illustrate the statement in Section~\ref{subsec:EBBkC-T}, we consider the example in Figure~\ref{fig:example}. The graph $G$ involves 4 vertices and 5 edges. Assume that we aim to list 3-cliques in $G$, i.e., $k=3$. 
Consider the degeneracy ordering of the vertices $\pi_{\delta}=\langle v_1, v_2, v_3, v_4 \rangle$. Then \texttt{VBBkC} would generate 5 branches following the vertex ordering {\chengB $\pi_{\delta}$}, and the illustration is shown in Figure~\ref{subfig:example-vbbkc}. Correspondingly, we can construct an edge ordering $\pi_e=\langle (v_1,v_3), (v_1,v_4), (v_2,v_3), (v_2,v_4), (v_3,v_4) \rangle$ by following which we can produce the same branches as that of \texttt{VBBkC}. Consider the truss-based edge ordering $\pi_{\tau}=\langle (v_1,v_4), (v_1,v_3), (v_2,v_3), (v_2,v_4), (v_3,v_4) \rangle$. The produced branches are shown in Figure~\ref{subfig:example-ebbkc-2}. We note that these produced branches cannot be generated with any vertex orderings. To see this, consider the branch $B_1$ and $B_3$ in Figure~\ref{subfig:example-ebbkc-2}. If there exists a vertex ordering $\pi_v$ that can generate $B_1$, then vertex $v_3$ must be ordered behind $v_4$ in $\pi_v$. Similarly, for branch $B_3$, vertex $v_4$ must be ordered behind $v_3$ in $\pi_v$. This derives a contradiction.

\begin{figure*}[t]
\begin{minipage}{0.48\linewidth}
\subfigure[\textsf{ST}]{
    \includegraphics[width=0.48\textwidth]{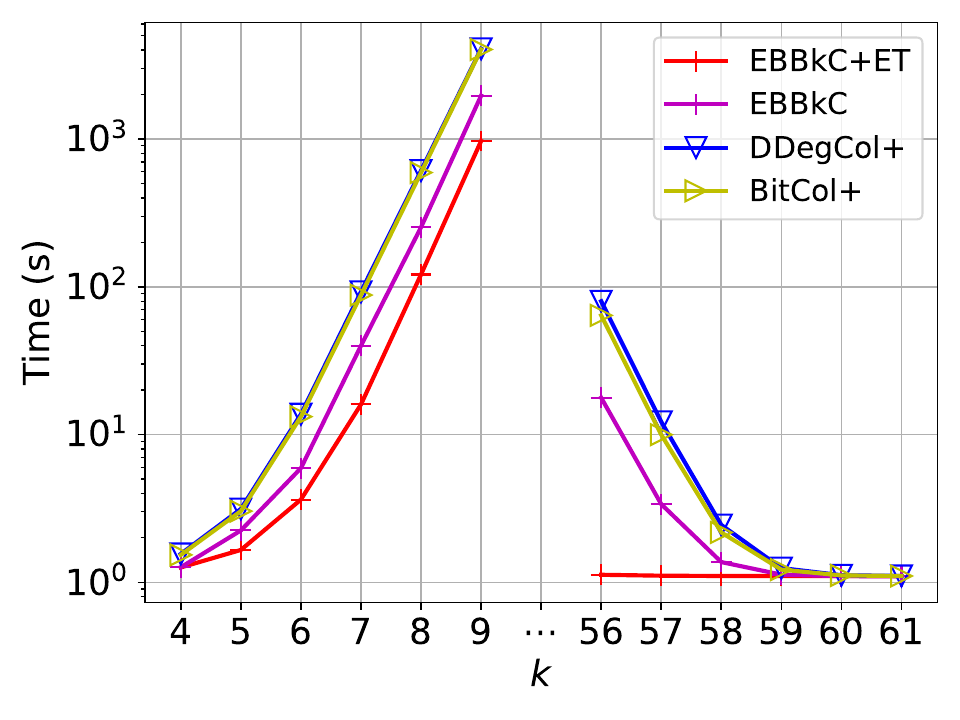}}
\subfigure[\textsf{OR}]{
    \includegraphics[width=0.48\textwidth]{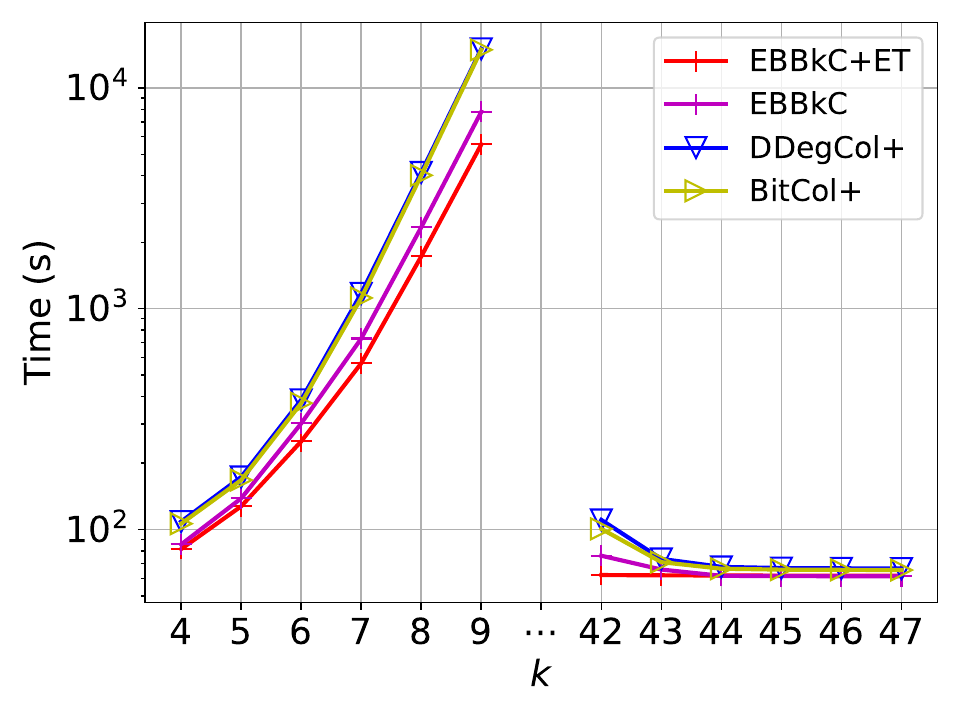}}
\caption{{Ablation studies.} }
\label{fig:variant-large}
\end{minipage}
\begin{minipage}{0.48\linewidth}
\subfigure[\textsf{ST}]{
    \includegraphics[width=0.48\textwidth]{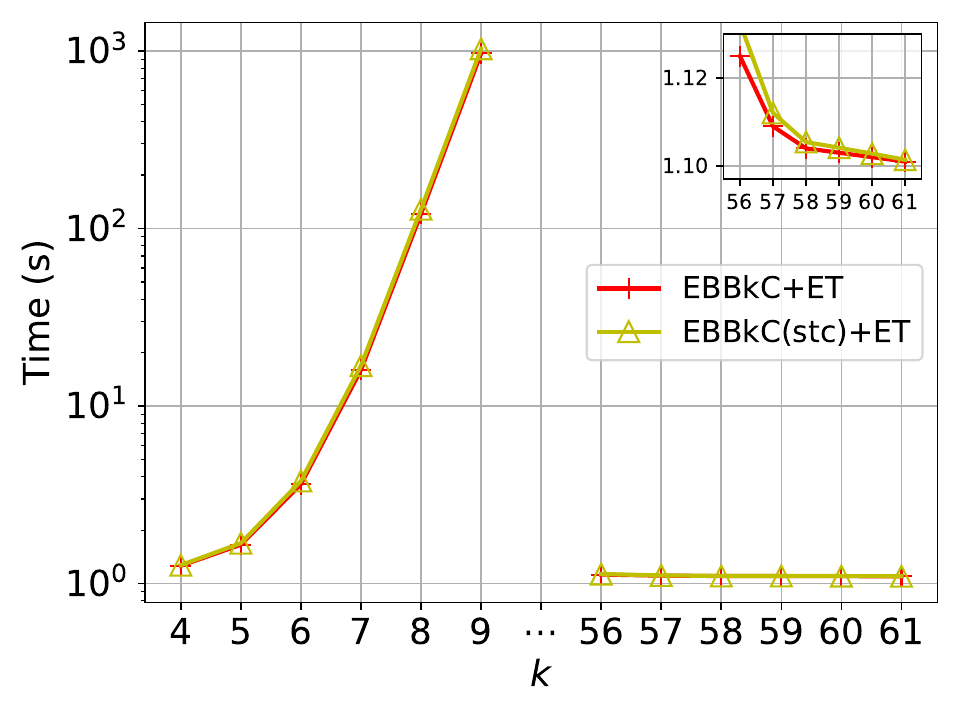}}
\subfigure[\textsf{OR}]{
    \includegraphics[width=0.48\textwidth]{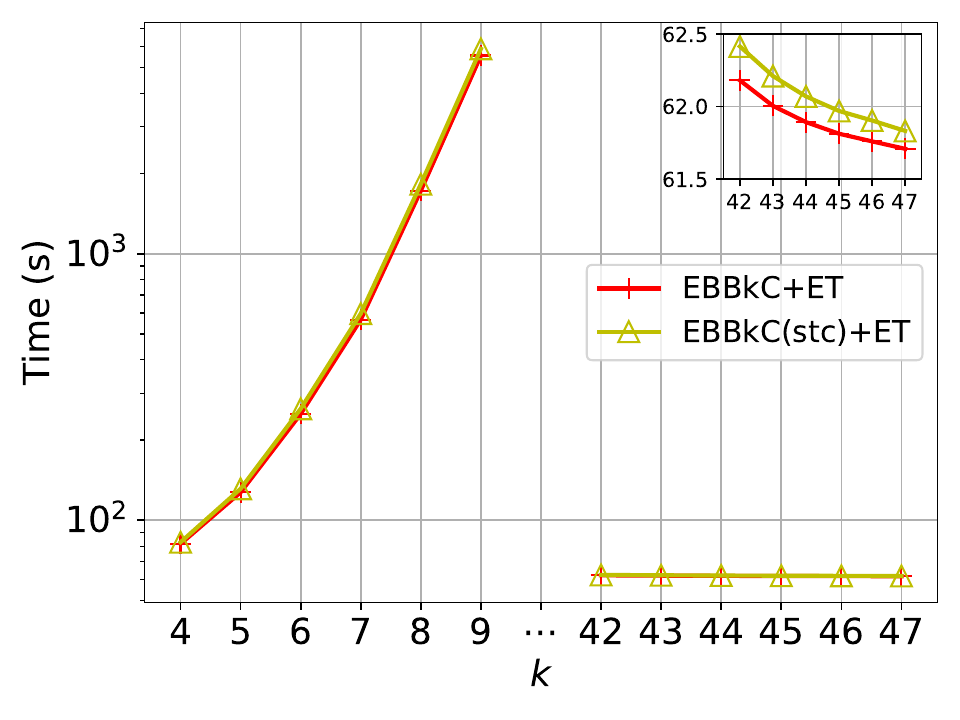}}
\caption{Effects of the color-based pruning rules (comparison between the algorithms w/ and w/o the Rule (2)).}
\label{fig:color-large}
\end{minipage}
\end{figure*}

\section{Time complexity of Algorithm~\ref{alg:k-plex}}




\begin{theorem}
\label{theo:k-plex}
Given a branch $B=(S,g,l)$ with $g$ being a $t$-plex ($t\ge 3$), \texttt{kCtPlex} lists all $k$-cliques within $B$ in at most $O(|E(g)|+ t \cdot \tbinom{|V(g)|)}{l}+k\cdot c(g,l))$ time, where $c(g,l)$ is the number of $l$-cliques in $g$.
\end{theorem}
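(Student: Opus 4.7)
The plan is to decompose the running time of \texttt{kCtPlex} into three parts matching the three terms of the stated bound: (i) the one-time preprocessing on lines 1--2, (ii) the non-output work in the recursion tree, and (iii) the actual emission of $k$-cliques. For (i), observe that because $g$ is a $t$-plex, every vertex has at most $t-1$ non-neighbors in $g$ (i.e.\ at most $t-1$ neighbors in $g_{inv}$), so $g_{inv}$ is sparse and can be built by a pair-scan over $V(g)$ in $O(|V(g)|^2)$ time; moreover a $t$-plex with $|V(g)|\ge 2t$ has $|E(g)|\ge |V(g)|(|V(g)|-t)/2=\Omega(|V(g)|^2)$, so this is $O(|E(g)|)$, with the degenerate case $|V(g)|<2t$ absorbed into the $O(t\cdot\binom{|V(g)|}{l})$ term. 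Identifying $I$ is then a single pass over $g_{inv}$'s adjacency lists.

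For (ii), I would first argue by induction on the recursion depth that every recursive call $(S',C,l')$ satisfies the invariant that $S'\setminus S$ is a clique of $g$ and $C$ consists of the vertices of $V(g)\setminus I$ that (a) are common neighbors of $S'\setminus S$ in $g$ and (b) have index strictly greater than every index in $S'\setminus S$ under the fixed ordering from line 3. The exclusion of $\{v_1,\ldots,v_i\}$ in Eq.~\eqref{eq:kCtPlex} supplies the canonical-indexing part, while the removal of $N(v_i,g_{inv})$ preserves the common-neighbor part. Consequently, calls at depth $d$ are in one-to-one correspondence with a subfamily of the $d$-cliques in $g[V(g)\setminus I]$, so their number is at most $\binom{|V(g)|}{d}$.

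Next I would show that a call with candidate set $C$ performs $O(|C|\cdot t)$ non-output work by storing $C$ as a doubly-linked list: advancing through $C$ is $O(1)$ per step, and producing $C_i$ for recursion (and restoring it afterwards) amounts to deleting and reinserting at most $t-1$ vertices of $N(v_i,g_{inv})$, which is $O(t)$. Charging the $O(t)$ setup of each sub-branch to the child call it creates yields a total non-output cost of $O(t)$ times the number of non-root recursive calls, i.e.\ $O\bigl(t\cdot\sum_{d=1}^{l}\binom{|V(g)|}{d}\bigr)$. Using the standard bound on prefix sums of binomial coefficients together with the pruning on line 13, which discards the range where $l$ approaches $|V(g)|/2$, this collapses to $O(t\cdot\binom{|V(g)|}{l})$. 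For (iii), each output at line 6 or line 10 writes $k$ vertices in $O(k)$ time, and the correctness argument (every clique emitted is a distinct $l$-clique of $g$, since vertices of $I$ are adjacent to all others) gives a total of $c(g,l)$ outputs, contributing $O(k\cdot c(g,l))$.

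The main obstacle I anticipate is the amortized bookkeeping that reduces the per-call cost to $O(t)$ per spawned sub-branch rather than the naive $O(|C|)$: this relies crucially on the doubly-linked representation and on the sparsity $|N(v_i,g_{inv})|\le t-1$ guaranteed by $g$ being a $t$-plex. A subsidiary technical point is verifying that the prefix sum $\sum_{d\le l}\binom{|V(g)|}{d}$ is $O(\binom{|V(g)|}{l})$ in the regimes actually reached by the algorithm after the $|C_i|+|I|\ge l_i$ pruning, which is the place where sparsity of $g_{inv}$ and the canonical-ordering pruning interact most delicately.
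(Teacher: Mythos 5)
Your decomposition is essentially the paper's: $O(|E(g)|)$ for lines 1--2, $O(k\cdot c(g,l))$ for output, and an $O(t)$-per-sub-branch charge for the recursion justified by $|N(v_i,g_{inv})|\le t-1$ (the paper realizes this with a shared array plus pointer/marking rather than your doubly-linked list, but the amortization is the same). The one real difference is how the recursion cost is tied to $\binom{|V(g)|}{l}$: you count nodes depth-by-depth via a clique/canonical-index invariant and then sum $\sum_{d\le l}\binom{|V(g)|}{d}$, whereas the paper writes the recurrence $T'(|C|,l')=\sum_i\bigl(T'(|C_i|,l'-1)+O(t)\bigr)$ with $|C_i|\le|C|-i$, massages it into the Pascal-type identity $T^*(|C|,l')=T^*(|C|-1,l'-1)+T^*(|C|-1,l')$, and reads off the binomial bound directly---which spares it the explicit prefix sum. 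The soft spot you flag (that $\sum_{d\le l}\binom{n}{d}=O(\binom{n}{l})$ fails when $l$ exceeds roughly $n/2$, and that the line-13 pruning is what must save you) is real, but it is not a gap relative to the paper: the exact solution of the paper's recurrence with boundary conditions $T^*(\cdot,0)=T^*(0,\cdot)=O(t)$ is itself $O\bigl(t\sum_{j\le l'}\binom{|C|}{j}\bigr)$, so the paper's "easy to verify" step quietly elides the same issue. Your treatment of line 1 (arguing $|E(g)|=\Omega(|V(g)|^2)$ for a $t$-plex with $|V(g)|\ge 2t$ so the pair-scan is absorbed) is more careful than the paper's bare assertion.
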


\begin{proof}
Let $T(g,l)$ be the total time complexity. Lines 1-2 of Algorithm~\ref{alg:k-plex} can be done in $O(E(g))$ time for partitioning $V(g)$ and constructing the inverse graph $g_{inv}$. Let $T'(|C|, l')$ be the time complexity to produce all sub-branches excluding the output part (lines 5-10). We have $T'(0, \cdot) = 0$ and $T'(\cdot, 0)= 0$.
\begin{equation}
    T \le O(E(g)+k\cdot c(g,l)) + T'(|V(g)\setminus I|, l)
\end{equation}
According to Eq.~(\ref{eq:kCtPlex}), for each $1\le i\le |C|$, the size of the produced $C_i$ is at most $|C|-i$. Thus, we have the following recurrence. 
\begin{equation}
\begin{aligned}
\label{eq:TTTT}
    T'(|C|, l') &= \sum_{i=1}^{|C|} \Big( T'(|C_i|, l'-1) + O(t) \Big)  \\
    &= T'(|C|-1, l'-1) + O(t) + \sum_{j=0}^{|C|-2} \Big( T'(j, l'-1) + O(t) \Big) \\
    &= T'(|C|-1, l'-1) + T'(|C|-1, l') + O(t)
\end{aligned}
\end{equation}
We note that $O(t)$ is the time to filter out the vertices that are connected with $v_i$ in $g_{inv}$, i.e., $N(v_i, g_{inv})$. 
Let $T^*(|C|, l')=T'(|C|, l')+O(t)$ and apply it to the above equation, 
\begin{equation}
\label{eq:recurrence}
\begin{aligned}
    T^*(|C|,l') = T^*(|C|-1,l'-1) + T^*(|C|-1, l')
\end{aligned}
\end{equation}
with the initial conditions $T^*(\cdot, 0)=O(t)$ and $T^*(0, \cdot)=O(t)$. Observe that Eq.~(\ref{eq:recurrence}) is similar to an identity equation $\tbinom{n}{k}=\tbinom{n-1}{k} + \tbinom{n-1}{k-1}$. Then it is easy to verify that $T^*(|C|,l')=O\big(\tbinom{|C|}{l'}\cdot t\big)$. Therefore, $T'(|C|,l')$ can also be bounded by $O\big(\tbinom{|C|}{l'}\cdot t\big)$. Finally, since $|V(g)\setminus I|\le |V(g)|$, the time complexity of Algorithm~\ref{alg:k-plex} is at most $O(|E(g)|+ t \cdot \tbinom{|V(g)|)}{l}+k\cdot c(g,l))$. 
\end{proof}

\smallskip
\noindent\textbf{Remark.} Recall that in Eq.~(\ref{eq:kCtPlex}), we need to filter out the vertices that are connected with $v_i$ in $g_{inv}$ to create the sub-branch. There are two possible ways to implement this procedure. One way is to use a shared array to store $V(g)\setminus I$, denoted by $C$. Every time we execute line 11 and 12, the pointer in $C$ moves forward and marks those vertices in $N(v_i, g_{inv})$ as invalid, respectively. The benefit is that the procedure can be done in $O(t)$ in each round, as we show in Eq.~(\ref{eq:TTTT}), but will be hard to be parallelized. Another way is to use additional $O(C_i)$ to ensure that each sub-branch has its own copy of $C_i$, which makes the procedure easy to be parallelized. In our experiments, we implement the algorithm in the former way.

\section{Additional Experimental Results}

The experimental results on ablation studies and the effects of the color-based pruning rules (comparison between the algorithms with and without the Rule (2)) on datasets \textsf{ST} and \textsf{OR} are shown in Figure~\ref{fig:variant-large} and Figure~\ref{fig:color-large}, {\chengB respectively}.

\balance
\clearpage
\bibliographystyle{ACM-Reference-Format}
\bibliography{main}



\end{document}